\newcommand{\fcsr}{\ensuremath{f^{\text{CSR}}}}
\newcommand{\flsr}{\ensuremath{f^{\text{LSR}}}}
\newcommand{\fst}{\ensuremath{f^{(s,t)}}}
\newtheorem{observation}{Observation}
\newtheorem{fact}{Fact}
\newcommand{\const}{\textsc{Const}-}
\newcommand{\dest}{\textsc{Dest}-}
\newcommand{\mixed}{\textsc{Const+Dest}-}
\newcommand{\exact}{\textsc{Exact-}\xspace}
\newcommand{\AgentBribery}[1]{#1~\textsc{Agent Bribery}\xspace}
\newcommand{\LinkBribery}[1]{#1~\textsc{Link Bribery}\xspace}
\newcommand{\pAgentBribery}[1]{#1~\textsc{\$Agent Bribery}\xspace}
\newcommand{\pLinkBribery}[1]{#1~\textsc{\$Link Bribery}\xspace}
\newcommand{\pnAgentBribery}[1]{#1~\textsc{(\$)Agent Bribery}\xspace}
\DeclareMathOperator*{\argmin}{arg\,min}
\DeclareMathOperator*{\opin}{in}
\DeclareMathOperator*{\opout}{out}
\DeclareMathOperator*{\cheap}{{\ensuremath{\zeta}}}
\algnewcommand{\IIf}[1]{\State\algorithmicif\ #1\ \algorithmicthen}
\algnewcommand{\EndIIf}{\unskip}
\newtheorem{theorem}{Theorem}
\newtheorem{proposition}{Proposition}
\newtheorem{corollary}{Corollary}
\newtheorem{lemma}{Lemma}
\tikzstyle{vertex}=[draw, circle, fill, inner sep = 2pt]
\tikzstyle{squared-vertex}=[draw, fill, inner sep = 2pt]
\newlength{\RoundedBoxWidth}
\newsavebox{\GrayRoundedBox}
\newenvironment{GrayBox}[1]%
   {\setlength{\RoundedBoxWidth}{.93\columnwidth}
    \def\boxheading{#1}
    \begin{lrbox}{\GrayRoundedBox}
       \begin{minipage}{\RoundedBoxWidth}}%
   {   \end{minipage}
    \end{lrbox}
    \begin{center}
    \begin{tikzpicture}%
       \node(Text)[draw=black!20,fill=white,rounded corners,inner sep=2ex,text width=\RoundedBoxWidth]
             {\usebox{\GrayRoundedBox}};
        \coordinate(x) at (current bounding box.north west);
        \node [draw=white,rectangle,inner sep=3pt,anchor=north west,fill=white]
        at ($(x)+(6pt,.75em)$) {\boxheading};
    \end{tikzpicture}
    \end{center}}
\newenvironment{defproblemx}[1]{\noindent\ignorespaces%
                                \FrameSep=6pt%
                                \parindent=0pt%
                \begin{GrayBox}{#1}%
                \begin{tabular*}{\columnwidth}{!{\extracolsep{\fill}}@{\hspace{.1em}} >{\itshape} p{1.cm} p{0.86\columnwidth} @{}}%
            }{
                \end{tabular*}%
                \end{GrayBox}%
                \ignorespacesafterend
            }
\newcommand{\defProblemTask}[3]{%
  \begin{defproblemx}{#1}
    Input: & #2 \\
    Task: & #3
  \end{defproblemx}
}
\title{Fine-Grained View on Bribery for Group Identification\thanks{A 
preliminary version of this article appeared in the \emph{Proceedings of the 
Twenty-Ninth International Joint Conference on
               Artificial Intelligence (IJCAI~2020)}, ijcai.org, 
			pages~67--73, 
		2020.}}
\author[1]{Niclas Boehmer} 
\author[2]{Robert Bredereck}
\author[3]{Du{\v
		s}an Knop}
\author[4]{Junjie
	Luo}
\affil[1]{\small
	Algorithmics and
	Computational Complexity, TU
	Berlin, Germany, niclas.boehmer@tu-berlin.de}
\affil[2]{\small
	Humboldt-Universität zu Berlin, Germany,
	robert.bredereck@hu-berlin.de}
\affil[3]{\small
	Czech Technical University in Prague, Czech Republic,
	dusan.knop@fit.cvut.cz}
\affil[4]{\small
Nanyang Technological University, Singapore, junjie.luo@ntu.edu.sg}
\date{\today}
\newcommand{\mytodo}[2]{\xspace}
\newcommand{\myrevtodo}[2]{{%
		\let\marginpar\marginnote
		\reversemarginpar
		\renewcommand{\baselinestretch}{0.8}%
		\xspace}}
\newcommand{\myinlinetodo}[2]{\xspace}
\newcommand{\registerAuthor}[3]{%
	\expandafter\newcommand\csname #2com\endcsname[1]{\mytodo{#3}{\textsc{#2}:
	##1}}%
	\expandafter\newcommand\csname
	#2revcom\endcsname[1]{\myrevtodo{#3}{\textsc{#2}: ##1}}%
	\expandafter\newcommand\csname
	#2inline\endcsname[1]{\myinlinetodo{#3}{\textsc{#2}: ##1}}%
	\expandafter\newcommand\csname
	#2inlineLater\endcsname[1]{\lv{\myinlinetodo{#3}{\textsc{#2}: ##1}}}%
}
\begin{document}

\maketitle

\begin{abstract}
  Given a set of agents qualifying or disqualifying each other,
  group identification is the task of identifying a \emph{socially qualified}
  subgroup of agents.
  Social qualification depends on the specific rule used to aggregate individual qualifications.
  The classical bribery problem in this context asks how many agents need to
  change their qualifications in order to change the outcome in a certain way.

  Complementing previous results showing poly\-no\-mial-time solvability or NP-hardness of bribery for various social rules
  in the constructive (aiming at making specific agents socially qualified) or
  destructive (aiming at making specific agents socially disqualified)
  setting, we provide a comprehensive picture of the parameterized computational complexity landscape.
  Conceptually, we also consider a more fine-grained concept of bribery cost,
  where we ask how many single qualifications need to be changed, nonunit
  prices for different bribery actions, and a more
  general bribery goal that combines the constructive and destructive setting.
\end{abstract}

\section{Introduction}

The University of Actual Truth (UAT) was paralyzed for months due to a heavy
dispute of the scientists about who belongs to the group of \emph{true scientists}.
After some literature research on group identification, they asked every
scientist to report who they believe is qualified for being a true scientist.
Based on these individual qualifications, they applied several group identification rules to find
the group of true scientists.
Unfortunately, each rule either decides that nobody is a true scientist
or that all are true scientists.
It is, however, obvious to everyone that the group of true scientists must
be a proper nonempty subset of scientists.

As the next step, the UAT scientists computed their ``degree of truthfulness'' as follows. First of all, every department was invited to submit a proposal specifying  who they believe the true scientists are.
To evaluate and compare every group being proposed by someone, they computed two different quality measures.
For the rules that identified nobody as a true scientist,
they defined the ``truth distance'' as the minimum number of scientists whose qualifications
would need to be changed to make the proposed group part of the true scientists.
For the rules that initially identified everyone as a true scientist,
they defined the ``margin of truth'' as the minimum number
of scientists whose qualifications would need to be changed to make no one from the
proposed group a true scientist.

The above example describes a problem that appears in many situations
where one needs to identify a socially qualified group of agents based only on
the agents' pairwise qualifications.
To solve this task, group identification rules have been
developed~\citep{DBLP:journals/LeA/KasherR1997,DBLP:journals/jet/SametS03}:
Among the most important rules studied in the literature, which are studied in
this paper, are the \emph{consent
	rule} \citep{DBLP:journals/jet/SametS03} and the two iterative rules:
\emph{consensus-start-respecting rule} (CSR,
\citep{DBLP:reference/JCI/Kasher1993}) and \emph{liberal-start-respecting rule}
(LSR, \citep{DBLP:journals/LeA/KasherR1997}). Using the consent rule,
which is parameterized by integers $s$ and $t$, each agent qualifying itself
is socially qualified if and only if at least $s$ agents qualify it,
and each agent disqualifying itself is socially disqualified if and only if at
least $t$ agents disqualify it. In the iterative rules, some criterion is
used to determine an initial set of socially qualified agents, which is then
iteratively extended by adding all agents who are qualified by at least
one agent from the set of already socially qualified agents until convergence
is reached. Under CSR, an agent is initially socially qualified if it is
qualified by all agents, while, under
LSR, all agents qualifying themselves are initially socially qualified.

Despite the simplicity of our example, it illustrates an important aspect of
group identification rules:
Group identification rules provide only a binary decision about membership to
a specific group, while multiple degrees of certainty about membership may be desired.
In extreme cases (as in the example), the identified group might contain indeed
too many or too few agents.
The distances that ``solve'' this issue in our example are concepts known
in the literature, but usually motivated from a different viewpoint:
``truth distance'' corresponds to constructive bribery and
``margin of truth'' corresponds to destructive bribery.
Herein, the classical bribery model assumes an external agent (with full knowledge
over the individual qualifications) that aims to influence
the outcome of the group identification process by convincing a limited number of
agents to change their qualifications in a certain way to achieve some
goal.
While the assumptions behind the classical bribery motivation may be
questionable in the context of group identification,
we emphasize that computing bribery costs as a quality measure is very natural
and useful in practice,
as illustrated in our example.

\subsection{Our Contributions}
In this paper, we provide a more fine-grained view on computing bribery costs
for group identification rules in three ways.
First, we allow to combine \emph{constructive} and \emph{destructive bribery}.
In particular, we allow to specify two sets~$A^+$ and~$A^-$ of agents
that must be (resp., must not be) socially qualified after the bribery action.
This includes as a special case \emph{exact bribery}, where one can specify the
final socially qualified subgroup of agents.
Second, we consider a more fine-grained concept of bribery costs called
\emph{link bribery}, where one counts the number of individual qualifications
that need
to be changed. So far, in the classical model, which we call \textit{agent
	bribery}, the number of agents that alter their qualifications is counted.
Third, we consider priced versions of our bribery problems, i.e., in link
bribery, each qualification has a price of being changed and, in agent bribery,
each agent has a price of being bribed. Notably, the priced and unpriced
versions of all considered problems have the same complexity, i.e., all our
hardness results hold for the unpriced problems, while all algorithmic results
apply to the priced problems.
Fourth, we complement the classical (P vs.\ NP) computational complexity
landscape by providing a comprehensive analysis of the parameterized complexity,
focusing on naturally and well-motivated parameters such as the bribery cost
and the sizes of the sets~$A^+$ and~$A^-$ as well as rule-specific parameters.
We refer to \autoref{t:Sum} and \autoref{fig:fst-c-agent-overview} for an
overview of our results and to
\autoref{se:prelim} for formal
definitions of the rules and parameters. Note that the results for constructive
bribery depicted in \autoref{fig:fst-c-agent-overview} analogously hold for
destructive bribery
with switched roles of $s$ and $t$ (see
\citep{DBLP:journals/aamas/ErdelyiRY20} and \autoref{le:1}).

\begin{table*}[t]

	\centering
	\begin{tabular}{@{}cllllllllllllll@{}}
		\toprule
		&
		\multicolumn{2}{c}{\fcsr/\flsr} &
		\phantom{a}&
		\multicolumn{2}{c}{\fst}
		\\
		\cmidrule{2-3}
		\cmidrule{5-6}
		& Agent & Link && Agent & Link\\
		\midrule
		Const & P ($\dagger$) & NP-c. (Th. \ref{thm:fsr-c-arc}) && NP-c.
		($\dagger$) & P (O. \ref{ob:link-mixd-fst})\\
		&  & FPT wrt. $|A^+|$ (Th. \ref{thm:fsr-c-fpt-arc}) &&  & \\
		&  & W[1]-h. wrt. $\ell$ (Th. \ref{thm:fsr-c-arc}) &&  & \\
		\midrule
		Dest & P ($\dagger$) & P (Th. \ref{thm:flsr-d-arc})&& NP-c.
		$(\dagger)$ & P (O. \ref{ob:link-mixd-fst}) \\
		\midrule
		Const+Dest & P (Th. \ref{thm:fsr-cd-agent}) & NP-c. (Th.
		\ref{thm:fsr-c-arc}) && NP-c. (Ob. \ref{ob:agent-mixed-fst}) & P
		(O. \ref{ob:link-mixd-fst}) \\
		\midrule
		Exact & P (Co. \ref{co:flsr-e-arc}) & P (Th. \ref{thm:flsr-d-arc})
		&& NP-c. (Ob. \ref{ob:agent-mixed-fst}) & P (O.
		\ref{ob:link-mixd-fst}) \\
		\bottomrule
	\end{tabular}
	\caption{Overview of all our complexity results except for our
		parameterized
		complexity results for consent rules (see
		\autoref{fig:fst-c-agent-overview}). Results with a
		$\dagger$
		were proven by
		\protect\citet{DBLP:journals/aamas/ErdelyiRY20}. All
		polynomial-time and fixed-parameter tractability results also
		hold for the priced versions of the considered problems.}
	\label{tab:Sum1}
	\label{t:Sum}
\end{table*}

\subsection{Related Work}
\citet{DBLP:journals/jair/FaliszewskiHH09} introduced bribery problems to the
study of elections by studying the problem of making a given candidate a winner
by changing the preferences of at most a given number of voters (a problem
being closely related to constructive agent bribery in our setting). Since
then,
multiple variants of bribery differing in the
goal and the pricing of a bribery have been proposed (see a survey by
\citet{DBLP:reference/choice/FaliszewskiR16}) and bribery problems have
also been studied in the context of other collective decision problems
\citep{DBLP:conf/aldt/BaumeisterER11,DBLP:conf/sagt/BoehmerBHN20}. For example,
\citet{DBLP:journals/jair/FaliszewskiHHR09}       introduced
\textit{microbribery}, where the manipulator pays per flip in the preference
profile of the given election. Microbribery is conceptually closely related to
link bribery in the context of our problem. Furthermore, while
\citet{DBLP:conf/aldt/BaumeisterER11} already considered a variant of exact
bribery in the context of judgment aggregation, we are not aware of any
applications of the combined setting of constructive and destructive bribery
that we propose in this paper.

Despite different initial motivations, bribery in elections is closely related
to the concept \textit{margin of victory}, where the goal is to measure the
robustness of the outcome of an election or the ``distance'' of a candidate
from winning the election
\citep{DBLP:conf/uss/Cary11,DBLP:conf/uss/MagrinoRS11}.
While both concepts have been mostly studied separately, some authors have
developed a unified framework
\citep{DBLP:conf/sigecom/Xia12,DBLP:conf/atal/FaliszewskiST17,DBLP:journals/corr/abs-2010-09678}.

Initially, the group identification problem has been mainly studied from a
social choice perspective by an axiomatic analysis of the problem and some
social rules (see, e.g., the works of
\citet{DBLP:series/sfsc/Dimitrov11},
\citet{DBLP:journals/LeA/KasherR1997}, and \citet{DBLP:journals/jet/SametS03}).
Possible applications of the group identification problem range
from the identification of a collective identity
\citep{DBLP:journals/LeA/KasherR1997} to the endowment of rights
with social implications \citep{DBLP:journals/jet/SametS03}.

Recently, \citet{DBLP:journals/aamas/YangD18} and
\citet{DBLP:journals/aamas/ErdelyiRY20} initiated the study of manipulation by
an external agent both in the context of bribery and control in a group
identification problem for the three mentioned
social rules.  \citet{DBLP:journals/aamas/YangD18} considered the complexity of
agent deletion, insertion, and partition for constructive control, while
\citet{DBLP:journals/aamas/ErdelyiRY20} extended their studies to destructive
control. Moreover, \citet{DBLP:journals/aamas/ErdelyiRY20} analyzed the
complexity of constructive agent bribery and destructive agent bribery. They
proved that for both destructive and constructive bribery, for
CSR and LSR, the related computational problems are polynomial-time solvable.
Moreover, for constructive bribery, they proved that the computational problems
for consent rules with~$t=1$ are also polynomial-time solvable. On the other
hand, for $t\geq2$ and $s\geq 1$, constructive bribery is already NP-complete.
Finally, \citet{DBLP:journals/aamas/ErdelyiRY20} established a close
relationship between constructive and destructive bribery for consent rules by
proving that every
constructive bribery problem  can be converted
into a destructive bribery problem  by switching $s$ and $t$ and flipping
all qualifications.

Independently to this paper (and in parallel to our conference version),
\citet{DBLP:conf/atal/Erdelyi020} studied partly overlapping questions.
In particular, they investigated the computational complexity of constructive
and exact bribery for all three social rules considered in this paper assuming
that the briber pays per modified qualification, which is equivalent to our link
bribery cost model.
In essence, the results of \citet{DBLP:conf/atal/Erdelyi020} overlap with ours in
the unpriced version of the second part of \autoref{thm:flsr-d-arc}, the
NP-hardness from \autoref{thm:fsr-c-arc} and the unpriced version of
\autoref{ob:link-mixd-fst}.
In addition, \citet{DBLP:conf/atal/Erdelyi020} considered how the computational
complexity of these problems changes if one requires that every agent qualifies
exactly $r$ agents before and after the bribery.

The group identification problem is formally related to the multiwinner voting
\citep{DBLP:reference/trends/FaliszewskiS17}. However,
multiwinner voting is of a different flavor both in terms of intended
applications and studied rules. Nevertheless, formally, group
identification is equivalent to approval-based multiwinner voting with a
variable number of winners where the set of voters and candidates coincide.
While (approval-based) multiwinner voting with a variable number of winners has
been studied by, for example, \citet{DBLP:journals/scw/DuddyPZ16},
\citet{DBLP:journals/td/Kilgour2016},
\citet{DBLP:journals/corr/abs-2005-07094}
and
\citet{DBLP:journals/corr/abs-1711-06641}, this specific setting has never been
studied
from a voting perspective. Moreover, so far, the work on bribery for
(approval-based) multiwinner elections is limited to the setting where the
number of winners is fixed
\citep{DBLP:conf/aaai/BredereckFNT16,DBLP:conf/atal/FaliszewskiST17}. Notably,
similar
to our motivation, \citet{DBLP:conf/atal/FaliszewskiST17} also studied how to
measure the margin of victory in approval-based multiwinner elections through the
lens of bribery.

\subsection{Organization}
In \autoref{se:prelim}, we formally define the different computational
questions we examine and provide background on parameterized complexity
theory and some graph algorithms that we later use in our algorithms.
Subsequently, in \autoref{se:itrules}, we present our results for iterative
rules dealing with agent bribery first (Subsection \ref{sub:itab}) and
afterwards
turning to link bribery (Subsection \ref{sub:itlb}). In
\autoref{se:cons},
we consider the consent rule. We start by observing that there exists a simple
algorithm for (priced) link bribery for all our goals and subsequently conduct
a
detailed analysis of constructive link bribery in Subsection \ref{sub:itclb}
before
we explain how the results of this analysis extend to the other goals in
Subsection
\ref{sub:itdlb}. We conclude in \autoref{se:conc} with a summary
of our results and
multiple pointers for possibilities for future work.

\usetikzlibrary{shapes}

\begin{figure*}
\begin{center}
\begin{tikzpicture}
\tikzstyle{para} = [inner ysep=4pt, rectangle, rounded corners, minimum width=2cm, minimum height=0.8cm, text centered, draw=black, thick]

\tikzstyle{para2} = [inner ysep=4pt, rectangle, rectangle split, rectangle split parts=2, rounded corners, minimum width=3cm, minimum height=0.8cm, text centered, draw=black]

\tikzstyle{split-box} = [
rectangle, rounded corners, rectangle split, rectangle split horizontal, rectangle split parts=2, rectangle split part fill={yellow!50,cyan!50}, minimum width=2 cm, minimum height=0.8 cm, text centered, draw=black, thick]

\def\th{0.7cm};   
\def\bl{0.5};

\node (stl) {$s+t+\ell$};

\node (st) [below = 1.5 of stl] {$s+t$};
\node (sl) [below left = 1.2 and 3.5  of stl] {$s+\ell$};
\node (tl) [below right = 1.2 and 3.5 of stl] {$t+\ell$};

\node (s) [below = 2.5 of sl] {$s$};
\node (l) [below = 1.5  of st] {$\ell$};
\node (t) [below = 2.5 of tl] {$t$};

\node (stl-box) [split-box,below=1pt of stl] {\nodepart[text width =1.5cm]{one} XP \nodepart[text width =3.04cm]{two} W[1]-hard (Th. \ref{thm:fst-c-agent-t=1-s})};

\node (st-box) [para, below = 1pt of st, fill=red!50] {Para-NP-hard ($\dagger$)};

\node (sl-box) [para, below = \bl of sl, text width = 4.8cm, text height = \th, fill=cyan!50, thick] {
W[1]-hard even if $t=1$ (Th. \ref{thm:fst-c-agent-t=1-s})};
\node [split-box,below=1pt of sl,thick] {\nodepart[text width =1.55cm]{one} XP \nodepart[text width =3.04cm]{two} W[2]-hard (Th. \ref{thm:fst-c-agent-l})};

\node (tl-box) [para, below = \bl of tl, text width = 4.8cm, text height = \th, fill=green!50, thick] {FPT if $s$ is a constant (Th. \ref{th:fst-lt-fpt})};
\node [split-box,below=1pt of tl,thick] {\nodepart[text width =1.55cm]{one} XP \nodepart[text width =3.04cm]{two} W[1]-hard (Th. \ref{thm:fst-c-agent-t=1-s})};

\node [para, below = 1pt of s, fill=red!50] {Para-NP-hard ($\dagger$)};
\node [para, below = 1pt of t, fill=red!50] {Para-NP-hard ($\dagger$)};

\node [para2, below = \bl of l, text width = 5cm, text height = \th, rectangle split part fill={cyan!50,green!50}, thick] {
 W[1]-hard even if $t=1$ (Th. \ref{thm:fst-c-agent-t=1-s}) \\ W[2]-hard even if $s=1$ (Th. \ref{thm:fst-c-agent-l})
 \nodepart{second}
 FPT if $s,t$ are constants (Th. \ref{th:fst-lt-fpt})
};
\node [split-box,below=1pt of l,thick] {\nodepart[text width =1.55cm]{one} XP \nodepart[text width =3.24cm]{two} W[2]-hard (Th. \ref{thm:fst-c-agent-l})};

\draw (s) -> (sl-box);
\draw (l) -> (sl-box);
\draw (s) -> (st-box);
\draw (t) -> (st-box);
\draw (l) -> (tl-box);
\draw (t) -> (tl-box);
\draw (sl) -> (stl-box);
\draw (st) -> (stl-box);
\draw (tl) -> (stl-box);

\end{tikzpicture}
\end{center}
\caption{Parameterized analysis of \const \AgentBribery{\fst}.
Upper and lower bounds on the complexity of this problem with respect to
each
parameter are shown in the first line, followed by some special cases. The FPT
result from \autoref{th:fst-lt-fpt} also holds for \const \pAgentBribery{\fst}.
The ``XP'' results for all parameter combinations containing $\ell$ are trivial
and also hold for the priced version.
Results with a $\dagger$
     were proven by \protect\citet{DBLP:journals/aamas/ErdelyiRY20}.
Additionally, in \autoref{thm:fst-c-agent-sizeAplus}, we prove that \const
\pAgentBribery{\fst} is FPT wrt. $|A^+|$.}
\label{fig:fst-c-agent-overview}
\end{figure*}

\section{Preliminaries}
\label{se:prelim}
In this section, we start by formally introducing the group identification
problem and all considered social rules (Subsection \ref{sub:GI}).
Subsequently, we
formally define all bribery goals and bribery cost models we analyze in this
paper (Subsection \ref{sub:BV}).
Finally, in Subsection \ref{sub:GTPC}, we introduce some graph theoretical
notations, describe some common
graph problems and the respective algorithms we use in our paper, and provide
a brief introduction to parameterized complexity analysis.

\subsection{Group Identification}\label{sub:GI}
Given a set of agents~$A=\{a_1,\dots,a_n\}$ and a so-called \textit{qualification profile}~\mbox{$\varphi \colon A \times A \to \{-1,1\}$}, the group identification problem asks to return a subset of \textit{socially qualified} agents using some \textit{social rule}~$f$. We write $f(A,\varphi)$ to denote the set of agents that are socially qualified in the group identification problem $(A,\varphi)$ according to $f$.
All agents which are not socially qualified are called \textit{socially
disqualified}. For two agents~$a,a'\in A$, we say that~$a$
\textit{qualifies}~$a'$ if~$\varphi(a,a')=1$; otherwise, we say that~$a$
\textit{disqualifies}~$a'$. For each agent~$a\in A$,
let~$Q_\varphi^+(a)=\{a'\in A\mid \varphi(a',a)=1\}$ denote the set of agents
qualifying~$a$ and~$Q_\varphi^-(a)=\{a'\in A\mid \varphi(a',a)=-1\}$ the set of
agents disqualifying~$a$ in $\varphi$. We omit the subscript $\varphi$ if it
is clear from context. Let $A^*=\{a\in A\mid \forall a'\in A:
\varphi(a',a)=1\}$ be the set of agents who are qualified by everyone including
themselves. For
every group identification problem,~$A$ and~$\varphi$ induce a so-called
(directed)
\textit{qualification graph}~$G_{A,\varphi}=(A,E)$ with~$(a,a')\in E$ if and
only if~$\varphi(a,a')=1$. For two agents~$a$ and $a'$, we say that there
exists a
path from~$a$ to~$a'$ in~$(A,\varphi)$ if there exists a path from~$a$ to~$a'$
in~$G_{A,\varphi}$.

Now, we define social rules considered in this paper:
For the \emph{liberal-start-respecting rule} (\flsr), we start with the
set~$K_1 = \left\{ a \in A \mid \varphi(a,a)=1  \right\}$ and compute the set
of socially qualified agents iteratively for~$i = 2, \ldots$ using
\begin{equation} \label{eq::rule}
 K_i = \left\{ a \in A \mid  \exists a' \in K_{i - 1} : \varphi(a',a)
 =1 \right\}
 \,.
\end{equation}
Notice that we always have~$K_{i - 1} \subseteq K_{i}$.
We stop the process when~$K_{i - 1} = K_{i}$ and output~$K_{i}$.

For the \emph{consensus-start-respecting rule} (\fcsr), we start with the set
$K_1=A^*$
and for $i = 2, \ldots$ we use Equation~\eqref{eq::rule} to compute iteratively
the set of socially qualified agents. Note that it is also possible to compute
the set of socially qualified agents under~\flsr~and~\fcsr~as the set of agents
that correspond to vertices in the qualification graph that are reachable from
vertices with a self-loop and vertices with incoming arcs from all vertices,
respectively.

The \emph{consent rule} (\fst) with parameters~$s$ and~$t$ with~$s+t\leq n+2$
determines the set of socially qualified agents as follows: If~$\varphi(a,a)=1$
for an agent~$a \in A$, then~$a$ is socially qualified if and only if
$|Q^+(a)| \ge s \,$.
If~$\varphi(a,a)= -1$ for an agent~$a \in A$, then $a$ is socially disqualified
if and only if~$|Q^-(a)| \ge t \,.$ Note that the constraint~$s+t\leq n+2$ is
chosen in a
way that an agent who qualifies itself but is socially
disqualified cannot become socially qualified by simply disqualifying
itself.

\subsection{Bribery Variants and Costs} \label{sub:BV}
In the most general form of bribery, which we call
\textsc{Constructive+De\-struc\-tive} (\textsc{Const.+Dest.}) bribery, we are
given a set $A$ of agents, a qualification profile $\varphi$, and a social 
rule~$f$
together with two groups of agents~$A^+$ and~$A^-$ and a budget~$\ell$.
The task is then to alter the qualification profile $\varphi$ such that in the
altered profile $\varphi'$ all agents in~$A^+$ are socially qualified,
i.e.,~$A^+\subseteq f(A,\varphi')$, and all agents in~$A^-$ are socially
disqualified, i.e., $A^-\subseteq A\setminus f(A,\varphi')$. The cost of the
bribery (computed as specified below) is not allowed to exceed~$\ell$.

We also consider the following three special cases of \mixed \textsc{Bribery}:
\begin{description}
	\item[\textsc{Constructive} (\textsc{Const}.)]
	Given a set~$A^+ \subseteq A$ of agents, find a bribery such that $A^+\subseteq f(A,\varphi')$.
	\item[\textsc{Destructive} (\textsc{Dest}.)]
	Given a set~$A^- \subseteq A$ of agents, find a bribery such that $A^-\subseteq A\setminus f(A,\varphi')$.
	\item[\textsc{Exact}]
	Given a set~$A^+ \subseteq A$ of agents, find a bribery such that $A^+=f(A,\varphi')$.
\end{description}

We now specify the cost of a bribery.
As the most important and natural special case, we consider \emph{unit prices}.
In \textsc{Agent Bribery} (with unit prices) the cost of a bribery
is equal to the number of agents whose opinions are modified. Consequently, we
ask whether it is possible to achieve the specified goal by altering the
preferences of at most~$\ell$ agents, where the briber is allowed to change the
preferences of each agent in an arbitrary way.
In \textsc{Link Bribery} (with unit prices), the cost of
a bribery is equal to the number of single qualifications changed.
Therefore, we ask whether it is possible to achieve the specified goal
by altering at most~$\ell$ qualifications, that is, flipping at most~$\ell$
entries in~$\varphi$.

More generally, we consider \emph{priced variants} (denoted by a ``\$'' as prefix in the name)
of our problems where we are additionally given a price function~$\rho$.
For \textsc{\$Agent Bribery}, $\rho$ assigns to each agent~$a$
a positive integer price~$\rho(a)$. For a subset of agents $A'\subseteq A$, we
write $\rho(A')$ to denote $\sum_{a\in A'} \rho(a)$.
For \textsc{\$Link Bribery}, the price function $\rho$ assigns to each ordered
agent pair~$(a,a')\in A\times A$
a positive integer price $\rho((a,a'))$.
The costs of a bribery for \textsc{\$Agent Bribery} is the sum of prices
assigned to the bribed agents~$A' \subseteq A$, i.e., $\sum_{a \in A'}
\rho(a)$.
The costs of a bribery for~\textsc{\$Link Bribery} is the sum of prices
assigned to the modified qualifications~$M' \subseteq A \times A$,
i.e., $\sum_{(a,a') \in M'} \rho((a,a'))$.
We always assume that the maximum cost of an action, that is, the maximum value
of~$\rho$
is bounded by a polynomial in the number of agents.
Note that \textsc{Agent Bribery} (resp., \textsc{Link Bribery}) is equivalent to
\textsc{\$Agent Bribery} (resp.~\textsc{\$Link Bribery})
with~$\rho$ assigning price~$1$ to every agent (resp., to every agent pair).

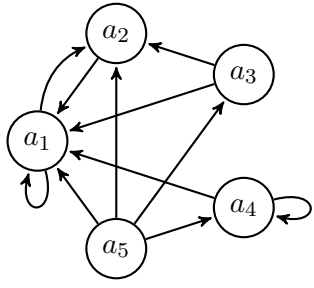
\begin{figure}
	\centering
	\begin{tikzpicture}[->,>=stealth',scale=0.6,shorten
	>=1pt,auto,node
	distance=4cm,
	thick,main node/.style={circle,draw,font=\bfseries,minimum
		size=7mm}]

	\node[main node] (A1)  at ({-360/5*(1-1)}:-2.5)               {$a_1$};
	\node[main node] (A2)  at ({-360/5*(2-1)}:-2.5)                 {$a_2$};
	\node[main node] (A3)  at ({-360/5*(3-1)}:-2.5)
	{$a_3$};
	\node[main node] (A4)  at ({-360/5*(4-1)}:-2.5)
	{$a_4$};
	\node[main node] (A5)  at ({-360/5*(5-1)}:-2.5)
	{$a_5$};
	\path
	(A1) edge [loop below] node {} (A1)
	edge [bend left] node {} (A2)
	(A2) edge node {} (A1)
	(A3) edge node {} (A1)
	edge node {} (A2)
	(A4) edge node {} (A1)
	edge [loop right] node {} (A4)
	(A5) edge node {} (A1)
	edge node {} (A2)
	edge node {} (A3)
	edge node {} (A4);
	\end{tikzpicture} \caption{Qualification graph of example instance from
	Subsection
	\ref{sub:BV}.}\label{fi:qg}
\end{figure}
\paragraph{Example.} Let us consider an instance of the group identification
problem consisting of five
agents~$A=\{a_1,a_2,a_3,a_4,a_5\}$ and the following qualification profile
$\varphi$ as an example.
\[ \varphi=
\begin{blockarray}{cccccc}
& a_1 & a_2 & a_3 & a_4 & a_5
\\
\begin{block}{c(ccccc)}
a_1 &1 & 1 & 0 & 0 & 0  \\
a_2 &1 & 0 & 0 & 0 & 0 \\
a_3 &1 & 1 & 0 & 0 & 0  \\
a_4 &1 & 0 & 0 & 1 & 0  \\
a_5 &1 & 1 & 1 & 1 & 0  \\
\end{block}
\end{blockarray}
\]

We also include the qualification graph of this instance in \autoref{fi:qg}.
For the liberal-start-respecting rule \flsr, we start with the set of agents
$K_1=\{a_1,a_4\}$ that qualify themselves and have~$K_2=K_3=\{a_1,a_2,a_4\}$.
Thus, it holds that
$\flsr(A,\varphi)=\{a_1,a_2,a_4\}$. For the consensus-start-respecting
rule \fcsr, we start with the set of agents $K_1=\{a_1\}$ that are qualified by
everyone and have~$K_2=K_3=\{a_1,a_2\}$. Thus, it holds that
$\fcsr(A,\varphi)=\{a_1,a_2\}$. For the consent rule \fst, the set of socially
qualified agents depends on the chosen parameters $s$ and $t$. Let $s=t=3$,
i.e., an agent needs two other agents agreeing on its opinion about itself
to be classified according to its opinion. Thus, it holds that
$\fst(A,\varphi)=\{a_1,a_2\}$.

To exemplify the bribery problems we study in this paper, we present some
example questions and solutions for the above instance. Let us start with the
liberal-start-respecting rule \flsr and \textsc{Link Bribery}. Assuming that
we are in the
\textsc{Constructive} setting and the goal is to make all agents socially
qualified ($A^+=A$), there
exist four optimal solutions, i.e., bribe one of $a_1$, $a_2$, $a_4$, and
$a_5$ to
qualify $a_5$.
Assuming that we are in the \textsc{Constructive}+\textsc{Destructive} setting
and the goal is to make $a_5$
socially qualified and $a_3$ socially disqualified ($A^+=\{a_5\}$ and
$A^-=\{a_3\}$), one optimal solution is to bribe $a_5$ to qualify itself and
to disqualify $a_3$.
We now turn to the consent rule \fst with $s=t=3$ and \textsc{Agent Bribery}.
Assuming
that we are in the \textsc{Destructive} setting and the goal is to make
everyone disqualified ($A^-=A$), one optimal solution is to bribe $a_1$,
$a_2$, and $a_5$ to disqualify everyone. Finally, assuming that we are in the
\textsc{Exact} setting and the goal is to make~$a_1$ and $a_4$ socially
qualified and everyone else socially disqualified ($A^+=\{a_1,a_4\}$), the
optimal solution is to either bribe $a_1$ or $a_3$ to disqualify $a_2$ and
qualify $a_4$.

\subsection{Graph Theory, Graph Algorithms, and Parameterized Complexity}
\label{sub:GTPC}

A graph is a pair $(V,E)$ of \emph{vertices} and \emph{edges}; if the graph is directed, then $E$ is the set of \emph{arcs}.
A \emph{walk} from $s$ to $t$ in a (directed) graph is a sequence $v_0 e_1 v_1
\ldots e_p v_p$ with $s = v_0$ and $t = v_p$, where $e_i = (v_{i-1},v_i)$ for
all $i = 1, \ldots, p$.
A path is a walk where each vertex occurs at most once.
Let $G = (V,E)$ be a graph, $F \subseteq E$ a set of edges, and $U \subseteq V$ a set of vertices.
We write $G \setminus F$ to denote the graph $(V, E \setminus F)$ and
$G-U$ to denote the graph with vertex set $V \setminus U$ whose edge set is the set of
edges $e \in E$ with both endvertices in $V \setminus U$.
We use standard graph notation; see, e.g., the monograph
of~\cite{matousek-nesetril}. We now define several graph problems that we use
in the construction of different algorithms each accompanied by a statement
concerning the running time for solving the respective problem.

\defProblemTask{\textsc{Minimum Weighted Cut}}
{A directed graph~$G=(V,E)$ with arc weights~$w \colon V \times V \rightarrow \mathbb{N}$ and two distinct vertices $\sigma$ and $\tau$.}
{Find a set of arcs $F \subseteq E$ minimizing $\sum_{e \in F} w(e)$ such that there is no path from $\sigma$ to $\tau$ in $G \setminus F$.}

\begin{fact}[{\citep{Orl13}}]\label{prop:solving:minimumWeightCut}
  \textsc{Minimum Weighted Cut} can be solved in \mbox{$O(|V| \cdot |E|)$} time.
\end{fact}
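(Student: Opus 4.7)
The plan is to derive the bound from the max-flow min-cut theorem together with a known strongly polynomial max-flow algorithm. First, I would set up a flow network on the same vertex set $V$ with source $\sigma$, sink $\tau$, and capacity $w(e)$ on each arc $e\in E$. By the max-flow min-cut theorem, the minimum total weight of an arc set $F\subseteq E$ whose removal destroys every $\sigma$-to-$\tau$ path equals the value of a maximum $\sigma$-$\tau$ flow $f^*$ in this network. A concrete optimal $F$ can then be extracted from $f^*$ in the standard way: let $S\subseteq V$ be the set of vertices reachable from $\sigma$ in the residual graph of $f^*$, and take $F=\{(u,v)\in E\mid u\in S,\ v\notin S\}$. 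This postprocessing runs in $O(|V|+|E|)$, so it does not affect the claimed running time.

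It then remains to compute a maximum flow within the promised budget, and this is exactly what Orlin's algorithm~\citep{Orl13} does: on a network with $|V|$ vertices and $|E|$ arcs it computes a maximum flow in $O(|V|\cdot|E|)$ time. Since throughout this paper the price function $\rho$ (and hence $w$) is assumed to be polynomially bounded in $|V|$, Orlin's bound applies without any modification. The main obstacle in the underlying argument, namely designing a max-flow algorithm matching $O(|V|\cdot|E|)$, is entirely encapsulated in Orlin's paper; my own task is limited to verifying that the duality-based reduction above is correct and preserves the running time, both of which are immediate. Consequently, the fact follows by a single invocation of Orlin's algorithm followed by a linear-time residual reachability computation.
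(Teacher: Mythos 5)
Your proposal is correct and matches the paper's treatment: the paper states this as a Fact justified solely by citing Orlin's $O(|V|\cdot|E|)$ max-flow algorithm, and your reduction from minimum weighted cut to maximum flow via max-flow/min-cut duality plus a linear-time residual-reachability extraction is exactly the standard argument that citation encapsulates. (One small remark: Orlin's algorithm is strongly polynomial, so the appeal to the polynomial bound on the weights is not actually needed.)
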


\defProblemTask{\textsc{Minimum Weighted Separator}}
{A directed graph~$G=(V,E)$ with vertex weights~$w \colon V \rightarrow \mathbb{N}$ and two distinct vertices $\sigma$ and $\tau$.}
{Find a set of vertices $U \subseteq V \setminus \{ \sigma, \tau\}$ minimizing $\sum_{u \in U} w(u)$ such that there is no path from $\sigma$ to $\tau$ in $G - U$.}

One can solve \textsc{Minimum Weighted Separator} using \textsc{Minimum
Weighted Cut} due to, e.g., the following folklore construction: Given a graph
$G=(V,E)$, to construct the corresponding flow network, every
vertex~$x\in V$ is replaced by two vertices~$x_{\opin}$ and~$x_{\opout}$ such
that every ingoing arc to~$x$
is replaced by an ingoing arc to~$x_{\opin}$ and every outgoing arc from~$x$ is replaced by an
outgoing arc from~$x_{\opout}$.
Furthermore, one adds an arc with weight~$w(x)$ from~$x_{\opin}$
to~$x_{\opout}$ and sets
the weight of all other arcs to~$\infty$.
It is easy to verify that now a minimum weighted cut corresponds to an minimum weighted $(\sigma, \tau)$-separator
of the original graph.

\begin{fact}[{\citep{Orl13}}]\label{prop:solving:minimumWeightSeparator}
  \textsc{Minimum Weighted Separator} can be solved in \mbox{$O(|V| \cdot |E|)$} time.
\end{fact}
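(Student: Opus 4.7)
The plan is to formalize the folklore reduction already sketched in the paper and then invoke \autoref{prop:solving:minimumWeightCut}. Given an instance $(G,w,\sigma,\tau)$ of \textsc{Minimum Weighted Separator} with $G=(V,E)$, I construct a directed graph $G' = (V', E')$ where $V' = \{x_{\opin}, x_{\opout} : x \in V\}$; for every $x\in V$ I add an \emph{internal} arc $(x_{\opin}, x_{\opout})$ of weight $w(x)$, and for every $(x,y)\in E$ I add an \emph{external} arc $(x_{\opout}, y_{\opin})$ of weight $W := 1 + \sum_{v\in V} w(v)$ (a stand-in for $\infty$ that keeps weights polynomially bounded). I then solve \textsc{Minimum Weighted Cut} on $G'$ with source $\sigma_{\opout}$ and sink $\tau_{\opin}$ and return the set $U$ of vertices $u\in V\setminus\{\sigma,\tau\}$ whose internal arc is included in the returned cut.

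For correctness I argue both directions. Given any $(\sigma,\tau)$-separator $U\subseteq V\setminus\{\sigma,\tau\}$ of weight $\sum_{u\in U} w(u)$, the arc set $F = \{(u_{\opin}, u_{\opout}) : u\in U\}$ is a $(\sigma_{\opout},\tau_{\opin})$-cut of the same weight in $G'$: any walk from $\sigma_{\opout}$ to $\tau_{\opin}$ alternates external and internal arcs and projects to a $\sigma$-to-$\tau$ walk in $G$, which must visit some $u\in U$ and hence traverse $(u_{\opin}, u_{\opout})\in F$. Conversely, any $(\sigma_{\opout},\tau_{\opin})$-cut of weight at most $W$ can use only internal arcs (a single external arc already contributes $W$), so it has the form $\{(u_{\opin}, u_{\opout}) : u\in U\}$ and the corresponding $U\subseteq V\setminus\{\sigma,\tau\}$ separates $\sigma$ from $\tau$ in $G$ by the same walk-projection argument. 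Hence the minimum weighted cut in $G'$ has weight exactly equal to the minimum weighted separator weight in $G$.

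For the running time, observe that $G'$ has $|V'| = 2|V|$ vertices and $|E'| = |V| + |E|$ arcs and can be built in $O(|V| + |E|)$ time, while $W$ is polynomial in the input size as required by \autoref{prop:solving:minimumWeightCut}. We may assume without loss of generality that $G$ has no isolated vertices apart from possibly $\sigma$ or $\tau$ (separation of isolated $\sigma$ or $\tau$ is trivial, and any other isolated vertex can be discarded), giving $|E| \geq \lfloor|V|/2\rfloor - 1$. Applying \autoref{prop:solving:minimumWeightCut} to $G'$ therefore costs $O(|V'|\cdot|E'|) = O(|V|(|V|+|E|)) = O(|V|\cdot|E|)$, as claimed. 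There is no real obstacle here; the only point to be careful about is the choice of $W$ and the observation that a finite-weight cut must avoid all external arcs, which is what makes the bijection between cuts in $G'$ and separators in $G$ valid.
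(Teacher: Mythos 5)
Your proof is correct and follows exactly the folklore vertex-splitting reduction that the paper itself sketches immediately before stating this fact (the paper offers no further proof beyond that sketch plus the citation to Orlin's max-flow/min-cut algorithm). The only cosmetic slip is that a cut of weight \emph{exactly} $W$ could consist of a single external arc, so your second direction should read ``weight less than $W$''; this is harmless since, whenever a separator exists at all, the minimum cut has weight at most $W-1$.
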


\defProblemTask{\textsc{Minimum Weighted Spanning Arborescence}}
{A directed graph~$G=(V,E)$ with arc weights~$w \colon V \times V \rightarrow \mathbb{N}$ and a root~$r$.}
{Find a set of arcs~$F \subseteq E$ of the minimum weight such that in the graph $(V,F)$ every vertex is reachable by a unique directed path from the root~$r$.}

\begin{fact}[{\citep{Edm67,Chu65,GGST86}}]\label{prop:solving:minimumWeightSpanningArborescence}
  \textsc{Minimum Weighted Spanning Arborescence} can be solved in \mbox{$O(|E| + |V| \log |V|)$} time.
\end{fact}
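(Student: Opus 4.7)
The plan is to invoke the classical Chu--Liu/Edmonds algorithm and its efficient implementation. The correctness part rests on a simple structural observation: in any arborescence rooted at $r$, every non-root vertex has exactly one incoming arc. So I would begin by having each non-root vertex $v$ select the minimum-weight arc entering it. If the resulting arc set induces no directed cycle, it is immediately a minimum weight spanning arborescence, because the choices are pointwise minimal and the total weight decomposes as a sum over vertices.

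The interesting case is when cycles appear among the greedy choices. The key lemma, which I would prove first, states that for any such greedy cycle $C$, some minimum spanning arborescence contains all but exactly one arc of $C$. This justifies the contraction step: replace $C$ by a single super-vertex, subtract from the weight of every arc entering a vertex $v \in C$ the weight of the current greedy arc at $v$ (so that the algorithm already ``commits'' to paying for the cheapest incoming arc at each vertex of $C$ but is willing to swap one out), and recurse on the smaller graph. Unwinding the contractions in reverse order then yields a spanning arborescence in the original graph, and an induction on the number of vertices establishes global optimality.

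The main obstacle lies in achieving the claimed bound $O(|E| + |V|\log|V|)$ rather than the straightforward $O(|V|\cdot|E|)$. A naive implementation of the contraction scheme repeatedly scans all arcs incident to each contracted cycle. To sharpen this, I would appeal to the work of Gabow, Galil, Spencer, and Tarjan, who maintain the set of incoming arcs at each (possibly contracted) vertex in a mergeable priority queue realized by Fibonacci heaps and track the contractions lazily via a disjoint-set data structure. The weight shifts required by contraction are implemented as a single additive offset stored with each heap rather than by updating every arc, so that each arc contributes only $O(1)$ amortized work and each vertex contributes $O(\log|V|)$, yielding the stated running time. I would not reproduce the data-structure analysis in full but would cite the three works \citep{Edm67,Chu65,GGST86} listed in the Fact for the correctness of the contraction scheme and for its efficient implementation.
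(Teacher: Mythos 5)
Your proposal is correct and matches the paper's treatment: the paper states this as a known Fact and simply cites Chu--Liu, Edmonds, and Gabow--Galil--Spencer--Tarjan, which is exactly the greedy-choice/cycle-contraction algorithm and the Fibonacci-heap implementation you outline. Your sketch of the contraction lemma and the lazy weight-offset trick is accurate, and deferring the full data-structure analysis to the cited works is precisely what the paper does.
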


\defProblemTask{\textsc{Weighted Directer Steiner Tree} (WDST)}
{A directed graph~$G=(V,E)$ with arc weights~$w \colon V \times V \rightarrow \mathbb{N}$, a set of terminals~$T \subseteq V$, and a root vertex~$s$.}
{Find a subset of arcs of~$G$ minimizing the total weight such that there is a directed path from the root~$s$ to every terminal~$t \in T$.}

\begin{fact}[{\citep{Ned09}}]\label{prop:solving:weightedDirectedSteinerTree}
  There is an algorithm solving \textsc{Weighted Directer Steiner Tree} whose running-time depends exponentially on $|T|$ (in fact, by \mbox{$2^{|T|}$}) and polynomially on the size of the input and the maximum arc weight.
\end{fact}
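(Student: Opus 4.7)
The plan is to prove this by the classical Dreyfus--Wagner-style dynamic programming for Steiner tree, adapted to the directed rooted setting, and then to note that the naive base $3^{|T|}$ arising from the split step can be reduced to $2^{|T|}$ by either a min-sum subset convolution or, following the inclusion--exclusion over branching walks in \citep{Ned09}, by a direct algebraic count. Since the excerpt needs only $2^{|T|}\cdot\mathrm{poly}$, I would present the Dreyfus--Wagner recurrence in full and then briefly indicate which ingredient is swapped in to achieve the sharper exponential factor.

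\textbf{DP table and base cases.} For each non-empty $S\subseteq T$ and each $v\in V$, let $D(S,v)$ be the minimum total arc-weight of a subgraph of $G$ in which every terminal in $S$ is reachable from $v$ by a directed path. As a preprocessing step I would compute all-pairs shortest-path distances $\mathrm{dist}_G(\cdot,\cdot)$ in $G$; this costs time polynomial in $|V|$, $|E|$, and the maximum arc weight. The base cases $D(\{t\},v)=\mathrm{dist}_G(v,t)$ for $t\in T$, $v\in V$, are then immediate, and the quantity one wants is $D(T,s)$.

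\textbf{Recurrence.} For $|S|\ge 2$ I would use
\[
D(S,v)=\min_{u\in V}\ \min_{\emptyset\subsetneq S'\subsetneq S}\ \bigl[\mathrm{dist}_G(v,u)+D(S',u)+D(S\setminus S',u)\bigr],
\]
which captures the structure of any optimal directed Steiner tree rooted at $v$: follow a (possibly trivial) initial path from $v$ to some branching vertex $u$, and from $u$ send out two sub-arborescences whose terminal sets partition $S$. Evaluating this entry-by-entry with $S$ in order of increasing cardinality gives a straightforward $3^{|T|}\cdot|V|^{2}\cdot\mathrm{poly}$ algorithm; replacing the inner split by a min-sum subset convolution over the $2^{|T|}$ subsets (or, equivalently, applying the branching-walk inclusion--exclusion identity from \citep{Ned09}) trims the exponential factor to $2^{|T|}$ while keeping all other factors polynomial in $|V|$, $|E|$ and the maximum arc weight.

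\textbf{Main obstacle.} The only subtle point is the correctness of the recurrence in the directed setting: one must argue that some optimal solution admits a ``path-then-split'' decomposition and, in particular, that one may restrict attention to subgraphs that are out-arborescences rooted at $v$. I would handle this with a standard exchange argument: take any optimal subgraph witnessing $D(S,v)$, delete arcs that are not used to reach a terminal to obtain an arborescence of no larger weight, and then read off the decomposition from the first vertex of out-degree at least $2$ (falling back to the unique deepest terminal if no such vertex exists). Once this exchange is in place, induction on $|S|$ closes the correctness argument and the running-time bound follows directly from the table size and the cost of the convolution step.
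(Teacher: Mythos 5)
The paper does not actually prove this statement: it is imported as a black box from \citep{Ned09}, so there is no in-paper proof to compare against. Your reconstruction is a correct, standard alternative route --- the directed Dreyfus--Wagner dynamic program accelerated by fast min-sum subset convolution (the Bj\"orklund--Husfeldt--Kaski--Koivisto technique). This does yield $2^{|T|}\cdot\mathrm{poly}(|V|,|E|,M)$, because embedding the $(\min,+)$ semiring into a sum-product ring costs a factor of the largest attainable table value, which is at most $|V|^{2}M$ for maximum arc weight $M$ --- exactly the pseudo-polynomial dependence the Fact records. By contrast, Nederlof's cited proof counts branching walks from the root via M\"obius inversion / inclusion--exclusion over subsets of terminals, obtaining the same $2^{|T|}\cdot\mathrm{poly}\cdot M$ time bound but in polynomial space, whereas your table-based route needs $2^{|T|}\cdot\mathrm{poly}$ space. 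Since \autoref{thm:fsr-c-fpt-arc} only uses the time bound, either argument suffices for the paper's purposes. You should also make explicit that the convolution must be layered by terminal-set cardinality (both parts of the split are proper nonempty subsets, so the ranked zeta-transform version applies); this is routine but is where the self-convolution could otherwise look circular.

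One genuine repair is needed in your exchange argument. Taking $u$ to be ``the first vertex of out-degree at least $2$'' is not sufficient: a terminal of $S$ may lie on the root path strictly before the first branching vertex, and such a terminal is not reachable from that $u$, so no partition $S'\cup(S\setminus S')$ of $S$ can be certified from $u$ and the claimed decomposition fails for that choice. The standard fix is to let $u$ be the first vertex on the root path that is either a branching vertex or a terminal of $S$; in the latter case take $S'=\{u\}$ with $D(\{u\},u)=0$, which strictly decreases $|S|$ and keeps the induction intact (the case where the root itself lies in $S$ is handled identically). With that adjustment, and the easy direction that unions of witnesses are witnesses of at most the summed weight, your correctness proof closes.
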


\paragraph{Parameterized Complexity.}

To provide a fine-grained computational complexity analysis of our problems,
we use tools from parameterized complexity~\citep{CFKLMPPS15,DF13,FG06,Nie06}.
Herein, one identifies a \emph{parameter}~$k$ (a positive integer)
and takes a closer look at the computational complexity of the problem with
respect to the parameter $k$ and input size.
A problem parameterized by~$k$ is called \emph{fixed-parameter tractable}
(in FPT) if it can be solved in $f(k) \cdot |I|^{O(1)}$ time, where $|I|$
is the size of a given instance, $k$ is the parameter value, and~$f$ is a
computable (typically super-polynomial) function.
In order to disprove fixed-parameter tractability, we use a well-known
complexity hierarchy of classes of parameterized problems:
$$\text{FPT} \subseteq \text{W[1]} \subseteq \text{W[2]} \subseteq \cdots \subseteq \text{XP}.$$
All these inclusions are widely believed to be proper.
Hardness for parameterized classes are defined through parameterized reductions
which are similar to classical polynomial-time many-one reductions.
In this paper, it suffices to use polynomial-time many-one reduction which
additionally ensure that the value of the parameter
in the problem we reduce to depends only on the value of the parameter of the
problem we reduce from.
To look for tractable cases beyond single parameters, we also consider
\emph{parameter combination}s.
For example, when referring to the combined parameter $k' + k''+k'''$,
we implicitly define a parameter $k = k' + k'' + k'''$.
Naturally, hardness for a combined parameter implies hardness for each single
parameter, while fixed-parameter tractability for a single parameter implies
fixed-parameter tractability for each parameter combination that involves
the respective parameter.

\section{Iterative Rules} \label{se:itrules}

We start by considering agent bribery, before we examine link bribery in the
subsequent subsection.
\subsection{Agent Bribery} \label{sub:itab}
For agent bribery, the already known positive results for constructive bribery
and destructive bribery \citep{DBLP:journals/aamas/ErdelyiRY20} extend to
constructive+destructive bribery and even to its priced version.
The algorithm for the general problem, however, is significantly more involved
than the known
ones, as both ``positive'' and ``negative'' constraints need to be simultaneously taken into
account.
The general idea of the algorithm is to bribe the agents that form a minimum
(weighted)
separator between the
agents in $A^+$ and the agents in $A^-$ in the (slightly altered)
qualification graph such that they qualify themselves and all agents in
$A^+$.

Before we start with formally proving this result,
we make an observation about
the start of the social qualification process of our iterative rules that allows
to simplify our algorithms significantly.
Note that for both iterative rules there are two reasons for an agent~$a$
to be socially qualified:
The first reason is that the agent~$a$ becomes socially qualified in the first phase,
i.e.,\ $a \in K_1$, because $a$~is either qualified by all other agents or
qualifies itself.
The second reason is that the agent~$a$ is qualified by one other agent~$a'$ that
was already qualified in a previous iteration.
For agent bribery, it is clear that one may have to use the first option for some instances
(if $K_1=\emptyset$ and $A^+\neq\emptyset$).
The second option, however, is intuitively much cheaper so that using the first option
should be necessary at most once.
That this is indeed the case is not immediately obvious, so that we show it formally
in the following observation.

\begin{observation}
 \label{obs:bribe1initial}
 For both \fcsr~and \flsr, if there is a successful agent bribery that makes
two agents~$a^*_1$ and~$a^*_2$ being initially socially qualified
 ($\{a^*_1,a^*_2\} \subseteq K_1$ after the bribery but $\{a^*_1,a^*_2\} \cap K_1 = \emptyset$ before the bribery),
 then there is also a successful bribery of the same cost that makes just
one agent from~$\{a^*_1,a^*_2\}$
 initially socially qualified ($|\{a^*_1,a^*_2\} \cap K_1|=1$ after bribery).
\end{observation}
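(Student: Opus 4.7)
The plan is to build, from any successful agent bribery $B$ with $\{a^*_1,a^*_2\}\subseteq K_1$ after $B$, a new successful bribery $B'$ of cost at most that of $B$ in which only $a^*_1$ lies in $K_1'$ while $a^*_2$ enters $K_2'$ via an edge from $a^*_1$. I would treat the two iterative rules separately because they differ in which bribery modifications are forced on~$B$ in order to put $a^*_1,a^*_2$ into~$K_1$.

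For \flsr, being in $K_1$ requires a self-loop, so since neither $a^*_1$ nor $a^*_2$ lies in the original $K_1$, both must be bribed in~$B$ in order to flip their self-qualifications. I would define $B'$ to bribe the very same agents, with the very same row modifications, except that (i)~I revert $\varphi_{B'}(a^*_2,a^*_2)=-1$, removing $a^*_2$ from $K_1'$, and (ii)~I set $\varphi_{B'}(a^*_1,a^*_2)=1$, installing an edge from the surviving $a^*_1\in K_1'$ to $a^*_2$. Both edited entries sit in rows of agents that are still bribed in~$B'$, so the bribery cost does not change.

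For \fcsr, being in $K_1$ requires every agent to qualify the target, so $a^*_2\notin K_1$ originally forces the existence of some $a^{\dagger}$ that originally disqualifies $a^*_2$, and $B$ must bribe $a^{\dagger}$ to qualify $a^*_2$. I would define $B'$ to coincide with $B$ everywhere except that $\varphi_{B'}(a^{\dagger},a^*_2)=-1$ is reverted to the original value. This either leaves $a^{\dagger}$ bribed (if $a^{\dagger}$'s row still differs from the original elsewhere) or unbribes $a^{\dagger}$, so the cost does not increase. The only effect on the qualification graph is removing the edge $(a^{\dagger},a^*_2)$, so $a^*_2\notin K_1'$; crucially, the edge $(a^*_1,a^*_2)$ survives because $a^*_2\in K_1$ in~$B$ forces $a^*_1$ to qualify $a^*_2$ in $\varphi_B$, and we did not touch that entry.

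The main technical step, and where I expect bookkeeping to be most delicate, is verifying that the final socially qualified set is preserved, i.e., $S_{B'}=S_B$. For $S_{B'}\supseteq S_B$ I would prove $K_{i+1}'\supseteq K_i$ by induction on~$i$: the base case $K_2'\supseteq K_1$ holds because $K_1'=K_1\setminus\{a^*_2\}$ and $a^*_1\in K_1'$ qualifies $a^*_2$ in $\varphi_{B'}$, so $a^*_2\in K_2'$; the inductive step uses that the only removed edge targets $a^*_2$, which is already recovered in $K_2'$, so every transition used by~$B$ after iteration~$1$ is still available in~$B'$. Conversely, $S_{B'}\subseteq S_B$ follows from monotonicity, since $K_1'\subseteq K_1$ and $G_{B'}$ is contained in $G_B$ apart from the edge $(a^*_1,a^*_2)$, which is already present under \fcsr and, under \flsr, can only route to $a^*_2\in S_B$. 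Combining the two inclusions gives $S_{B'}=S_B$, and hence $B'$ inherits successfulness from~$B$ at no extra cost.
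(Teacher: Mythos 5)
Your \flsr{} argument is correct and is essentially the paper's: since neither agent originally self-qualifies, both must be bribed, so you may additionally make $a^*_2$ disqualify itself and make $a^*_1$ qualify $a^*_2$ at no extra agent cost. The gap is in your \fcsr{} case. You pick an agent $a^\dagger$ that originally disqualifies $a^*_2$, revert $\varphi(a^\dagger,a^*_2)$ to $-1$, and then assert that the edge $(a^*_1,a^*_2)$ survives because ``we did not touch that entry.'' But nothing rules out $a^\dagger=a^*_1$: if $a^*_1$ is the \emph{only} agent that originally disqualifies $a^*_2$, you are forced to revert exactly the entry $(a^*_1,a^*_2)$, destroying the edge on which your base case $a^*_2\in K_2'$ relies. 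Concretely, take $A=\{a^*_1,a^*_2,b\}$ where originally $a^*_1$ qualifies only itself, $a^*_2$ qualifies itself and $a^*_1$, and $b$ qualifies only $a^*_2$; the bribery makes $a^*_1$ qualify $a^*_2$ and makes $b$ qualify $a^*_1$, giving $K_1=\{a^*_1,a^*_2\}$. The unique original disqualifier of $a^*_2$ is $a^*_1$, and your reversion returns $a^*_1$'s row to the original, so $K_1'=\{a^*_1\}$ and $a^*_1$ qualifies nobody else; the iteration stalls and $a^*_2$ ends up socially disqualified even though it lies in $A^+$.

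The paper closes exactly this hole with a case distinction you are missing: if one of $a^*_1,a^*_2$ is itself bribed (which is guaranteed in the scenario above, since $a^*_1$ must flip its disqualification of $a^*_2$), demote \emph{that} agent by making it disqualify itself --- it then re-enters in the second iteration because the other agent, being in $K_1$ after the bribery, is qualified by everyone and in turn qualifies it. Only when neither $a^*_1$ nor $a^*_2$ is bribed does one keep a third, already-bribed agent disqualifying $a^*_2$; in that situation every original disqualifier of $a^*_2$ is bribed and distinct from $a^*_1$ and $a^*_2$, so a reversion of your kind is safe. Your bookkeeping showing $S_{B'}=S_B$ is fine once the construction is repaired (and the minor deviation that your $B'$ may cost strictly less than $B$ is harmless for how the observation is used), but as written the \fcsr{} construction fails on the case $a^\dagger=a^*_1$.
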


\begin{proof}
 To obtain the bribery that makes just one agent from~$\{a^*_1,a^*_2\}$ initially socially qualified, we do the following.
 For \flsr, since $\{a^*_1,a^*_2\} \subseteq K_1$ after the bribery but 
 $\{a^*_1,a^*_2\} \cap K_1 = \emptyset$ before the bribery, we must bribe both 
 $a^*_1$ and $a^*_2$ in the original bribery.
 To make just $a^*_1$ initially socially qualified, we can bribe the same as 
 before 
 with the following adjustments: We make  
 $a^*_1$ to qualify $a^*_2$ and we make $a^*_2$ to disqualify itself.
 It is easy to see that the new bribery bribes the same set of agents and after 
 the new bribery $a^*_2 \in K_2 \setminus K_1$, as $a^*_2$ does not qualify 
 itself but it is qualified by $a^*_1$.
 
 For \fcsr, if one of the two agents, say without loss of generality~$a^*_2$, 
 is bribed in the original bribery, then bribe exactly in the same way as 
 before but~$a^*_2$ is now not qualifying itself anymore.
 Otherwise, select an arbitrary bribed agent (which is by assumption different 
 from $a^*_1$ and $a^*_2$) that did not originally qualify~$a^*_2$ and keep it 
 disqualifying~$a^*_2$.
 In both cases, we bribe the same set of agents as in the original bribery.
 While with the new bribery agent~$a^*_2$ does not become initially socially 
 qualified ($a^*_2 \notin K_1$), $a^*_2$ is still
 qualified by agent~$a^*_1$ and thus becomes socially qualified in the second 
 iteration ($a^*_2 \in K_2$).
 It is easy to verify that the final outcome with respect to the agent's social
qualification also remains the same for all other agents (though, the 
qualification process might be different).
\end{proof}

By the above observation, it is never necessary to make more than one agent initially socially qualified,
which we will use in our algorithms.

\begin{theorem}
	\label{thm:fsr-cd-agent}
	\mixed \pAgentBribery{\flsr} is solvable in time~$O(n^3)$, while
	\mixed \pAgentBribery{\fcsr} is solvable in time~$O(n^4)$.
\end{theorem}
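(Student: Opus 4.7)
The plan is to reduce \mixed \pAgentBribery for both iterative rules to \textsc{Minimum Weighted Separator} (Fact~\ref{prop:solving:minimumWeightSeparator}). The key conceptual idea is that in an optimal bribery the bribed agents simultaneously play two roles: they form a vertex separator that disconnects the ``sources'' $K_1 \cup A^+$ from the ``targets'' $A^-$ in the qualification graph (guaranteeing the destructive part), and at least one of them acts as a bootstrap that qualifies itself and all agents in $A^+$ (guaranteeing the constructive part). Observation~\ref{obs:bribe1initial} tells us that a single bootstrap is always sufficient.

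For \flsr, I would build a single auxiliary directed graph $H$ whose vertex set consists of the agents $A$ together with a super-source $\sigma$ and a super-sink $\tau$. The arcs of $G_{A,\varphi}$ are copied into $H$; $\sigma$ is joined by infinite-weight arcs to each $a\in K_1 \cup A^+$; and each $a\in A^-$ is joined to $\tau$ by an infinite-weight arc. Vertex weights are set to $\rho(a)$. The main correctness claim is that a minimum weighted $(\sigma,\tau)$-separator $B$ can be turned into an optimal bribery by having every bribed agent qualify itself and all of $A^+$ while disqualifying everyone else. The separator property then yields the destructive goal, and any vertex in $B\setminus A^-$ serves as the bootstrap needed for the constructive goal. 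Degenerate cases (when $B\subseteq A^-$ or $B=\emptyset$) are handled by an $O(n)$ post-processing step that, if $A^+$ is not already reachable from $K_1$, additionally bribes a single cheapest agent outside $A^-$; this creates no new path to $A^-$ because the separator already disconnects $A^+$ from $A^-$. Fact~\ref{prop:solving:minimumWeightSeparator} then gives the $O(n^3)$ bound.

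For \fcsr, the extra difficulty is that $K_1=A^*$ itself depends on the bribery: an agent $v$ lies in $A^*$ after bribery iff $Q^-_\varphi(v)\subseteq B$, because bribed agents can be freely made to qualify~$v$. Observation~\ref{obs:bribe1initial} still caps at one the number of agents \emph{newly added} to the initial set, so the plan is to enumerate the $n+1$ candidates for this bootstrap agent $a^*$ (either some $a^*\in A\setminus A^*$, which forces $Q^-_\varphi(a^*)$ into the bribery, or no such $a^*$). For each candidate I would build an auxiliary graph analogous to the \flsr{} construction (with $\sigma$ additionally connected to $a^*$ and the mandatory cost of bribing $Q^-_\varphi(a^*)$ accounted for) and invoke Fact~\ref{prop:solving:minimumWeightSeparator}. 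The $O(n)$ enumeration on top of an $O(n^3)$ separator call yields the claimed $O(n^4)$ bound.

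The hardest part will be pinning down the reduction so that bribing an agent $v$ corresponds exactly to placing $v$ in the vertex separator. Since bribery only modifies the \emph{outgoing} qualifications of $v$ (and not the incoming ones), I would rely on the split-vertex reduction described before Fact~\ref{prop:solving:minimumWeightSeparator}, taking care that $\sigma$- and $\tau$-arcs attach to the correct copies of each split vertex so that ``cutting $v$'' blocks outgoing propagation from $v$ without erroneously blocking incoming arcs to $v$. The second delicate point is merging the constructive and destructive goals into a single separator instance: this is where Observation~\ref{obs:bribe1initial} is essential, as it restricts attention to briberies that introduce at most one new initial agent and thus allows a single separator call (respectively, $n$ calls for \fcsr) to handle both goals simultaneously.
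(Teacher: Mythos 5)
Your high-level plan coincides with the paper's: compute a minimum weighted vertex separator between the initially qualified agents together with $A^+$ on one side and $A^-$ on the other, let the bribed separator agents double as qualifiers of $A^+$, invoke Observation~\ref{obs:bribe1initial} to cap the number of newly initial agents at one, and guess the bootstrap agent $a^*$ for \fcsr. There are, however, two places where the construction as you describe it would fail. In your \flsr{} instance the agents of $A^-$ remain ordinary finite-weight vertices, so the minimum separator may ``cut'' at some $a^-\in A^-$ that lies on a path from $\sigma$ to another target. This does not correspond to any legal bribery: bribing $a^-$ only changes its \emph{outgoing} qualifications, so if some socially qualified agent still qualifies $a^-$, then $a^-$ is socially qualified and the destructive constraint is violated; your algorithm could thus return a separator that is cheaper than every valid bribery. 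Your hedge about attaching the $\sigma$- and $\tau$-arcs to the right copies in the split-vertex gadget cannot repair this, because deleting the vertex $a^-$ would have to mean ``remove only $a^-$'s self-loop while keeping $a^-$ reachable through its other in-arcs,'' which a vertex deletion cannot express. The paper instead contracts all of $A^-$ into $\tau$ (so every path must be cut strictly before it reaches $A^-$) and handles the self-loops of agents in $A^-$ by a mandatory preprocessing bribe of $L=\{a\in A^-\mid \varphi(a,a)=1\}$.

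The second gap is in the constructive half for \fcsr, which you essentially inherit from the \flsr{} case. Under \flsr{} a bribed agent can qualify itself and thereby enter $K_1$, so the separator agents automatically become socially qualified and their qualifications of $A^+$ take effect. Under \fcsr{} a bribed agent cannot make itself initially qualified, so the separator agents need not be socially qualified at all, in which case bribing them does nothing for $A^+$. The paper therefore adds a further step --- if some agent of $A^+$ is still unqualified after the separator bribery, additionally bribe a cheapest agent $\widetilde a$ that \emph{is} socially qualified at that point --- and the optimality of the combined cost $\rho(A')+\rho(\widetilde a)$ requires a separate argument (in this case no agent of any minimal separator is socially qualified, so the separation cost and the qualification cost are independent lower bounds). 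This step and its justification are missing from your proposal and are not a routine detail; without them the claimed optimality of the output for \fcsr{} does not follow.
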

\begin{proof}
  We first provide the algorithms for the two rules and then, since both are based on finding a minimum weighted separator, we analyze their running time.

	 \medskip\noindent\textbf{LSR:} Let us focus on~\flsr~first.
	Let~$L=\{a \in A^- \mid \varphi(a,a)=1\}$ be the set of all agents in~$A^-$ who qualify themselves.
	We first bribe all agents~$a \in L$ such that they disqualify everyone.
	To determine which further agents we want to bribe, we try to find a minimum weighted separator between the vertices $A^+$ and~$A^-$
	in an auxiliary graph based on the qualification graph.
	More precisely, we add a source vertex~$\sigma$ to the qualification graph
	and connect~$\sigma$ to all vertices in~$A^{+}$ and vertices with a self-loop.
	Moreover, we merge all vertices in $A^-$ into one sink vertex $\tau$ (see
	\autoref{fig:fsr-cd-agent} for a visualization).
	We assign the weight~$\rho(a)$ to each vertex~$a\in A\setminus A^{-}$.
	We call the resulting graph~$G$.
	Subsequently, we calculate a minimum weighted $(\sigma, \tau)$-separator
	$A^{'}$ in $G$:

	   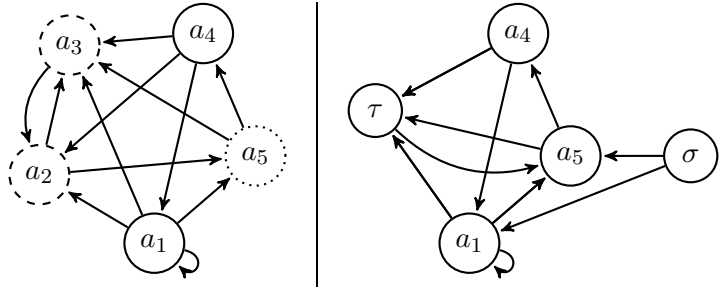
\begin{figure}[bt]
		\begin{center}
			\begin{tikzpicture}[rotate=95,->,>=stealth',scale=0.6,shorten
			>=1pt,auto,node
			distance=4cm,
			thick,main node/.style={circle,draw,font=\bfseries,minimum
			size=7mm}]
			\node[main node] (A)  at
			({-360/5*(1-1)}:-2.5) {$a_1$};
			\node[main node, dashed] (B)  at ({-360/5*(2-1)}:-2.5) {$a_2$};
			\node[main node, dashed] (C)  at
			({-360/5*(3-1)}:-2.5)
			{$a_3$};
			\node[main node] (D)  at ({-360/5*(4-1)}:-2.5)
			{$a_4$};
			\node[main node, dotted] (E)  at
			({-360/5*(5-1)}:-2.5)
			{$a_5$};
			\path
			(A) edge [loop below] node {} (A)
			edge node {} (B)
			edge node {} (C)
			edge node {} (E)
			(B) edge node {} (C)
			edge node {} (E)
			(C) edge [bend right] node {} (B)
			(D) edge node {} (A)
			edge node {} (B)
			edge node {} (C)
			(E) edge node {} (C)
			edge node {} (D);
			\end{tikzpicture}
			\hspace{0.3cm}\vline\hspace{0.3cm}
			\begin{tikzpicture}[rotate=95,->,>=stealth',scale=0.6,shorten
			>=1pt,auto,node
			distance=4cm,
			thick,main node/.style={circle,draw,font=\bfseries,minimum
			size=7mm}]
			\node[main node] (A)  at
			({-360/5*(1-1)}:-2.5) {$a_1$};
			\node[main node] (S)  at (0.6,1.8) {$\tau$};
			\node[main node] (D)  at ({-360/5*(4-1)}:-2.5)
			{$a_4$};
			\node[main node] (E)  at ({-360/5*(5-1)}:-2.5)
			{$a_5$};
			\node[main node] (SS)  at (-1,-5) {$\sigma$};
			\path
			(A) edge [loop below] node {} (A)
			edge node {} (S)
			edge node {} (S)
			edge node {} (E)
			edge node {} (E);
			\path (S) [bend right] edge node {} (E);
			\path (D) edge node {} (A)
			edge node {} (S)
			edge node {} (S)
			(E) edge node {} (S)
			edge node {} (D)
			(SS) edge node {} (A)
			edge node {} (E);
			\end{tikzpicture}
		\end{center}
		\caption{Visualization of the transformation of a qualification graph
			where agents from $A^+$ are drawn as dotted circles and agents from
			$A^-$
			as dashed circles from
			\autoref{thm:fsr-cd-agent} for \flsr.
		}
		\label{fig:fsr-cd-agent}
	\end{figure}

	If~$A^{'}=\emptyset$ and all agents~$a \in A^+$ are already socially qualified, we are done. If~$A^{'}=\emptyset$ and this is not the case, we bribe an agent $a\in A \setminus A^-$ of the minimum weight and make it qualify itself and all agents in~$A^+$ and disqualify all other agents.
	Now, all agents in~$A^+$ are socially qualified, and since~$A^{'}=\emptyset$, all agents in~$A^-$ are socially disqualified.

	If~$A^{'} \not = \emptyset$, as the existence of a $(\sigma,\tau)$-path in 
	the constructed graph implies that some agent in~$A^-$ will be socially 
	qualified as soon as we make all agents in~$A^+$ socially qualified, we 
	need to bribe agents of total weight  at least $\sum_{a\in A^{'}}\rho(a)$ 
	to 
	destroy all $(\sigma,\tau)$-paths.

	In fact, we bribe all agents in $A^{'}$ such that they qualify themselves and all agents in $A^{+}$ and disqualify all other agents.
	Let~$G'$ be the graph which is obtained from $G$ by deleting all arcs
	corresponding to the qualifications that got deleted and adding arcs
	corresponding to the qualifications that got inserted. By construction, there
	is no~$(\sigma, \tau )$-path in $G'$. Moreover, in the modified 
qualification
	profile, all agents in $A^{+}$ are socially qualified, as they are
	qualified by at least one agent $a\in A'$ qualifying itself.
	We claim that no agent in $A^{-}$ can be socially qualified.
	Assume that there exists an agent $a^{-}\in A^{-}$ that is socially qualified.
	Then, there exists a path from some agent~$a^*$ which qualifies itself to $a^-$ in the altered instance.
	If~$a^* \not \in A^{'}$, then such a path implies that there exists an $(\sigma, \tau )$-path in~$G'$ which cannot be the case.
	If~$a^* \in A^{'}$, since~$a^*$ only qualifies agents from~$A^{+}$ and itself, there is also a path from some agent in~$A^{+}$ to $a^-$ in the altered instance.
	This also implies that there needs to exist an $(\sigma, \tau )$-path
	in~$G'$ which again cannot be the case.

	\medskip\noindent\textbf{CSR:} For \fcsr, first assume
that~$A^{+}=\emptyset$.
  If no agent from $A^-$ is socially qualified, then we are done.
  Otherwise, we pick an arbitrary minimum-weight agent and make it disqualify everyone.

	In the following, we assume that~$A^{+} \not = \emptyset$.
 	If~$A^*=\emptyset$, since~$A^{+} \not = \emptyset$, there has to be at
 	least one agent who is qualified by everyone after the bribery.
 	We guess an agent $a^*$ that should be qualified by everyone in the end
 	(by \autoref{obs:bribe1initial} one such agent will be enough).
 	For each guess~$a^*$, we bribe all agents that do not already qualify $a^*$
 	such that they qualify all agents in~$A^+ \cup \{a^*\}$ and disqualify all
 	other agents.
 	We still denote the set of agents who are qualified by everyone after the
 	above bribery as~$A^*$.

 	Then similar to the algorithm for \flsr, we insert a source vertex
 	$\sigma$ in the	underlying qualification graph and connect it to all
 	vertices in $A^+ \cup A^*$.
 	Moreover, we contract all vertices from $A^-$ into a sink vertex~$\tau$.
 	As for \flsr, we assign the weight~$\rho(a)$ to each vertex~$a\in A\setminus A^{-}$.
 	We remove self-loops and call the resulting auxiliary graph~$G$.
 	Subsequently, we calculate a minimum weighted $(\sigma, \tau)$-separator $A'$
 	in $G$ and bribe all
 	agents in $A'$ to qualify all agents in $A^+ \cup A^*$ and to disqualify
 	all other agents. Subsequently, if not all agents in $A^+$ are socially
qualified, we additionally bribe a cheapest agent~$\widetilde{a}$ that is
socially qualified after the already described bribery and let it qualify
all agents in $A^+ \cup A^*$ and disqualify
 	all other agents (note that such an agent needs to exist, as $a^*$ is
socially qualified).

    First of all, note that the described bribery is always successful.
  	In the case where no agent~$\widetilde{a}$ in addition to the agents from
  	$A'$ is bribed,
  	the presented bribery is also obviously optimal, as it is always necessary
  	to
  	bribe agents at a cost at least~$\rho(A')$ to socially disqualify all agents
    from~$A^-$ (assuming that $a^*$ was guessed correctly). Therefore it remains to
consider the case where an agent $\widetilde{a}$ in addition to the agents $A'$
was bribed. Then, no agent from the separator $A'$ is socially
qualified before the bribery of $\widetilde{a}$. This in turn implies that the
	$(\sigma, \tau)$-separator $A'$ touches no path
	between a vertex from~$A^*$ and~$\tau$, because there never was such a path.
	From the absence of a path between $A^*$ and $\tau$ it follows that no
agent that is part of a minimal (and also not minimum weighted) $(\sigma,
\tau)$-separator can be socially qualified before the bribery of
$\widetilde{a}$. Thus, independent of which minimal separator we choose
and how we bribe the agents in the separator, the set of agents being socially
qualified after the bribery of these agents remains the
same. Thus, we need at least cost $\rho(A')$ to separate the agents from $A^+$
and $A^-$ in this case and at least cost $\rho(\widetilde{a})$ to make all
agents from $A^+$ socially qualified. This minimum cost is exactly the cost of
the described bribery.

 	Finally, to bound the running time, observe that each auxiliary graph~$G$
 	has $O(|A|)$~nodes and $O(|A|^2)$~arcs.
 	Finding a minimum weighted $(\sigma, \tau)$-separator can thus be done in
 	$O(|A|^3)$ time by Fact~\ref{prop:solving:minimumWeightSeparator}.
 	For \fcsr, we get an additional factor~$|A|$, since we may have to guess 
 	the initially qualified agent~$a^*$.
\end{proof}

From this it immediately follows that (priced) agent bribery is also
polynomial-time
solvable for all other considered variants including the previously
completely unstudied
case of exact bribery.
\begin{corollary} \label{co:flsr-e-arc}
	\exact/\const/\dest \pAgentBribery{\fcsr/\flsr} is solvable in polynomial time.
\end{corollary}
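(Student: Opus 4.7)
The plan is to derive each of the three cases as an immediate special case of \mixed \pAgentBribery{\fcsr/\flsr}, which has already been shown to be polynomial-time solvable by Theorem \ref{thm:fsr-cd-agent}. In each case, I would perform a trivial polynomial-time reduction that preserves the agent set $A$, the profile $\varphi$, the price function $\rho$, and the budget $\ell$, and only adjusts the target sets $A^+$ and $A^-$.

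For \const \pAgentBribery, I would call the algorithm of Theorem \ref{thm:fsr-cd-agent} on the same input augmented by $A^- := \emptyset$; the feasibility condition $A^+ \subseteq f(A,\varphi')$ and $A^- \subseteq A \setminus f(A,\varphi')$ then degenerates to the constructive requirement. The \dest case is handled symmetrically by setting $A^+ := \emptyset$. In both cases, any bribery is feasible for the original instance if and only if it is feasible for the constructed mixed instance, and its cost is identical, so an optimum cost bribery is preserved.

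For \exact \pAgentBribery, given a desired outcome set $A^+ \subseteq A$, I would invoke Theorem \ref{thm:fsr-cd-agent} with the same $A^+$ and with $A^- := A \setminus A^+$. Since $A^+$ and $A^-$ then partition~$A$, the two mixed requirements $A^+ \subseteq f(A,\varphi')$ and $A\setminus A^+ \subseteq A \setminus f(A,\varphi')$ together force $f(A,\varphi') = A^+$, which is precisely the exact-bribery condition; conversely any exact bribery trivially satisfies both mixed constraints.

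The only point worth double-checking is that the \exact reduction really captures set equality rather than a single inclusion, which follows immediately from the partition property above; beyond this there is no substantive obstacle. The reductions themselves are computable in linear time, and Theorem \ref{thm:fsr-cd-agent} gives running times $O(n^3)$ for \flsr and $O(n^4)$ for \fcsr, so each of the three problems inherits the same polynomial-time bound, yielding the corollary.
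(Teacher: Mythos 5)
Your proposal is correct and matches the paper's intended argument: the paper treats \textsc{Constructive}, \textsc{Destructive}, and \textsc{Exact} bribery as special cases of \mixed bribery (with $A^-=\emptyset$, $A^+=\emptyset$, and $A^-=A\setminus A^+$, respectively) and derives the corollary immediately from \autoref{thm:fsr-cd-agent}. Your explicit verification that the partition $A^+\cup(A\setminus A^+)$ forces the set equality $f(A,\varphi')=A^+$ in the exact case is exactly the (unstated) detail the paper relies on.
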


\subsection{Link Bribery} \label{sub:itlb}
We now turn to the setting of link bribery and settle the complexity of the
related decision problem for all bribery goals and both iterative rules. We
start by proving that the problem is polynomial-time solvable for
(priced) destructive
bribery and (priced) exact bribery. For destructive bribery, similar to
Theorem~\ref{thm:fsr-cd-agent}, we separate the agents that are initially
socially qualified from the agents in $A^-$ in the qualification graph.
However, here, as we pay per changed qualification, we need to calculate a
minimum weighted cut. For exact bribery, we need to separate the agents from
$A^-$ and $A^+$ while making sure that all agents from $A^+$ are socially
qualified in the end.
\begin{theorem}
	\label{thm:flsr-d-arc}
	\dest \pLinkBribery{\flsr} can be solved in~$O(n^3)$ time,
	\dest \pLinkBribery{\fcsr} can be solved in~$O(n^4)$ time,
	\exact \pLinkBribery{\flsr} can be solved in~$O(n^2)$ time, and
	\exact \pLinkBribery{\fcsr} can be solved in~$O(n^3)$ time.
\end{theorem}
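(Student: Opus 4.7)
I would reduce each of the four statements to one or at most $O(n)$ invocations of a graph subroutine from Subsection~\ref{sub:GTPC}. For \dest \pLinkBribery{\flsr}, since adding a qualification only enlarges $\flsr$, an optimal bribery flips only $1 \to -1$; the task reduces to destroying every directed path from a self-looped vertex to $A^-$ in the qualification graph. I build an auxiliary digraph on $A \cup \{\sigma,\tau\}$: a source $\sigma$ has an arc $(\sigma,a)$ of weight $\rho((a,a))$ for every $a$ with $\varphi(a,a)=1$, $A^-$ is merged into a sink $\tau$, and every other arc $(u,v)$ with $\varphi(u,v)=1$ is retained with weight $\rho((u,v))$. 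Self-loops at vertices outside $A^-$ lie on no $(\sigma,\tau)$-walk and cause no double-counting. A minimum weighted $(\sigma,\tau)$-cut (Fact~\ref{prop:solving:minimumWeightCut}) in $O(n^3)$ time yields the optimal flips. For \dest \pLinkBribery{\fcsr}, mirroring Theorem~\ref{thm:fsr-cd-agent}, I would iterate over $O(n)$ candidates $a^* \in A$ that remain universally qualified after the bribery, plus the option $A^*_{new}=\emptyset$ (costing $\sum_{a' \in A^*} \min_x \rho((x,a'))$). For each guess $a^*$ I pay $\sum_{x \in Q^-(a^*)} \rho((x,a^*))$ to install $a^* \in A^*_{new}$, then solve a minimum weighted cut with $\sigma \to a^*$ of infinite weight, $\tau$ the merged $A^-$, and other arcs weighted by flip costs. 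Returning the cheapest configuration costs $O(n) \cdot O(n^3) = O(n^4)$.

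For the exact cases, the condition $f(A,\varphi')=A^+$ decomposes into three independent requirements: no agent outside $A^+$ is initially socially qualified, no arc leaves $A^+$, and every $a \in A^+$ is reachable inside $A^+$ from the initial set. The first two contribute sums of flip prices computable in $O(n^2)$. The third I cast as a minimum weighted spanning arborescence (Fact~\ref{prop:solving:minimumWeightSpanningArborescence}) on $A^+ \cup \{r\}$: connect a virtual root $r$ to each $a \in A^+$ with weight equal to the cost of making $a$ an initial vertex (zero if already so; otherwise $\rho((a,a))$ for \flsr, or $\sum_{x \in Q^-(a)} \rho((x,a))$ for \fcsr), and weight every ordered intra-$A^+$ pair $u \ne v$ by $0$ if $\varphi(u,v)=1$ and $\rho((u,v))$ otherwise. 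For \flsr this directly yields the claimed $O(n^2)$ bound. For \fcsr, by a link-bribery analogue of Observation~\ref{obs:bribe1initial} that I would first verify, a single agent $a^* \in A^+$ is guessed as the unique newly installed member of $A^*_{new}$, giving $O(n) \cdot O(n^2) = O(n^3)$ time after also accounting for removing every $a' \in A^* \setminus A^+$ from $A^*$. The edge case $A^+ = \emptyset$ is handled directly by paying to remove every initial qualification.

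\emph{Main obstacle.} The subtlest step is \dest \pLinkBribery{\fcsr}: flipping an arc $(a,a')$ with $a' \in A^*$ severs the arc and removes $a'$ from $A^*_{new}$ at the same time, a coupling that a naive min-cut formulation may double-count. Proving that the $O(n)$ guesses for the persistent member of $A^*_{new}$ actually attain the optimum requires a careful structural argument showing that for some guess $a^*$ the corresponding cut has cost no larger than that of any successful bribery; the analogue needed for \exact \pLinkBribery{\fcsr} (a link-bribery version of Observation~\ref{obs:bribe1initial}) demands the same kind of reasoning.
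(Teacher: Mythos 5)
Your proposal matches the paper's proof in all four parts: the same source/sink minimum-weighted-cut constructions for the destructive cases (with the surviving member of $A^*$ guessed for \fcsr) and the same forced-preprocessing-plus-minimum-weight-spanning-arborescence reduction for the exact cases. The obstacles you flag are resolved in the paper exactly along the lines you sketch: once $a^*$ is fixed as the remaining universally qualified agent, social qualification under \fcsr coincides with reachability from $a^*$, so the cut needs no separate accounting for membership in $A^*$, and the sufficiency of a single guess follows from the argument of \autoref{obs:bribe1initial} together with the observation that installing an agent into $A^*$ is never cheaper than having some already-reachable agent qualify it.
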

\begin{proof}
	We first prove the theorem for \dest \pLinkBribery{\flsr}.

	\medskip\noindent\textbf{Destructive LSR:} We start by bribing all agents
	from $A^-$ who qualify themselves such that they all disqualify themselves
	and substract the corresponding costs from the budget~$\ell$ and update
	the profile~$\varphi$.
	Let $G$~be the qualification graph of the altered instance.
	We now consider a slightly modified version of~$G$ and calculate a minimum
	weighted cut to solve the problem.
	We start with~$G$ and assign to every arc~$e$ the weight~$\rho(e)$.
	Subsequently, we add a source vertex~$\sigma$~to $G$ and
	connect~$\sigma$ to
	each vertex~$a$ with a self-loop by an arc of weight~$\rho((a,a))$;
	afterwards, self-loops can be removed.
	Additionally, we introduce a sink vertex~$\tau$ and connect every vertex
	from~$A^-$ to the sink $\tau$ using arcs with weight~$\ell+1$.
	Now, we compute a minimum weighted $(\sigma,\tau)$-cut $E'\subseteq E$
(using Fact~\ref{prop:solving:minimumWeightCut}).
If the weight of a minimum weighted cut is at least~$\ell+1$, then we
reject the instance.
 	If the weight of a minimum weighted $(\sigma,\tau)$-cut is at
most~$\ell$, then it cannot
 	contain any arc that ends in $\tau$
	and we remove the qualifications corresponding to the cut from the
qualification profile:
	If $(a,a')\in E'$ for some $a,a'\in A$, we make $a$ disqualify $a'$.
	If $(\sigma,a)\in E'$ for some~$a\in A$, we make~$a$ disqualify itself.

	On the one hand, after this bribery, no agent $a^-\in A^-$ is socially qualified,
	as the absence of a $(\sigma,\tau)$-path implies that there does not exist a path
	from an agent qualifying itself to $a^-$ in the altered qualification profile.
	On the other hand, the described bribery is optimal, as the existence of a
	$(\sigma,\tau)$-path always implies that at least one agent in $A^-$ is socially
	qualified in the end.

	\medskip\noindent\textbf{Destructive CSR:} For \dest \pLinkBribery{\fcsr},
we need  to follow a slightly more involved
 	approach.
 	For some $a\in A$, we denote by~$\rho^{\cheap}(a)$ the cheapest possible
 	cost of bribing some
 	agent~$a'$ currently qualifying $a$ to disqualify~$a$,
 	i.e. $\rho^{\cheap}(a)=\min_{a''\in Q^+(a)} \rho(a'',a)$, and fix an agent
 	$\cheap(a) \in \argmin_{a''\in Q^+(a)} \rho(a'',a)$.
 	Let $d=\sum_{a \in A^*} \rho^{\cheap}(a)$ be the cheapest costs of bribing
 	for each agent~$a$ from the set~$A^*$ of agents that are qualified by everyone
 	at least one agent~$a'$ to disqualify~$a$.
 	(Note that, since we pay per modified qualification, costs for bribing the same
 	agent twice, i.e.\ when $\cheap(a)=\cheap(a')$ for some $a\neq a'$, are indeed
 	to be summed up. That is, the agents are handled independently from each other.)
 	If $d\leq \ell$, we accept as we can make all agents from $A^*$ socially
 	disqualified
 	(and thus no agent and, in particular, no agent from~$A^-$ is socially
 	qualified) by bribing~$\{(\cheap(a),a) \mid a \in A^*\}$.
 	In the following, we assume $\ell<d$.

 	As $\ell<d$, after the bribery there will be at least one agent from~$A^*$ that remains to
 	be qualified by all agents.
 	We guess this agent and denote it as~$a^*$ in the following.
 	Observe that if we have correctly guessed $a^*$, then
 	an optimal bribery will never bribe agents from~$A^*\setminus\{a^*\}$ to disqualify
 	themselves.
 	The only effect bribing an agent~$\hat{a}$ from~$A^*\setminus  \{a^*\}$ to
disqualify
 	itself can have is that~$\hat{a}$ may not be initially socially qualified
anymore.
 	However, agent~$a^*$ (assumed to be guessed correctly as an agent that remains initially
 	socially qualified) anyway qualifies all agents from~$A^*$ including~$\hat{a}$, except if we
 	bribe~$a^*$ to disqualify~$\hat{a}$.
 	If this, however, is the case, then bribing~$\hat{a}$ to disqualify itself was useless.

 	We can now compute a minimum weighted cut after modifying the qualification graph~$G$ as follows.
 	Again, we start with~$G$ and assign to every arc~$e$ the weight~$\rho(e)$.
 	We add a source vertex~$\sigma$ and a sink vertex~$\tau$.
 	We add an arc with weight~$\ell+1$ from~$\sigma$ to $a^*$.
 	We connect every vertex from~$A^-$ by an arc of weight~$\ell+1$ to~$\tau$.
 	Now, we compute a minimum weighted $(\sigma,\tau)$-cut in the modified
qualification graph.
 	If the weight of a minimum weighted cut is at least~$\ell+1$, then we
reject the current guess of~$a^*$,
 	because from this it follows that any bribery that keeps~$a^*$ initially
socially qualified is too expensive.
 	If the weight of a minimum weighted $(\sigma,\tau)$-cut is at most~$\ell$,
then it cannot
 	contain any arc that touches~$\sigma$ or~$\tau$.
 	Hence, it is possible to delete all qualifications corresponding to arcs in the
 	minimum cut to obtain a minimal bribery.
 	Similar to above, the described bribery is successful and minimal assuming~$a^*$ was
 	correctly guessed, as there exists  an $(\sigma,\tau)$-path in the
constructed graph
 	if and only if an agent from~$A^-$ is socially qualified.

 	Our algorithm keeps track and returns a cheapest bribery over all guesses of~$a^*$ (if any)
 	and returns ``no'' if all guesses are rejected.
 	It is easy to verify that in the latter case indeed no bribery at cost at
 	most~$\ell$ can exist.

	\medskip
 	We now turn to \exact \pLinkBribery{\flsr/\fcsr}.
 	
 	\noindent\textbf{Exact LSR:} For \flsr, we start by making all
agents~$A^-$ disqualify themselves if this
 	is not already the case. Subsequently, we remove all qualifications from an
 	agent in~$A^+$ to an agent in~$A^-$.

 		   \begin{figure}[bt]
 		\begin{center}
 			\begin{tikzpicture}[->,>=stealth',scale=0.6,shorten
 			>=1pt,node
 			distance=4cm,
 			thick,main node/.style={circle,draw,font=\bfseries,minimum
 				size=7mm}]
 			\node[main node] (A)  at
 			(0,0) {$a_1$};
 			\node[main node] (B)  at
 			(4,0) {$a_2$};
 			\node[main node] (C)  at
 			(0,4) {$a_3$};
 			\node[main node] (D)  at
 			(4,4) {$a_4$};
 			\path
 			(A) edge [loop left] node[] {} (A)
 			edge node[]
 			{} (B)
 			edge [dashed] node[pos=0.5, fill=white, inner sep=2pt]
 			{\scriptsize~$6$} (C)
 			edge [dashed] node[pos=0.3, fill=white, inner sep=2pt]
 			{\scriptsize~$3$} (D);
 			\path
 			(B) edge [dashed] node[pos=0.3, fill=white, inner sep=2pt]
 			{\scriptsize~$7$} (C)
 			edge [dashed] node[pos=0.5, fill=white, inner sep=2pt]
 			{\scriptsize~$4$} (D);
 			\path
 			(C) edge [dashed, loop above] node[pos=0.4, inner
 			sep=4pt]
 			{\scriptsize~$5$} (C)
 			edge [bend left]  (D);
 			\path
 			(D) edge [dashed,loop above] node[pos=0.4, inner
 			sep=4pt]
 			{\scriptsize~$6$} (D)
 			edge [dashed] node[pos=0.5, fill=white, inner sep=2pt]
 			{\scriptsize~$2$} (C);
 			\end{tikzpicture}
 			\hspace{1cm}\vline\hspace{1cm}
 			\begin{tikzpicture}[->,>=stealth',scale=0.6,shorten
 			>=1pt,node
 			distance=4cm,
 			thick,main node/.style={circle,draw,font=\bfseries,minimum
 				size=7mm}]
 			\node[main node] (A)  at
 			(2,0) {$r$};
 			\node[main node] (C)  at
 			(0,4) {$a_3$};
 			\node[main node] (D)  at
 			(4,4) {$a_4$};
 			\path
 			(A)	edge node[pos=0.5, fill=white, inner sep=2pt]
 			{\scriptsize~$5$} (C)
 			edge node[pos=0.5, fill=white, inner sep=2pt]
 			{\scriptsize~$3$} (D);
 			\path
 			(C) edge [bend left] node[pos=0.5, fill=white, inner sep=2pt]
 			{\scriptsize~$0$}  (D);
 			\path
 			(D) edge node[pos=0.5, fill=white, inner sep=2pt]
 			{\scriptsize~$2$} (C);
 			\end{tikzpicture}
 		\end{center}
 		\caption{Visualization of the transformation of an instance with
 		$A=\{a_1,a_2,a_3,a_4\}$ and $A^+=A$ of \exact \pLinkBribery{\flsr} from
 		\autoref{thm:flsr-d-arc}. On the left side, the solid arcs
 		display
 		the qualification graph, while the dashed arcs indicate the cost of
 		adding the corresponding qualification. On
 		the right
 		side, the corresponding weighted graph constructed as described in
 		\autoref{thm:flsr-d-arc} is displayed. The instance admits two
 		solutions, that is, bribing $a_3$ to qualify itself or bribing $a_4$ to
 		qualify $a_3$ and $a_1$ to qualify $a_4$.
 		}\label{fi:th2}
 	\end{figure}

 	To ensure that
 	all agents in $A^+$ are socially qualified, we consider again
 	the underlying qualification graph restricted to the vertices from~$A^+$
 	and slightly modify it as follows to solve the problem by finding a
spanning arborescence of the
 	minimum weight. For a visualization of the following construction see
 	\autoref{fi:th2}.
 	Let $R$~be the set of vertices which have a self-loop or which are reachable
 	from a vertex with a self-loop.
 	We merge all vertices from~$R$ into a single new vertex~$r$;
 	when $R=\emptyset$ this means to simply add a new vertex~$r$.
 	Let $a,a' \notin R$ be two vertices corresponding to not yet socially qualified agents.
 	If there is an arc from~$a$ to~$a'$ in the qualification graph,
 	then we set its weight to~$0$.
 	If there was no arc from~$a$ to~$a'$, then we create a new arc from~$a$
to~$a'$
 	and set its weight to~$\rho((a,a'))$ (the cost of bribing~$a$ to qualify~$a'$).
 	Furthermore, we create an arc from~$r$ to each vertex~$a' \notin R$
 	and set the weight to~$\min\left( \rho((a',a')), \min_{a \in R}
\rho((a,a')) \right)$
 	(the cheapest cost of either bribing $a'$ to qualify itself or
 	 of bribing an agent from~$R$ to qualify~$a'$).
 	What remains is to find a spanning arborescence of the minimum weight with root~$r$ using Fact~\ref{prop:solving:minimumWeightSpanningArborescence}.
 	It is easy to verify that a spanning arborescence of the minimum weight
 	corresponds to a cheapest possible bribery that makes all agents from~$A^+$
 	socially qualified and vice versa.

 	\medskip\noindent\textbf{Exact CSR:} For \fcsr, if $A^+$ is empty, we are
either done or there exist socially
 	qualified agents.
 	In the latter case, we denote by~$\rho^{\cheap}(a)$ the cheapest possible cost of bribing some
 	agent~$a'$ to disqualify~$a$, i.e., $\rho^{\cheap}(a)=\min_{a''\in Q^+(a)} \rho(a'',a)$,
 	and fix an agent $\cheap(a) \in \argmin_{a''\in Q^+(a)} \rho(a'',a)$.
 	By this definition, removing all qualifications from~$\{(\cheap(a),a) \mid
 	a \in A^*\}$ gives an optimal bribery.

 	Otherwise, if~$A^+$ is nonempty, we start by deleting all qualifications
 	from agents in~$A^+$ to agents in~$A^-$.
 	Subsequently, similar to the above, we try to make all agents in $A^+$
 	socially qualified by computing a spanning arborescence of the minimum weight
 	of a graph based on the qualification graph.
 	Handling the set of initially socially qualified agents, however, will be again more
 	complicated (similar as for \dest \fcsr \pLinkBribery{}).
 	Observe that if~$A^+$ is nonempty, then after the bribery the
 	set of initially socially qualified agents (which are qualified by all
 	agents)
 	cannot be empty.
 	If $A^*$ is nonempty without any bribery, then we continue very similar as
 	for \flsr:
 	Again, to ensure that all agents in $A^+$ are socially qualified, we consider
 	the underlying qualification graph restricted to the vertices from~$A^+$
 	and slightly modify it to solve the problem by finding a spanning
arborescence of the minimum weight.
 	Let~$R$ be the set of vertices which are initially qualified by all agents
or
 	which are reachable from a vertex initially qualified by all agents.
 	We merge all vertices from~$R$ into a single new vertex~$r$.
 	Let $a,a' \notin R$ be two vertices corresponding to not yet socially qualified agents.
 	If there is an arc from~$a$ to~$a'$ in the qualification graph,
 	then we set its weight to~$0$.
 	If there was no arc from~$a$ to~$a'$, we create a new arc from~$a$ to~$a'$
 	and set its weight to~$\rho((a,a'))$ (the cost of bribing~$a$ to qualify~$a'$).
 	Furthermore, we create an arc from~$r$ to each vertex~$a' \notin R$
 	and set the weight to~$\min_{a \in R} \rho((a,a'))$
 	(bribing an agent from~$R$ to qualify~$a'$).\footnote{Note that here,
 	opposed to the case with \flsr, it cannot be beneficial to make
 	an agent~$a$ qualifying itself and make it qualified by all agents so that
 	it would
 	be
 	part of the set of initially socially qualified agents because either the
 	agent is already
 	qualified by some agent from $R$ (which implies that $a$ is already
 	socially qualified) or this is at most as expensive
 	as making the cheapest agent from~$R$ qualifying~$a$.}
 	What remains is to find a spanning arborescence of the minimum weight with
root~$r$ (again by Fact~\ref{prop:solving:minimumWeightSpanningArborescence}).
 	It is easy to verify that a spanning arborescence of the minimum weight
 	corresponds to a cheapest possible bribery that makes all agents from~$A^+$
 	socially qualified and vice versa.

 	Finally, let~$A^*$ be empty (and~$A^+$ be still nonempty).
 	We continue by guessing an agent $a^*$ in $A^+$ that is qualified by everyone.
 	We make $a^*$ qualified by all agents and proceed as in the case where~$A^*$
 	was nonempty.

 	For the running time bounds, recall that the constructed graph will have
 	$O(n^2)$~edges and~$O(n)$~nodes.
 	Moreover, for~\fcsr, we get an additional factor~$n$ for guessing the
 	initially qualified agent.
 	Applying the known running time bounds for computing a minimum weighted cut or a
 	spanning arborescence of the minimum weight from
\autoref{prop:solving:minimumWeightCut} and
\autoref{prop:solving:minimumWeightSpanningArborescence} gives the bounds
claimed in the theorem.
\end{proof}

In contrast to this, the corresponding problem for constructive bribery is
NP-complete and even W[2]-hard parameterized by the budget $\ell$. This
difference in the complexity of the problem for constructive
bribery and destructive bribery is somewhat surprising, as their complexity is
the same in the case of agent bribery. We show the
hardness of \const \LinkBribery{\fcsr/\flsr} by a reduction from \textsc{Set
Cover}, which is NP-complete and W[2]-hard with respect to the requested size
of the cover. The general idea of the reduction is to introduce one agent for
each element (these form the set $A^+$) and for each set, where each set-agent
qualifies the agents corresponding to the elements in the set.
Notably, as all our hardness results, the hardness holds already for unit costs.
\begin{theorem}
	\label{thm:fsr-c-arc}
	\const \LinkBribery{\fcsr/\flsr} is NP-complete and W[2]-hard with respect to~$\ell$.
\end{theorem}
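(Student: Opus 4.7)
My plan is to prove both NP-completeness and W[2]-hardness in $\ell$ by a parameter-preserving reduction from \textsc{Set Cover}, which is NP-hard and W[2]-hard when parameterized by the requested cover size. Membership in NP is immediate, since one can guess at most $\ell$ qualification flips and verify the outcome in polynomial time using the iterative definitions of \flsr and \fcsr.

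Given a \textsc{Set Cover} instance $(U, \mathcal{S}, k)$ with $U = \{u_1,\ldots,u_m\}$ and $\mathcal{S} = \{S_1,\ldots,S_n\}$, I construct the following group identification instance and set $\ell := k$. I introduce one \emph{element-agent} $a_{u_i}$ for each $u_i \in U$ and one \emph{set-agent} $a_{S_j}$ for each $S_j \in \mathcal{S}$; the only qualifications are the arcs $a_{S_j} \to a_{u_i}$ whenever $u_i \in S_j$, and in particular no agent qualifies itself. I set $A^+ := \{a_{u_i} \mid u_i \in U\}$. For \fcsr, I additionally include a \emph{root} agent $r$ that qualifies itself and is qualified by every other agent (so that $A^* = \{r\}$ initially), together with a block of $k+1$ \emph{dummy} agents which qualify $r$ only and are qualified by nobody; this padding ensures that turning any non-root agent into a member of $A^*$ requires strictly more than $k$ new qualifications. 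In either construction $\flsr(A,\varphi) = \emptyset$ (respectively, $\fcsr(A,\varphi) = \{r\}$), so no element-agent is socially qualified before bribery.

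For the forward direction, given a set cover $\{S_{j_1},\ldots,S_{j_k}\}$, I bribe for \flsr by adding a self-loop at each $a_{S_{j_t}}$ and for \fcsr by adding the arc $r \to a_{S_{j_t}}$ for $t = 1,\ldots,k$. This uses $k$ link modifications and, via the existing arcs $a_{S_{j_t}} \to a_{u_i}$ for $u_i \in S_{j_t}$, makes every $a_{u_i}$ socially qualified.

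The reverse direction is where the main technical work lies. Given a bribery of cost at most $\ell = k$, I aim to extract a set cover of size at most $k$. The starting observation is that only very few types of actions can productively contribute: for \flsr, every new agent added to $K_1$ costs at least one self-loop; for \fcsr, placing any non-root agent into $A^*$ is ruled out by the padding, so $A^*$ effectively remains $\{r\}$ after any bribery of cost $\leq k$. Hence in both rules the set of initially qualified agents, call it $I$, is cheap ($I$ equals the set of added self-loops in \flsr, and $I = \{r\}$ in \fcsr) and every agent of $A^+$ must be reachable from $I$ in the modified graph. Choosing a spanning out-arborescence of the reachable region rooted at $I$, every tree edge is either an original arc $a_{S_j} \to a_{u_i}$ with $u_i \in S_j$ (``free'') or a newly added arc (paying one unit of the budget). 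Set-agents that appear in the tree must have an in-arc that is paid (since no original arc enters a set-agent), and element-agents can be reached either through a free arc from some reached set-agent or through a paid arc (respectively a paid self-loop in \flsr). A short accounting then shows that the total cost is at least $|C| + t$, where $C \subseteq \mathcal{S}$ is the collection of sets whose set-agents are reached and $t$ is the number of element-agents reached without using a free arc; consequently $\bigcup C$ together with the $t$ singletons of those element-agents covers $U$, so the original \textsc{Set Cover} instance admits a cover of size at most $k$ after replacing each singleton by an arbitrary set containing the missing element.

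The principal obstacle I anticipate is executing this accounting so that it is tight against all creative alternatives: chains of new arcs between set-agents, outgoing arcs inserted at element-agents, mixed self-loop and arc-addition schemes, or the (useless) removal of existing arcs. Once these cases are handled, the reduction runs in polynomial time with $\ell = k$, and the parameterized hardness of \textsc{Set Cover} in $k$ transfers to W[2]-hardness in $\ell$, yielding both NP-completeness and W[2]-hardness with respect to $\ell$.
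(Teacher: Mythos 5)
Your reduction is correct and starts from the same source problem (\textsc{Set Cover}, with set-agents pre-qualifying their element-agents and budget $\ell=k$), but the gadget and the backward argument genuinely differ from the paper's. The paper adds a single extra agent $a^*$ that qualifies only itself and is qualified by \emph{everyone}; thus $a^*$ is initially socially qualified under both \flsr{} (self-loop) and \fcsr{} (unanimously qualified), one construction serves both rules with no padding, and the bribery only ever needs to insert arcs from $a^*$ to set-agents. This makes the reverse direction a one-line exchange: any inserted qualification $(a,a')$ can be rerouted to $(a^*,a')$, and any arc into an element $u$ can be replaced by an arc from $a^*$ to a set containing $u$, so the sets qualified by $a^*$ after the (normalized) bribery form the cover. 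You instead use self-loops at set-agents for \flsr{} and a padded root with $k+1$ dummies for \fcsr{}, and your reverse direction is a charging argument on a spanning out-arborescence of the reachable region. That argument does go through: no original arc enters a set-agent, so every reached set-agent is uniquely charged one new arc or self-loop (its in-arc in the arborescence has it as head), every element-agent reached other than by a free arc from a reached set-agent is likewise uniquely charged, and deletions only waste budget; hence cost $\geq |C|+t$ and a cover of size at most $k$ follows exactly as you state. So the cases you flag as the ``principal obstacle'' are all handled by the observation that distinct charged arcs have distinct heads, but you should write this out, since it is where your proof carries the weight that the paper's $a^*$-rerouting trick dissolves entirely. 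Both routes yield the same parameter bound $\ell=k$ and hence both NP-completeness and W[2]-hardness; the paper's buys brevity and a single uniform construction, yours is more self-contained in that it exhibits explicitly where the budget is spent.
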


 \begin{proof}
 	We reduce from the \textsc{Set Cover} problem, where given a number $k$, a
 	universe of elements $U$, and a set of subsets of $\mathcal{S}\subseteq
 	2^{U}$ with $\bigcup\mathcal{S}=U$, one has to decide whether there exists
 	a subset $\mathcal{S'}\subseteq \mathcal{S}$ of size at most $k$ which
 	covers $U$.

 	We prove the theorem for {\fcsr} and {\flsr} by the same reduction from
 	\textsc{Set Cover}. Given an instance $(\mathcal{S},U,k)$ of \textsc{Set
 	Cover}, we construct an instance of \const \LinkBribery{\fcsr/\fcsr} with
 	budget $\ell=k$, agents $A=\mathcal{S}\cup U \cup a^{*}$ and $A^{+}=U$. The
 	qualification profile is created as follows: $a^{*}$ only qualifies
 	itself. Each $S \in \mathcal{S}$ qualifies all $a\in S$ and $a^{*}$ and
 	disqualifies all other agents. Each $a\in U$ qualifies $a^{*}$ and
 	disqualifies all other agents. Consequently, without any bribery $a^{*}$
 	and no one else is socially qualified for both \fcsr and \flsr.
 	We now prove that there exists a successful bribery of cost at most $\ell$
in the constructed instance if and only if there exists a set cover of size at
most $k$ in the given \textsc{Set Cover} instance.

 	$(\Rightarrow)$ Assume that $\mathcal{S}'$ is a cover of $U$ of size at
 	most $k$. Then, the briber bribes $a^{*}$ such that $a^{*}$ qualifies all
 	agents in $\mathcal{S}'$ which requires at most $\ell$ insertions. As
$\mathcal{S}'$ is a cover of $U$, all agents from $A^+$ are socially qualified
after the bribery.

 	$(\Leftarrow)$ Assuming that there exists a successful bribery resulting
in the qualification profile $\varphi'$, there also
 	exists a successful bribery where only $a^*$ is bribed and only links from
 	$a^*$ to some agent from $\mathcal{S}$ are inserted, as it is possible to
 	replace every inserted qualification $(a,a')$ for two $a,a' \in A$ with
 	$(a^{*},a')$ and every arc pointing to some vertex $u \in U$ by an arc from
 	$a^*$ to some vertex from $\mathcal{S}$ which contains $u$. Then, $\{S\in
 	\mathcal{S}\mid \varphi'(a^{*},S)=1 \}$ is clearly a cover of $U$ of size
at most $k=\ell$.
 \end{proof}

Apart from the parameter budget $\ell$, which might be small in most
applications as only agents which are close to the boundary of being socially
(dis)qualified might be interested in their precise margin, another natural
parameter is the set of agents we want to make socially qualified, i.e.,
$|A^+|$. This parameter may also be not too large in most applications, as one
may only be interested in the classification of a limited number of agents. In
contrast to the negative parameterized result for~$\ell$, by reducing the
problem to the \textsc{Weighted Directed Steiner Tree} (WDST) problem, it is
possible to prove that \const \pLinkBribery{\fcsr/\flsr} is FPT with
respect to
$|A^+|$.

 		   \begin{figure}[bt]
	\begin{center}
		\begin{tikzpicture}[->,>=stealth',scale=0.6,shorten
		>=1pt,node
		distance=4cm,
		thick,main node/.style={circle,draw,font=\bfseries,minimum
			size=7mm}]
		\node[main node] (A)  at
		(0,0) {$a_1$};
		\node[main node] (B)  at
		(4,0) {$a_2$};
		\node[main node] (C)  at
		(2,4) {$a_3$};
		\path
		(A) edge [loop left] node[] {} (A)
		edge [dashed] node[pos=0.5, fill=white, inner sep=2pt]
		{\scriptsize~$5$} (B)
		edge [dashed] node[pos=0.5, fill=white, inner sep=2pt]
		{\scriptsize~$5$} (C);
		\path
		(B) edge [dashed, loop right] node[pos=0.4, inner
		sep=2pt]
		{\scriptsize~$3$} (B)
		edge [bend left] node[] {} (A)
		edge [dashed] node[pos=0.3, fill=white, inner sep=2pt]
		{\scriptsize~$1$} (C);
		\path
		(C) edge [dashed, loop above] node[pos=0.4, inner
		sep=4pt]
		{\scriptsize~$4$} (C)
		edge [bend right]  (A)
		edge [bend left]  (B);
		\end{tikzpicture}
		\hspace{1cm}\vline\hspace{1cm}
		\begin{tikzpicture}[->,>=stealth',scale=0.6,shorten
		>=1pt,node
		distance=4cm,
		thick,main node/.style={circle,draw,font=\bfseries,minimum
			size=7mm}]
		\node[main node] (A)  at
		(0,0) {$a_1$};
		\node[main node] (B)  at
		(4,0) {$a_2$};
		\node[main node] (C)  at
		(2,4) {$a_3$};
		\node[main node] (D)  at
		(2,6.5) {$r$};
		\path
		(A) edge [] node[pos=0.5, fill=white, inner sep=2pt]
		{\scriptsize~$5$} (B)
		edge [] node[pos=0.5, fill=white, inner sep=2pt]
		{\scriptsize~$5$} (C);
		\path
		(B)	edge [bend left] node[pos=0.4, fill=white, inner
		sep=4pt]
		{\scriptsize~$0$} (A)
		edge [] node[pos=0.3, fill=white, inner sep=2pt]
		{\scriptsize~$1$} (C);
		\path
		(C) edge [bend right] node[pos=0.4, fill=white, inner
		sep=4pt]
		{\scriptsize~$0$}  (A)
		edge [bend left]  node[pos=0.4, fill=white, inner
		sep=4pt]
		{\scriptsize~$0$} (B);
		\path
		(D) edge [bend right] node[pos=0.4, fill=white, inner
		sep=4pt]
		{\scriptsize~$0$} (A)
		edge [bend left] node[pos=0.4, fill=white, inner
		sep=4pt]
		{\scriptsize~$3$} (B)
		edge [] node[pos=0.4, fill=white, inner
		sep=4pt]
		{\scriptsize~$4$} (C);
		\end{tikzpicture}
	\end{center}
	\caption{Visualization of the transformation of an instance with
		$A=\{a_1,a_2,a_3\}$ and $A^+=A$ of \const \pLinkBribery{\flsr} from
		\autoref{thm:fsr-c-fpt-arc}. On the left side, the solid arcs
		display
		the qualification graph, while the dashed arcs indicate the cost of
		adding the corresponding qualification. On
		the right
		side, the corresponding weighted graph constructed as described in
		\autoref{thm:fsr-c-fpt-arc} is displayed. The instance admits two
		solutions, that is, bribing $a_3$ to qualify itself or bribing $a_2$ to
		qualify itself and $a_2$ to qualify $a_3$.
	}\label{fi:th4}
\end{figure}

\begin{theorem}  \label{thm:fsr-c-fpt-arc}
	\const \pLinkBribery{\fcsr/\flsr} is FPT  with respect to $|A^+|$.
\end{theorem}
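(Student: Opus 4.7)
The approach is to reduce \const \pLinkBribery{\fcsr/\flsr} to \textsc{Weighted Directed Steiner Tree} (WDST), which by \autoref{prop:solving:weightedDirectedSteinerTree} is solvable in time exponential only in the number of terminals $|T|$ and polynomial in the rest of the input. A first observation is that for constructive link bribery under the iterative rules we may restrict attention to briberies that only insert qualifications: by the reachability characterization of \flsr~and \fcsr, deleting an arc from the qualification graph can never turn a socially disqualified agent into a socially qualified one, so deletions are useless for a constructive goal.

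Second, \autoref{obs:bribe1initial} implies that it is enough to make at most one \emph{new} agent $a^*$ initially socially qualified (i.e.\ belonging to $K_1$ for \flsr, or being qualified by all agents, and hence in $A^*$, for \fcsr~after bribery). I would branch over the $O(n)$ choices of $a^*$ together with the extra ``no new initial agent'' option. For each guess I set up a WDST instance on vertex set $A$ with a single root vertex $r$ obtained by contracting the set of post-bribery initially socially qualified agents: $K_1 \cup \{a^*\}$ for \flsr, respectively $A^* \cup \{a^*\}$ for \fcsr. For every ordered pair $(a,b)$ with $a \neq b$ that does not touch~$r$ in the contracted graph, give an arc of weight $0$ if $\varphi(a,b)=1$ and of weight $\rho((a,b))$ otherwise. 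The set of terminals is $A^+$. To the WDST optimum I then add the cost of making $a^*$ initially qualified: $\rho((a^*,a^*))$ for \flsr~(only if $\varphi(a^*,a^*)=-1$), and $\sum_{a' \in Q^-(a^*)} \rho((a',a^*))$ for \fcsr; this additive term is $0$ in the ``no new initial agent'' branch. The algorithm returns the minimum of these quantities over all branches.

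Correctness follows in both directions. Given any successful bribery that, by the two observations above, uses only insertions and adds at most one new agent $a^*$ to the initial set, the inserted qualifications form a subgraph of arcs in which every $a \in A^+$ is reachable from the contracted root $r$, and its total price in the WDST formulation is exactly the bribery cost minus the additive term for $a^*$. Conversely, a WDST solution specifies exactly the qualifications to flip from $-1$ to $+1$, and the reachability from $r$ in the resulting qualification graph coincides with the set of (reachable from initially qualified) agents that become socially qualified, so $A^+$ is contained in it.

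For the running time, \autoref{prop:solving:weightedDirectedSteinerTree} yields a $2^{|A^+|} \cdot (n \cdot \rho_{\max})^{O(1)}$-time algorithm for each WDST instance; since $\rho_{\max}$ is polynomial in $n$ by assumption and there are only $O(n)$ branches, the overall running time is $2^{|A^+|} \cdot n^{O(1)}$, giving fixed-parameter tractability with respect to $|A^+|$. The main conceptual obstacle is getting the modelling right, in particular ensuring that contracting the post-bribery initial set into a single WDST root correctly reflects the iterative rule while still allowing internal vertices of the Steiner arborescence to be arbitrary (non-$A^+$) agents at no extra cost beyond the arc weights; \autoref{obs:bribe1initial} is what makes the single-root reduction lossless.
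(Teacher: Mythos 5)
Your overall strategy---reduce to \textsc{Weighted Directed Steiner Tree} with terminal set $A^+$ and invoke Fact~\ref{prop:solving:weightedDirectedSteinerTree}---is exactly the paper's, and your handling of \fcsr{} (guess the single agent $a^*$ to be made qualified by everyone, pay $\sum_{a'\in Q^-(a^*)}\rho((a',a^*))$ up front, contract into a root) matches the paper's algorithm. However, there is a genuine error in your treatment of \flsr. You invoke \autoref{obs:bribe1initial} to conclude that at most one \emph{new} agent needs to be made initially socially qualified, but that observation is stated and proved only for \emph{agent} bribery: its proof reroutes qualifications of agents that are already being bribed, which is free when the cost is per agent but not when the cost is per flipped link. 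For link bribery under \flsr{} the claim is simply false. Take $A=A^+=\{a_1,a_2\}$ with no qualifications at all, $\rho((a_i,a_i))=1$ and $\rho((a_i,a_j))=100$ for $i\neq j$. The optimum is $2$ (each agent qualifies itself), but every branch of your algorithm makes only one agent self-qualifying and must then buy an arc of cost $100$ to reach the other, returning $101$. Since your WDST instance contains no arc representing ``make $b\neq a^*$ qualify itself,'' this option is not expressible at all in your encoding.

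The paper avoids the issue without any guessing for \flsr: it adds, for every agent $a$ that does not qualify itself, an arc from the root $r$ to $a$ of weight $\rho((a,a))$, so the Steiner solution may freely make \emph{several} agents self-qualifying. (For \fcsr{} the single-guess approach is sound, essentially because making a second agent $b$ qualified by everyone already costs at least $\rho((a^*,b))$ when $a^*\in Q^-(b)$, and is unnecessary otherwise; this is the content of the paper's footnote.) A secondary, more cosmetic gap: you only define arcs for pairs ``that do not touch $r$,'' so the weights of arcs leaving the contracted root (e.g.\ $\min$ over the contracted agents of the insertion price into a vertex outside the root) are never specified, and these are precisely the arcs every root-to-terminal path must start with.
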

 \begin{proof}
 	We start by considering \flsr{} before we turn to \fcsr.

 	\medskip\noindent\textbf{LSR:} We examine a close
connection  between \const \pLinkBribery{\flsr} and WDST. For a visualization
of the following
construction see
 	\autoref{fi:th4}.
 	Let~$I$ be an instance of \const \pLinkBribery{\flsr} and
consider the qualification graph where we add one additional root vertex~$r$.
 	Moreover, for each existing arc set its weight to $0$ and replace every
 	self-loop~$(a,a)$
 	by an arc from~$r$ to~$a$ (again of weight~$0$).
 	For every pair of vertices~$a$ and $a'$ where $a$~disqualifies~$a'$, we create
 	an additional arc
 	from~$a$ to~$a'$ of weight~$\rho((a,a'))$.
 	For every agent~$a$ that does not initially qualify itself, we add an arc from~$r$ to~$a$
 	with weight~$\rho((a,a))$.
 	Finally, we mark all vertices in~$A^+$ as terminals.
 	We call the resulting graph~$G'$.

 	Now, it is easy to verify that there exists a subset of arcs~$E^*$ in $G'$
of total weight at most~$\ell$
 	ensuring that there is a directed path from the root~$r$ to every
terminal~$t \in T$ in $G'$ if and only if there is a successful bribery of
 	cost~$\ell$ for~$I$.

 	For the ``if direction'', consider a subset of arcs~$E^*$ of total weight
 	at most~$\ell$
 	ensuring that there is a directed path from the root~$r$ to every
terminal~$t \in T$ in $G'$.
 	Now, observe that bribing for every arc~$(a,a') \in E^*$ of nonzero weight
 	the agent~$a$ to qualify agent~$a'$ (and $a'$ to qualify itself if $a=r$)
is indeed of total cost~$\ell$ and
 	yields a
 	successful bribery, as a path from~$r$ to some terminal~$t$ implies that
the vertex~$t$ will be socially qualified.

 	For the ``only if direction'', consider a successful bribery for~$I$.
 	In particular, consider the graph~$G'$ constructed above for the WDST
 	instance and set the weight of
 	every arc corresponding to a bribed agent pair to zero.
 	Now, it is easy to verify that the modified WDST instance must have a
 	solution of total weight~$0$
 	(just consider an arbitrary spanning arborescence of minimum weight with
 	root~$r$).
 	Clearly, the set of arcs with modified weights that may be part of the
selected arborescence has weight at most~$\ell$
 	in~$G'$.

 	\medskip\noindent\textbf{CSR:} For \fcsr, we start by guessing an agent
$a^*$ that is qualified by
 	everyone (if there already exists such an agent, we pick it), bribe all
agents that currently disqualify $a^*$ to qualify it, and
 	substract the costs from the budget.
 	Then, we apply an algorithm very similar to the above case, where
 	the corresponding auxiliary graph is even a little bit simpler: In
particular, consider the qualification graph where we add one additional root
vertex~$r$.
 	Moreover, for each existing arc, set its weight to $0$ and remove every
self-loop.
 	For every pair of vertices~$a$ and $a'$ where $a$~disqualifies~$a'$, we create
 	an additional arc
 	from~$a$ to~$a'$ of weight~$\rho((a,a'))$.
 	For every agent~$a$ that is qualified by everyone (including at least~$a^*$), we add an arc from~$r$ to~$a$
 	with weight~$0$.
 	Finally, we mark all vertices from~$A^+$ as terminals and set~$p$ to~$\ell$.
 	We call the resulting graph $G'$.
 	The correctness proof works analogous to the \flsr case.
 \end{proof}

The hardness results for constructive bribery imply that
constructive+destructive bribery is also NP-hard and W[2]-hard with respect
to~$\ell$.  Utilizing a slightly more involved reduction from \textsc{Exact
Cover By 3 Sets}, it is even possible to show that the NP-hardness of
constructive+destructive bribery extends to the case where the briber is only
allowed to delete qualifications. This may be surprising, as destructive bribery
alone is polynomial-time solvable.

\begin{figure}[bt]
  \begin{center}
    \begin{tikzpicture}
      \node[vertex, label=180:$a^*$] (a*) at (0, 0) {};
      \node[vertex, label=0:$a'$] (a') at (5.5, 2.8) {};
      \node[vertex, label=0:$a''$] (a'') at (5.5, 3.6) {};

      \draw (2,2) rectangle (3.2,-2);
      \draw[fill=green!30] (2,2) rectangle (3.2,0.5);
      \draw[fill=green!30] (5,2) rectangle (6.2,-2);
      \draw (5,4) rectangle (6.2,2.4);
      \node (SS) at (1.5,1.3) {$\mathcal{S}'$};
      \node (S) at (2.5,-2.5) {$\mathcal{S}$};
      \node (U) at (5.5,-2.5) {$\mathcal{U}$};

      \draw[->,dashed,thick] (a*) -- node[above=3pt] {$2m$} (1.9,0);
      \draw[->,dashed,thick] (a*) -- (1.9,-1.5);

      \draw[->,dashed,thick] (3.3,1.3) --node[above left=3pt and 3pt] {$2m$} (a'');
      \draw[->,dashed,thick] (3.3,1.3) -- (a');

      \node (m) at (2.6,1.3) {$m$};
      \node (2m) at (2.6,-0.7) {$2m$};
      \node (A-) at (7.5,3.2) {$A^-=a'\cup a''$};
      \node (A+) at (7.5,0) {$A^+=\mathcal{U} \cup a^*$};

    \end{tikzpicture}
  \end{center}
  \caption{Visualization of the reduction from Proposition \ref{thm:fsr-cd-arc} for a successful bribery with budget $\ell=4m$. The $4m$ dashed arrows will be deleted in the bribery.
  The remaining arrows are omitted. The socially qualified agents after the bribery are agents in the green area together with $a^*$.
  }
  \label{fig:fsr-cd-arc}
\end{figure}
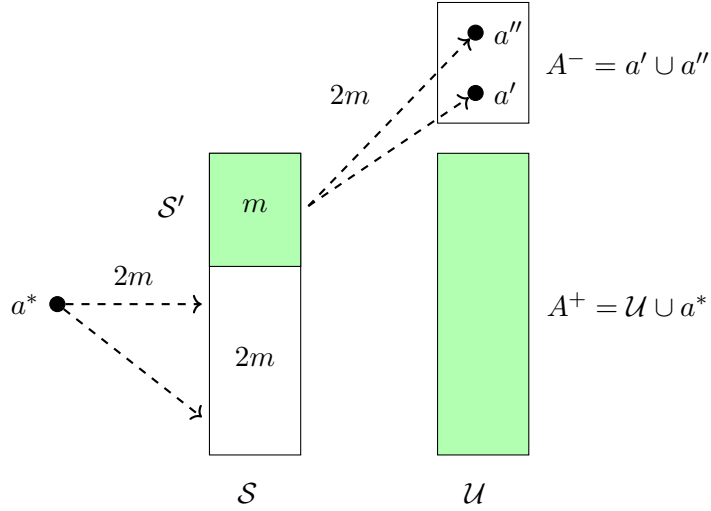

\begin{proposition}
	\label{thm:fsr-cd-arc}
	 \mixed \LinkBribery{\fcsr/\flsr} remains NP-complete even if the briber is only allowed to delete qualifications.
\end{proposition}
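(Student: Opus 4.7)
The plan is to reduce from \textsc{Exact Cover by 3-Sets} (X3C): given a universe $\mathcal{U}$ with $|\mathcal{U}|=3m$ and a collection $\mathcal{S}$ of $3$-element subsets of $\mathcal{U}$, decide whether there is a subcollection $\mathcal{S}'\subseteq \mathcal{S}$ of exactly $m$ sets that exactly covers $\mathcal{U}$. The construction, guided by the picture in \autoref{fig:fsr-cd-arc}, introduces one agent per set, one agent per element, and three further agents $a^*, a', a''$. I make $a^*$ initially socially qualified (a self-loop for \flsr; for \fcsr, additionally let every other agent qualify $a^*$), let $a^*$ qualify every set-agent $S\in\mathcal{S}$, let each set-agent $S$ qualify exactly its three element-agents as well as both $a'$ and $a''$, and set $A^+=\mathcal{U}\cup\{a^*\}$ and $A^-=\{a',a''\}$. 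The budget is $\ell=|\mathcal{S}|+m$; the example in \autoref{fig:fsr-cd-arc}, with $|\mathcal{S}|=3m$, corresponds to $\ell=4m$.

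The forward direction is the easy accounting step: given an exact cover $\mathcal{S}'$ of size $m$, delete the arc $(a^*,S)$ for each $S\notin \mathcal{S}'$ ($|\mathcal{S}|-m$ arcs) and both arcs $(S,a')$ and $(S,a'')$ for each $S\in \mathcal{S}'$ ($2m$ arcs); the total cost is $|\mathcal{S}|+m=\ell$, and after the deletions the agents reachable from $a^*$ are exactly $\{a^*\}\cup \mathcal{S}'\cup \mathcal{U}$, so $A^+\subseteq f(A,\varphi')$ and $A^-\cap f(A,\varphi')=\emptyset$ for both iterative rules.

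For the backward direction I would argue that any successful deletion set $F$ of size at most $\ell$ already has the canonical structure above. Let $\mathcal{R}\subseteq\mathcal{S}$ be the set-agents reachable from $a^*$ after deleting $F$. The constructive constraint forces $\mathcal{R}$ to cover $\mathcal{U}$, since the only arcs into an element-agent come from the set-agents containing it. The destructive constraint, combined with the fact that the only arcs from $a^*$ into $\mathcal{S}$ are the direct arcs $(a^*,S)$, forces $(a^*,S)\in F$ for every $S\notin\mathcal{R}$ and, because every remaining $S\in\mathcal{R}$ still reaches both $a'$ and $a''$ in one step, also forces $(S,a')\in F$ and $(S,a'')\in F$ for every $S\in\mathcal{R}$. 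Hence $|F|\ge 2|\mathcal{R}|+(|\mathcal{S}|-|\mathcal{R}|)=|\mathcal{S}|+|\mathcal{R}|$, so $|\mathcal{R}|\le m$; combined with the fact that any cover of $\mathcal{U}$ by $3$-element sets has size at least $m$, this yields $|\mathcal{R}|=m$ and $\mathcal{R}$ is an exact cover.

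Membership in NP is immediate (guess the deletion set and simulate either rule in polynomial time). The only real obstacle is the normalization in the backward direction: I have to rule out bribery strategies that, for example, delete arcs from a set-agent to an element-agent, or delete only one of $(S,a'),(S,a'')$, or tamper with the self-loop on $a^*$. Each such case is handled by observing that a cheaper canonical deletion achieves at least as much, which is what allows the counting $|F|\ge |\mathcal{S}|+|\mathcal{R}|$ to be applied without loss of generality. The construction and argument are identical for \flsr and \fcsr; the only difference is how $a^*$ is made initially socially qualified.
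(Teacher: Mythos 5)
Your proposal is correct and takes essentially the same route as the paper: the identical reduction from \textsc{Exact Cover By 3 Sets} with one agent per set, one per element, plus $a^*,a',a''$, the same budget ($\ell=|\mathcal{S}|+m$, which equals the paper's $4m$ in its restricted X3C variant with $|\mathcal{S}|=3m$), and the same counting bound $|F|\ge 2|\mathcal{R}|+(|\mathcal{S}|-|\mathcal{R}|)$ in the backward direction, which the paper merely phrases as a proof by contradiction. The normalization concerns you flag at the end are in fact already absorbed by your argument, since the deletions you count are pairwise distinct arcs that are all forced to lie in $F$, so no exchange step is needed.
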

 \begin{proof}
 	We prove the proposition for both rules at the same time by a reduction from
 	\textsc{Exact Cover By 3 Sets}, a version of set cover where every set
contains exactly three elements and every element
 	appears in exactly three sets.  Given an instance
 	$(\mathcal{S},\mathcal{U})$ of \textsc{X3C}, where
 	$n=|\mathcal{U}|=|\mathcal{S}|=3m$, let $A=\mathcal{S}\cup \mathcal{U} \cup
 	a^{*} \cup a^{'} \cup a^{''}$, $A^+=\mathcal{U}\cup a^{*}$ and $A^-= a^{'}
 	\cup a^{''}$. We construct the qualification profile as follows. Everyone
 	qualifies $a^*$. Agent $a^*$ qualifies all agents from $\mathcal{S}$. All
 	agents
 	from $\mathcal{S}$ qualify $a^{'}$ and $a^{''}$ and the three agents
 	corresponding to their elements.
  We set the budget to $\ell = 4m$ and allow the briber only to delete
  qualifications. 	We now prove that there exists a successful bribery of cost
at most $\ell$
in the constructed instance if and only if there exists an exact cover in the
given \textsc{X3C} instance. See \autoref{fig:fsr-cd-arc} for a
 	visualization of the forward direction.

 	$(\Rightarrow)$ Assume that $\mathcal{S'}\subseteq \mathcal{S}$ is a cover
 	of $\mathcal{U}$ with $|\mathcal{S'}|=m$. Then, we bribe $a^*$ to
 	disqualify all agents from  $\mathcal{S}\setminus\mathcal{S'}$ (there are
 	$2m$ of them). Moreover, we make all agents from $\mathcal{S'}$ disqualify
 	$a'$ and $a''$ (there are $2m$ of them). In the resulting qualification
 	profile, all agents from $\mathcal{S'}$ are still socially qualified, while
 	all agents from $\mathcal{S}\setminus\mathcal{S'}$ are not socially
 	qualified. Thereby, all agents in $\mathcal{U}$ are socially qualified, as
 	they are qualified by one agent from $\mathcal{S'}$, while none of
 	$a'$ and $a''$ is socially qualified, as they are only qualified by members
 	of $\mathcal{S}\setminus \mathcal{S'}$.

 	$(\Leftarrow)$ Let us assume that we found a successful bribery of cost at
 	most $\ell$. Then, in the resulting qualification profile, the subset
 	$\mathcal{S'}\subseteq \mathcal{S}$ of agents which are still socially
 	qualified needs to cover all agents from $\mathcal{U}$. We claim that
 	$|\mathcal{S'}|\leq m$ and, in fact $|\mathcal{S'}|= m$, and thereby that
$\mathcal{S'}$ induces a solution of the
 	\textsc{X3C}-instance. For the sake of contradiction, assume that
 	$|\mathcal{S'}|>m$. In this case, to ensure that neither $a'$ nor $a''$ are
 	socially qualified, a budget of at least $|\mathcal{S}\setminus
 	\mathcal{S'}|$ was needed to prevent the existence of a path from $a^*$ to
 	$a'$ or $a''$ visiting
 	an agent from $\mathcal{S}\setminus
 	\mathcal{S'}$. Moreover, a budget of $2|\mathcal{S'}|$ was needed to make
 	all
 	agents in $\mathcal{S'}$ disqualify $a'$ and $a''$. This results in an
 	overall cost of $2|\mathcal{S'}|+|\mathcal{S}\setminus \mathcal{S'}|$,
 	which is larger than $4m$ as $|\mathcal{S'}|>m$.
 \end{proof}

The remaining question is to pinpoint the parameterized complexity of
constructive+destructive bribery with respect to $|A^+|+|A^-|$. Despite the
fact that the FPT result for constructive bribery suggests that it may be
possible to prove fixed-parameter tractability for this case, we were not even
able to prove that this
problem lies in XP, leaving this as an intriguing open problem for future work.

\section{Consent Rule} \label{se:cons}

For the consent rule, all computational problems in the link bribery setting
are easy to solve.
The main reasoning here is that every possible link bribery action can
influence the social qualification
of only one agent, namely, the sink of the corresponding arc.

For the consent rule, we distinguish four types of outcomes for some agent~$a$:
(1a) agent~$a$ qualifies itself and at least~$s-1$ other agents also qualify~$a$,
(1b) agent~$a$ qualifies itself and less than~$s-1$ other agents qualify~$a$,
(2a) agent~$a$ disqualifies itself and at least~$t-1$ other agents disqualify~$a$, and
(2b) agent~$a$ disqualifies itself and less than~$t-1$ other agents disqualify~$a$.
Moreover, an agent that shall be socially qualified must end up in case (1a) or (2b)
and an agent that shall be socially disqualified must end up in case (1b) or (2a).

For each of these cases, computing the cheapest bribery such that the case
applies for an agent $a$ in the modified profile is easy:
If this is not already the case, first bribe $a$ to (dis)qualify itself. Then,
sort the missing
(dis)qualifications of all other agents by their costs and iteratively pick the
cheapest one and make it (dis)qualify $a$ until
the condition stated in the specific case is met and $a$ becomes socially
(dis)qualified.
This way, for constructive bribery, we can compute for each agent from~$A^+$
the costs for the two cases (1a) or (2b) and choose the cheaper one, while for
destructive bribery we can compute for each agent from~$A^-$
the costs for the two cases (1b) or (2a) and choose the cheaper one.
As mentioned above, each agent can be handled independently.
Thus, we need to sort at most~$2(|A^+|+|A^-|)$ times at most~$|A|$~arc weights,
ending up with the following.

\begin{observation}
\label{ob:link-mixd-fst}
 \mixed \pLinkBribery{\fst} can be solved in $O((|A^+|+|A^-|)\cdot |A| \log |A|)$ time.
\end{observation}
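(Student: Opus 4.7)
The plan is to exploit the crucial structural property that under \fst{} each entry $\varphi(a',a)$ of the qualification profile influences only the social status of agent~$a$. This means the bribery problem decouples completely across the agents in $A^+ \cup A^-$: I can compute, independently for each target agent, the cheapest set of link flips that forces it into the required status, and then just sum these costs. Since the relevant sets of flippable links for distinct targets are disjoint (each lies in a different column of~$\varphi$), this sum is indeed achievable as a single joint bribery, and it is optimal because any joint bribery partitions into per-target parts that each pay at least the optimum I computed.

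For a single target agent~$a$, I would enumerate the two shapes that witness the prescribed status under \fst{}. If $a \in A^+$, these are (1a)~$\varphi(a,a)=1$ with at least $s-1$ other agents qualifying~$a$, or (2b)~$\varphi(a,a)=-1$ with fewer than $t-1$ other agents disqualifying~$a$; if $a \in A^-$, they are the analogous (1b) and (2a). For each fixed shape I would first pay $\rho((a,a))$ if the self-loop must be flipped; then, if the required count of external (dis)qualifications is not yet met, I would take the prices $\rho((a',a))$ over all $a' \ne a$ currently voting the wrong way, sort them in increasing order, and greedily pick the cheapest ones until the threshold is reached. A standard exchange argument shows this greedy choice is optimal within a fixed shape, since only the cardinality of the set of flips matters for attaining the shape, and taking the minimum cost over the two shapes gives the cheapest per-agent bribery.

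I do not foresee a real obstacle: the constraint $s+t \le n+2$ rules out the pathological case in which an agent that disqualifies itself could still be socially qualified by flipping only its self-loop, so each of the two shapes is always realizable and the per-agent minimum is well-defined. For the running time, each target requires at most two sorts of at most $|A|$ prices, i.e.\ $O(|A|\log|A|)$, and there are $|A^+|+|A^-|$ targets, yielding the claimed $O((|A^+|+|A^-|)\cdot |A|\log|A|)$ bound.
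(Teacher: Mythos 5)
Your proposal is correct and matches the paper's own argument essentially line for line: the same per-agent decoupling (each link $(a',a)$ affects only agent $a$), the same case analysis into shapes (1a)/(2b) for $A^+$ and (1b)/(2a) for $A^-$, the same greedy sort-and-pick of cheapest wrong-way links, and the same running-time accounting. (Your side remark that both shapes are \emph{always} realizable is slightly overstated for extreme parameter values such as $s=n+1$, $t=1$, but the greedy procedure detects an unrealizable shape by simply running out of links, so this does not affect correctness.)
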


Now, turning to agent bribery, we start by focusing on \const \pnAgentBribery 
\fst in Subsection \ref{sub:itclb} and explain how the results for constructive 
bribery translate to our 
other 
three bribery goals in Subsection \ref{sub:itdlb}. 

\subsection{Constructive Agent Bribery} \label{sub:itclb}
\citet[Theorem
2]{DBLP:journals/aamas/ErdelyiRY20} proved
that \const \AgentBribery \fst is NP-complete for all $s\geq 1$ and $t\geq 2$, 
which is why we focus on the parameterized complexity of this NP-complete 
problem.
Studying bribery problems for the consent rule, $s$ and $t$ are natural
parameters to consider, as at least one of these parameters may be small in
most
applications: In problems where socially qualified agents acquire a privilege,
$t$ should be small, while for
problems where social qualification implies some obligation or duty, $s$ should
be small. However, the hardness result of
\citet{DBLP:journals/aamas/ErdelyiRY20} directly
implies that  \const \AgentBribery \fst is para-NP-hard with respect to
$s+t$. However, the reduction
has no implications on the parameterized complexity of the problem with respect
to $\ell$ and $|A^+|$. In the following, we conduct a parameterized analysis of
\const \pnAgentBribery \fst with respect to $s$, $t$, $\ell$ and $|A^+|$, 
before, in Subsection \ref{sub:itdlb},
we explain how to adapt our results to the other three bribery goals 
considered. 

Interestingly, constructive agent bribery for consent rules can be seen as a
variant of a set cover problem: Every agent from $A^+$ that is not initially
socially qualified needs to gain a certain number of qualifications, that is,
the agent needs to be covered a certain number
of times, and bribing an agent $a\in A$ corresponds to covering all agents from
$A^+$ that agent~$a$ initially disqualified\footnote{The number of
qualifications an agent $a\in A^+$ that initially disqualifies itself needs to
gain obviously depends on whether $a$ is bribed (to qualify itself). However,
this can be
easily modeled by setting the initial demand of $a$ to $Q^-(a)-(t-1)$ and
insert $n-(t-1)-s$ copies of $a$ in the set that corresponds to bribing $a$. 
Formally, this is only possible if we have multisets that allow for multiple 
copies 
of an agent in one set. So we, in fact, need to consider a multiset cover 
variant.}.
In the following, we start by analyzing the parameterized complexity of \const
\pnAgentBribery{\fst} with respect to the parameters $s$, $t$,
and $\ell$. Subsequently, we analyze the influence of the number of agents that
we need to make
socially qualified, i.e., $|A^+|$, on the complexity of the problem.
\subsubsection{Parameterized complexity with respect to $s$, $t$, and $\ell$}
\citet{DBLP:journals/aamas/ErdelyiRY20} proved that \const
\AgentBribery{\fst} is in XP with respect to~$s$ if $t=1$. However, they left
open whether this problem is FPT or W[1]-hard. Moreover,
there exists a trivial brute force XP-algorithm for~$\ell$, while it is again
open whether the problem is FPT or W[1]-hard with respect to $\ell$. We answer
both questions negatively in the following theorem:

\begin{figure}[bt]
  \begin{center}
    \begin{tikzpicture}
      \node[vertex, label=180:$a_4$] (a4) at (0, 0) {};
      \node[vertex, label=180:$a_3$] (a3) at (0, 1) {};
      \node[vertex, label=180:$a_2$] (a2) at (0, 2) {};
      \node[vertex, label=180:$a_1$] (a1) at (0, 3) {};

    \node[vertex, label=270:$a_{1,2}$] (a12) at (2, 3) {};
     \node[vertex, label=270:$a_{1,3}$] (a13) at (2, 2) {};
      \node[vertex, label=270:$a_{1,4}$] (a14) at (2, 1) {};

      \node[vertex, label=270:$\tilde{a}_{1,2}$] (aa12) at (4, 3) {};
     \node[vertex, label=270:$\tilde{a}_{1,3}$] (aa13) at (4, 2) {};
      \node[vertex, label=270:$\tilde{a}_{1,4}$] (aa14) at (4, 1) {};

      \draw[->] (a1)--(a12);
      \draw[->] (a1)--(a13);
      \draw[->] (a1)--(a14);
      \draw[->] (a2)--(a12);
      \draw[->] (a3)--(a13);
      \draw[->] (a4)--(a14);
      \draw[->] (a12)--(aa12);
      \draw[->] (a13)--(aa13);
      \draw[->] (a14)--(aa14);
      \draw[->] (a12) to [loop above] (a12);
      \draw[->] (a13) to [loop above] (a13);
      \draw[->] (a14) to [loop above] (a14);
      \draw[->] (aa12) to [loop above] (aa12);
      \draw[->] (aa13) to [loop above] (aa13);
      \draw[->] (aa14) to [loop above] (aa14);

    \end{tikzpicture}
  \end{center}
  \caption{Visualization of the reduction from \autoref{thm:fst-c-agent-t=1-s}
for given graph $G=(\{1,2,3,4\},\{\{1,2\},\{1,3\},\{1,4\}\})$.
  }
  \label{fig:fst-c-agent-t=1-s}
\end{figure}
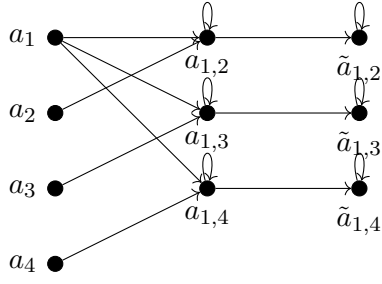
\begin{theorem}
    \label{thm:fst-c-agent-t=1-s}
    \const \AgentBribery \fst is W[1]-hard with respect to~$s+\ell$ even if~$t=1$.
\end{theorem}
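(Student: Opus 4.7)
The plan is to reduce from \textsc{Multicolored Clique} (W[1]-hard parameterized by the clique size $k$), constructing an instance of \const \AgentBribery{\fst} with $t=1$ that follows the gadget skeleton shown in the figure. Given an input graph $G=(V,E)$ with color partition $V = V_1 \cup \cdots \cup V_k$, I introduce a \emph{vertex-agent} $a_v$ for every $v \in V$ (which disqualifies itself), an \emph{edge-agent} $a_e$ and a \emph{twin-agent} $\tilde{a}_e$ for every $e \in E$ (both of which qualify themselves), together with the qualification arcs $a_v \to a_e$ for each endpoint $v$ of $e$ and $a_e \to \tilde{a}_e$; every other entry of $\varphi$ is a disqualification. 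The set $A^+$, the threshold $s$, and the budget $\ell$ are chosen as functions of $k$ only, so that $s+\ell$ is bounded by a function of $k$.

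The key structural observation is that, since $t=1$, any agent $a$ with $\varphi(a,a)=-1$ is already in its own $Q^-$ and hence is socially disqualified; thus every element of $A^+$ must end up qualifying itself after the bribery. In particular, the only use of bribing a non-self-qualifying agent is to repair its self-qualification and then inject further qualifiers into it. For the forward direction, given a multicolored $k$-clique $C=\{v_1,\dots,v_k\}$, I would bribe exactly the $k$ vertex-agents $\{a_{v_c}\}_{c=1}^k$ (so $\ell=k$), have each one qualify itself, keep qualifying its incident edge-agents, and additionally qualify the twin-agents $\tilde{a}_e$ corresponding to the $\binom{k}{2}$ edges of $C$. With $s$ tuned so that each $\tilde{a}_e$ (for $e$ a clique edge) reaches exactly $s$ qualifiers---its two initial ones (itself and $a_e$) plus the two new ones from the bribed endpoints of $e$---every agent in $A^+$ becomes socially qualified within the budget.

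The crux of the reduction, and the main obstacle, is the backward direction. I would argue in three steps. First, bribing an edge-agent or twin-agent is wasteful: by construction each such bribe can only contribute qualifiers to a single $A^+$-agent, whereas bribing a vertex-agent $a_v$ can simultaneously help all twin-agents of $v$'s incident edges; so under the tight budget an optimal bribery bribes only vertex-agents. Second, a counting argument using the threshold $s$ shows that every $\tilde{a}_e \in A^+$ must receive all its missing qualifiers from bribed vertex-agents, and the gadget is designed so that the only vertex-agents that can plausibly supply them are the two endpoints of $e$ (other vertex-agents would have to be used up on their own incident-edge gadgets). Third, the color-class structure encoded into $A^+$ forces exactly one bribed vertex-agent per class, and the mutual-coverage condition then forces the bribed indices to pairwise induce edges of $G$, i.e., to form a multicolored $k$-clique.

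The delicate part is simultaneously tuning $A^+$, $s$, and $\ell$ so that these implications genuinely hold and so that no alternative bribery strategy can succeed (for example, bribing a handful of arbitrary vertex-agents and flooding all twin-agents with qualifications). I would expect to handle this by setting $s=4$ (so each relevant $\tilde{a}_e$ truly needs \emph{two} further qualifiers, which can only both be supplied if both endpoints of $e$ are bribed), $\ell=k$, and choosing $A^+$ to encode both the ``one vertex per color class'' requirement and the ``clique coverage'' requirement via the twin-agents; verifying that no loophole exists is the technical heart of the proof and will require a careful case analysis of how the $\ell=k$ bribery budget can be distributed.
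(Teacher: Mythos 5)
Your reduction has a fatal flaw in the choice $s=4$. In agent bribery a single bribed agent may rewrite its entire row of $\varphi$, so one bribed agent can donate a fresh qualification to \emph{every} agent simultaneously; there is no ``using up'' of a bribed vertex-agent on its own incident gadgets. With $s=4$ each twin-agent needs only two additional qualifiers beyond itself and $a_e$, so bribing \emph{any} two agents and having them qualify everyone makes every twin-agent (and every edge-agent) socially qualified, regardless of whether the bribed agents correspond to a clique. This is exactly the ``flooding'' loophole you flag at the end, and it cannot be closed by tuning $A^+$: the backward direction genuinely fails. There is also a structural reason your parameter choice cannot work: \autoref{th:fst-lt-fpt} shows that \const \AgentBribery{\fst} is FPT with respect to $\ell+t$ whenever $s$ is a constant, so instances with $s=4$, $t=1$, and $\ell=k$ lie in an FPT fragment and cannot carry W[1]-hardness for the parameter $\ell$ (unless $\text{FPT}=\text{W[1]}$).

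The missing idea is to let $s$ grow with $k$, which is permitted since the parameter is $s+\ell$. The paper's proof reduces from \textsc{Independent Set} with $s=k+2$ and $\ell=k$: each edge-agent $a_{u,v}$ starts with exactly three qualifiers (itself, $a_u$, $a_v$) and each dummy $\tilde a_{u,v}$ with one, so every target needs $k-1$ resp.\ $k$ \emph{new} qualifications, i.e., essentially all $\ell=k$ bribed agents must be agents that did \emph{not} already qualify it. The combinatorial structure is then encoded not by who \emph{can} qualify a target (everyone can) but by who \emph{already does}: if both $a_u$ and $a_v$ were bribed, $a_{u,v}$ could gain at most $k-2$ new qualifiers and would fall short of $s$, which forces the bribed vertex-agents to be pairwise non-adjacent. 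Your gadget skeleton (vertex-/edge-/twin-agents) matches the paper's, but until you replace the constant threshold by one of order $k$ and rebuild the backward direction around this ``already-qualifies'' counting, the reduction does not go through.
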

\begin{proof}
    We reduce from \textsc{Independent Set} $(G=(V,E),
    k)$, where the question is to decide whether there exists a set of
    pairwise non-adjacent vertices of size $k$ in the given graph $G$.
    \textsc{Independent Set} is W[1]-hard with respect to $k$.
	Given an instance of the problem, in the group identification
	problem, we insert for each vertex $v\in V$, one \emph{vertex-agent} $a_v$
	and, for each edge $\{u,v\}\in E$, one \emph{edge-agent} $a_{u,v}$ and one
	designated \emph{dummy agent} $\tilde{a}_{u,v}$.
	We set $A^+=\{a_{u,v},\tilde{a}_{u,v}\mid \{u,v\}\in E\}$.
	All dummy agents qualify only themselves.
	For each $\{u,v\}\in E$, $a_{u,v}$ qualifies $\tilde{a}_{u,v}$ and itself, while it is only qualified by the two agents $a_u$ and $a_v$.
  See \autoref{fig:fst-c-agent-t=1-s} for a visualization.
	We set $s=k+2$, $t=1$ and $\ell=k$.
	Note that all dummy agents need at least $k$ additional qualifications to be
socially qualified, while all edge-agents need at least $k-1$ additional
qualifications. We now prove that there exists a successful bribery of cost
at most $\ell$
in the constructed instance if and only if there exists an independent set of
size $k$ in $G$.

	$(\Rightarrow)$ Assume that $V'\subseteq V$ is an independent set of size
	$k$ in $G$. Then, we bribe all vertex-agents that correspond to the agents
	in $V'$ and let them qualify everyone. Thereby, every edge-agent gains at
	least $k-1$ additional qualifications, as every edge is only touched by at most one
	vertex in $V'$ and thus each edge-agent was qualified by at most one agent
	$a_v$
	with $v\in V'$ before the bribery. Moreover, as no vertex-agent qualifies a
	dummy agent, all dummy agents gain $k$ qualifications.

	$(\Leftarrow)$ Assume we are given a successful bribery $A'\subseteq A$
consisting of $k$ agents.
	We claim that only vertex-agents can be part of $A'$.
	Assuming that for some $\{u,v\}\in E$ either $a_{u,v}$ or $\tilde{a}_{u,v}$
are part of $A'$, then $\tilde{a}_{u,v}$ cannot be socially qualified after the
bribery, as it gained at most $k-1$ qualifications over the bribery.
	It additionally holds that, for each $\{u,v\}\in E$,  at most one of $a_v$ and $a_u$ can be part of $A'$, as $a_{u,v}$ gained at least $k-1$ additional qualifications by the bribery.
	Thereby, $A'$ consists of $k$ vertex-agents which initially do not both qualify the same edge-agent.
	From this it follows that $V'=\{v\in V\mid a_v\in A'\}$ is an independent 
	set of size $k$ in $G$.
\end{proof}

\medskip 

\noindent{\textit{Remark 1}} \quad
	In the conference version of this paper 
	\citep{DBLP:conf/ijcai/BoehmerBKL20}, we observed that it is vital for the 
	correctness of the reduction that no non-vertex-agent can be bribed. This 
	is ensured by the existence of dummy agents who are all missing exactly 
	$\ell$ qualifications to become socially qualified. Motivated by this, we 
	wrongly claimed that 
	\const \AgentBribery \fst with $t=1$ can be solved in 
	$g(s)\mathcal{O}(n^2)$ for some function $g$ if the number of agents that 
	are missing exactly $\ell$ qualifications to become socially qualified can 
	be bounded in a function of $s$. In fact, this is not the case and the 
	reduction from \autoref{thm:fst-c-agent-t=1-s} can be easily adapted to 
	show W[1]-hardness even if only one agent is missing exactly $\ell$ 
	qualifications: We modify the construction by inserting an additional agent 
	$a^*$ who 
	qualifies everyone except itself and add $a^*$ to $A^+$. Further, we set 
	$s=k+3$, $t=1$, and $\ell=k+1$. Note that we need to bribe $a^*$ to qualify 
	itself and that afterwards all dummy agents are missing $k$ qualifications 
	to become socially qualified 
	and that the remaining budget is $k$.

\bigskip

From Theorem~\ref{thm:fst-c-agent-t=1-s}, it follows that combining $\ell$ with
parameters $s$ and $t$ is not enough to achieve fixed-parameter tractability.
Even fixing~$t$ as constant and considering $\ell$~and $s$~as parameter remains hard.
The only hope left when considering these three parameters is to treat $s$ as a constant.
Doing this, it turns that parameterizing the problem by $\ell$ alone is not enough to achieve
fixed-parameter tractability even in the restricted case where $s=1$.
This can be shown by a reduction from \textsc{Dominating Set} where given a graph
$G=(V,E)$ and an integer $k$, the task is to decide whether there exists a
subset of vertices $V'\subseteq V$ of size at most $k$ such that each vertex
$v\in V$ is either part of $V'$ or adjacent to at least one vertex from $V'$.
\textsc{Dominating Set} parameterized by $k$ is W[2]-hard.
\begin{theorem}
	\label{thm:fst-c-agent-l}
	\const \AgentBribery \fst is W[2]-hard with respect to~$\ell$ even if~$s=1$.
\end{theorem}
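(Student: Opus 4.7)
The plan is to give a parameterized reduction from \textsc{Dominating Set}, which is W[2]-hard with respect to the solution size $k$, to \const \AgentBribery{\fst} with $s=1$, arranging things so that the resulting budget $\ell$ equals $k$. Given an instance $(G=(V,E),k)$ of \textsc{Dominating Set}, assume without loss of generality that $G$ has no isolated vertex, and let $\Delta := \max_{v \in V} \deg(v)$. For each $v \in V$, introduce a vertex-agent $a_v$ and place it in $A^+$; additionally, for each $v \in V$, introduce $\Delta+1-|N[v]|$ padding agents $p_v^{(i)}$ (none of which is in $A^+$). Let each $a_v$ disqualify exactly the agents $\{a_u : u \in N[v]\}$ and qualify everyone else (in particular, $a_v$ disqualifies itself, since $v \in N[v]$), and let each $p_v^{(i)}$ disqualify only $a_v$ and qualify everyone else. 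Finally, set $t := \Delta+1$ and $\ell := k$. By construction $|Q^-(a_v)| = |N[v]| + (\Delta+1-|N[v]|) = t$ for every $v \in V$, so every $a_v \in A^+$ is initially socially disqualified.

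For the forward direction, given a dominating set $D \subseteq V$ of size $k$, the briber bribes $a_v$ for every $v \in D$ and makes each such $a_v$ qualify all agents. For every $u \in V$, some $v \in D \cap N[u]$ exists (as $D$ dominates $V$) and $a_v$ has flipped its original disqualification of $a_u$ into a qualification; hence $|Q^-(a_u)| \le t-1$ and $a_u$ becomes socially qualified. For the backward direction, suppose $X$ is a successful bribery with $|X| \le k$, and define
\[
D := \{\, v \in V : a_v \in X \,\} \cup \{\, v \in V : p_v^{(i)} \in X \text{ for some } i \,\}.
\]
Clearly $|D| \le |X| \le k$. Fix any $u \in V$. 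Since $a_u \in A^+$ is socially qualified in the bribed profile under $s = 1$, either $a_u \in X$ and has been bribed to qualify itself, or $|Q^-(a_u)| < t$ in the bribed profile. In the first case $u \in D$. In the second case, at least one of the initial disqualifiers of $a_u$, which are exactly the agents in $\{a_v : v \in N[u]\} \cup \{p_u^{(i)}\}$, must lie in $X$ and have been bribed to qualify $a_u$; thus either $v \in D$ for some $v \in N[u]$ (giving $u \in N[v]$, so $v$ dominates $u$) or $u \in D$ directly via some $p_u^{(i)} \in X$. In all cases $u$ is dominated by $D$, so $D$ is a dominating set of size at most $k$.

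The main conceptual subtlety is handling ``mixed'' briberies that combine vertex-agent bribes with padding bribes in the backward direction. The key trick is that a padding bribe for $v$ simply contributes $v$ itself to $D$, which is legitimate because $v$ trivially dominates itself; this corresponds to the standard observation that augmenting a set-cover system by the singleton sets $\{u\}$ never decreases the minimum cover, since each such $u$ lies in some original set (here $u \in N[u]$). Since $\ell = k$ depends only on the parameter of the \textsc{Dominating Set} input, the reduction is parameterized, and W[2]-hardness transfers to \const \AgentBribery{\fst} with respect to $\ell$ even when $s=1$.
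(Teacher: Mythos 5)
Your proposal is correct and is essentially the paper's own reduction: the same vertex-agents disqualifying their closed neighborhoods, the same degree-padding dummies, and the same choice $s=1$, $t=\Delta+1$, $\ell=k$. The only (cosmetic) difference is in the backward direction, where you map bribed padding agents directly into the dominating set via self-domination instead of the paper's ``without loss of generality, no dummy agent is bribed'' argument; both are equally valid.
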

\begin{proof}
	We prove this theorem by a reduction from \textsc{Dominating Set}.  Given a
	undirected graph
	$G=(V,E)$, for $v\in V$, let $d(v)$ denote the degree of $v$
	in $G$ and let $d^*=\max_{v\in V} d(v)$ denote the maximum degree of a
	vertex in $G$. We construct a corresponding group identification problem as
	follows.

	For each $v\in V$, we introduce a so-called \emph{vertex-agent} $a_{v}$ who
	qualifies everyone except itself and the agents corresponding to its
	neighbors in $G$. Moreover, for each $v\in V$, we introduce $d^{*}-d(v)$
	dummy agents qualifying everyone except $v$. We set $A^+$ to be the set of
	all
	vertex-agents, $s=1$, $t=d^{*}+1$ and $\ell=k$. Initially, no vertex-agent
	is socially qualified, as each agent $a_v$ disqualifies itself and is
	disqualified
	by its $d(v)$ neighbors and $d^*-d(v)$ designated dummy agents. During the
	bribing, each agent needs to gain one additional qualification either from
	itself or from another agent to be included in the set of socially
	qualified agents. We now prove that there exists a successful bribery of
cost
at most $\ell$
in the constructed instance if and only if there exists a dominating set of
size at most $k$ in $G$.

	$(\Rightarrow)$ Assume that $V'\subseteq V$ is a dominating set of $G$.
Then,
	we bribe all vertex-agents corresponding to vertices in $V'$ such that they
	qualify everyone. Then, all agents in $V'$ are socially qualified as they
	qualify themselves. Moreover, for each agent $a_v$ with $v\in V\setminus
V'$, at least
	one of the vertex-agents corresponding to one of $v$'s neighbors no
	longer disqualify $a_v$. Thus, $a_v$ is disqualified by itself and by at
	most
	$d^*-1$ other agents and is, thereby, socially qualified.

	$(\Leftarrow)$ Assume that we have found a successful bribery bribing
agents
	$A'\subseteq A$. Then, without loss of generality, we can assume that no
	dummy agent is part of
	$A'$, as instead of bribing a dummy agent it is always possible to bribe
	the corresponding vertex-agent. Let $V'=\{v\in V\mid a_v\in A'\}$. As every
	vertex-agent $a_v$ needs to either qualify itself or gain an additional
qualification
	by at least one vertex agent corresponding to one of $v$'s neighbors
to be socially qualified in the end, $V'$
	is a dominating set of $G$.
\end{proof}

Parameterized by $\ell+t$, however, while keeping $s$ as a
constant, the problem becomes, in fact, fixed-parameter tractable:
\begin{theorem}
    \label{th:fst-lt-fpt}
    \const \pAgentBribery{\fst} is FPT with respect to  $\ell+t$ when $s$ is a constant.
\end{theorem}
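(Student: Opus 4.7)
The plan is to reformulate the problem as a choice, for each $a \in A^+$, between two possible satisfying configurations under \fst, and then to show that once the bribery set $B$ is large enough, one of these configurations becomes automatic. Recall that $a$ is socially qualified in the post-bribery profile $\varphi'$ iff either (1a) $\varphi'(a,a)=1$ and $|Q^+_{\varphi'}(a)| \geq s$, or (2b) $\varphi'(a,a)=-1$ and $|Q^-_{\varphi'}(a)| < t$. Because every bribed agent can adopt an arbitrary profile, the task reduces to finding a set $B \subseteq A$ with $\rho(B) \leq \ell$ such that for each $a \in A^+$ at least one of the following holds: (1a) $(\varphi(a,a)=1 \text{ or } a \in B)$ together with $|Q^+(a) \setminus (B \cup \{a\})| + |B \setminus \{a\}| \geq s-1$; or (2b) $(\varphi(a,a)=-1 \text{ or } a \in B)$ together with $|D(a) \setminus B| \leq t-2$, writing $D(a) := Q^-(a) \setminus \{a\}$.

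I then split into two cases based on $|B|$. Case~A: if $|B| \leq s-1$, since $s$ is constant I enumerate the $O(n^{s-1})$ candidate sets in polynomial time and verify feasibility directly for each. Case~B: if $|B| \geq s$, then $|B \setminus \{a\}| \geq s-1$ for every $a$, so the numeric threshold in (1a) is \emph{automatically} satisfied. This key observation collapses (1a) to ``$\varphi(a,a)=1$ or $a \in B$'', which is free for self-qualifying agents. Writing $X := \{a \in A : \varphi(a,a) = -1\}$, the only remaining constraints are that for every $a \in A^+ \cap X$ either $a \in B$ or $|D(a) \setminus B| \leq t-2$.

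Case~B is handled by branching. While there exists $a \in A^+ \cap X$ with $a \notin B$ and $|D(a) \setminus B| \geq t-1$, I pick arbitrary $x_1, \ldots, x_{t-1} \in D(a) \setminus B$ and branch into $t$ options: add $a$ to $B$, or add some $x_i$ to $B$. Correctness follows from the observation that any feasible solution $B^*$ extending the current $B$ must contain $a$ or some $x_i$; otherwise $\{x_1, \ldots, x_{t-1}\} \subseteq D(a) \setminus B^*$ together with $a \notin B^*$ would force $|D(a) \setminus B^*| \geq t-1$, contradicting feasibility of $B^*$. Since each branch adds a new agent of cost at least $1$, the branching depth is bounded by $\ell$, producing an $O(t^\ell)$ search tree. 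At a leaf where $|B| < s \leq \ell$, I pad $B$ with the $s - |B|$ cheapest agents outside $B$ to enforce $|B| \geq s$; the invariant $B \subseteq B^*$ maintained throughout branching ensures that this cheapest padding costs at most $\rho(B^* \setminus B)$, so the total padded cost remains at most $\rho(B^*) \leq \ell$.

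The main obstacle is establishing the automatic satisfaction of the numeric (1a) threshold once $|B| \geq s$ and then carefully handling the padding step in the priced setting. Once these two points are settled, the problem cleanly decomposes into a polynomial enumeration (Case~A) plus a bounded branching of depth $\ell$ and width $t$ on the set-cover-like (2b) constraints (Case~B), giving an overall running time of $O(t^\ell \cdot \mathrm{poly}(n))$ and hence fixed-parameter tractability with respect to $\ell + t$.
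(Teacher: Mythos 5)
Your proof is correct and reaches the same conclusion via a noticeably cleaner decomposition than the paper's. The shared core is identical: a brute-force enumeration over small bribery sets (feasible because $s$ is constant) plus a bounded-depth branching on the disqualifiers of a not-yet-satisfied self-disqualifying agent in $A^+$, with depth at most $\ell$ because prices are positive integers. Where you diverge is in how you handle the priced version's central difficulty, namely that a budget of $\ell$ no longer guarantees that $s$ agents can be bribed (which is what makes the $|Q^+(a)|\ge s$ condition automatic in the unpriced case). The paper resolves this with a layered case analysis --- $\ell<s$ vs.\ $\ell\ge s$, then $n\le s+\ell+t$ vs.\ $n> s+\ell+t$, then splitting $A^+$ into $A^+_{\le\ell}$ and $A^+_{>\ell}$ and iterating over all size-$\le s$ subsets of $A$ to handle the latter --- before invoking its branching routine (whose branching factor is $\ell+t$ rather than your $t$). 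You instead split on the size of the solution itself: either $|B|\le s-1$, handled by direct enumeration in $O(n^{s-1})\cdot\mathrm{poly}(n)$ time, or $|B|\ge s$, in which case the (1a) threshold collapses and only the hitting-set-like (2b) constraints remain; your final padding step, justified by the invariant $B\subseteq B^*$ and the bound $\rho(\text{cheapest } s-|B| \text{ agents outside } B)\le\rho(B^*\setminus B)$, replaces the paper's entire $A^+_{>\ell}$ machinery. Both arguments are sound; yours buys a shorter correctness proof and a slightly better search tree ($t^\ell$ vs.\ $(\ell+t)^\ell$), while the paper's explicit treatment of the forced bribes in $A^+_{>\ell}$ makes those obligations more transparent.
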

\begin{proof}
We first prove the theorem for \const \AgentBribery{\fst} and afterwards
explain how to extend the algorithm for the priced version of this problem.
    Let $A'\subseteq A$ denote the subset of agents that we want to bribe. Note
that it is always optimal to bribe all agents from $A'$ to qualify everyone
(including themselves). We distinguish the case where $\ell<s$  and $\ell\geq
s$. In the former case, we simply iterate over all $\ell$-subsets of $A$,
bribe them to qualify everyone and check whether the bribery was successful.
This leads to an overall running time of
$\mathcal{O}(n^{s+2})$. It remains to consider the latter case where $\ell
\geq s$. Here, every agent qualifying itself will be socially qualified in the
end because we bribe all agents from $A'$ to qualify everyone and, thereby,
each agent will be qualified by at least $s$ agents after the bribery. Thus,
this problem
reduces to the case $s=1$ for which we describe the algorithm in the following.

For each $a\in A$, let
$y_a=\max(0,|Q^-(a)|-(t-1))$ denote the number of additional qualifications $a$
needs to get to become socially qualified without qualifying itself. For every
$a\in A^+$ with $y_a>\ell$, it needs to hold that $a\in A'$. More generally, for
all $a\in A^+$, either $a\in A'$ or a $y_a$-subset of $Q^-(a)$ needs to be part
of $A'$. These ideas give rise to Algorithm \ref{alg:algorithm2}, which we call
as CalcB($A$, $\varphi$, $A^+$, $\emptyset$, $\ell$). This
algorithm identifies at each step all agents that need to qualify
themselves because the remaining budget is not sufficient otherwise and all
agents that are already socially qualified after bribing the agents from $A'$.
Subsequently, an arbitrary agent  from $A^+$ which is not already socially
qualified after
bribing the agents from $A'$ is selected and we branch over bribing it or one
of the agents disqualifying it.

If the algorithm returns a solution, this solution is clearly correct, as
agents from $A^+$ are only deleted if they become socially qualified after
bribing the agents from $A'$ and we correctly keep track of the remaining
budget. Thus, it remains to prove that if there exists at least one
solution, the
algorithm will always find one: Assume that $S\subseteq A$ is a subset of
agents such that after making all agents from $S$ qualify everyone, all agents
from $A^+$ are socially qualified. We
now argue that there exists at least one branch in the execution of the
algorithm where it always holds that $A'\subseteq S$. We prove this by
induction on the size of $A'$ during the run of the algorithm. Before inserting
any agents in $A'$, the condition is trivially fulfilled. Let us assume that
$A'$ contains $i$ agents and it holds that $A'\subseteq S$. There
exist two possibilities where the $i+1$th agent $\tilde{a}$ is added to $A'$ in
the algorithm. Assuming that $\tilde{a}$ is added to $A'$ in line 2, then it
needs to hold that $\tilde{a}$ is also part of $S$ because assuming that all
agents from $A'$ are bribed, the remaining budget is not high enough to bribe
enough agents such that $\tilde{a}$ gets socially qualified in the end without
qualifying itself. Otherwise, $\tilde{a}$ gets inserted to $A'$ in line 9.
In line 9, we branch over inserting a candidate $a^*$ who is still missing at
least one qualification to get socially qualified after bribing the agents from
$A'$ and bribing one of the agents disqualifying $a^*$. As we have assumed
that $A'\subseteq S$, it needs to hold that either $a^*$ or an agent
disqualifying $a^*$ is part of $S\setminus A'$, as otherwise $a^*$ cannot be
socially qualified after bribing the agents from $S$. Thus, for at least one
branch, it still holds that $A'\subseteq S$ after inserting the
$i+1$th agent in $A'$.

The depth of the recursion is bounded by $\ell$. Moreover, the branching factor
is bounded by $|Q^-(a^*)|+1$. As for every
$a^*$ it needs to hold that $y_a\leq
\ell$ (because otherwise $a^*$ would have been deleted from $A^+$ in line
2), it follows that $|Q^-(a^*)|\leq \ell +(t-1)$. Thereby, the overall
running time of the algorithm lies in $\mathcal{O}(n^2(\ell+t)^\ell)$.

     	\begin{algorithm}[t]
 		\caption{CalcB($A$, $\varphi$, $A^+$, $A'$, $p$, $y_{a_1}$, \dots,
$y_{a_n}$)}
 		\label{alg:algorithm2}
 		\begin{algorithmic}[1] 
 		\Require  Agents $A$, qualification profile $\varphi$, subset of agents
 		$A^+$ and $A'$, remaining budget $p$
 		\Ensure Set of agents $A'$ to bribe
 		\For{$a\in A^+$ with $y_a-|A'\cap Q^-_{\varphi}(a)|>p$}
             \State $A'=A'\cup \{a\}$; $A^+=A^+\setminus \{a\};$ $p=p-1$;
        \EndFor
         \For{$a\in A^+$ with $a\in A'$ or $|A'\cap Q^-_{\varphi}(a)|\geq y_a$}
             \State $A^+=A^+\setminus \{ a\}$;
        \EndFor
 			\IIf {$p<0$} \textbf{return} Reject
 			\EndIIf
 			\IIf {$A^+=\emptyset$}
 			\textbf{return} $A'$
 			\EndIIf

 			\State Pick an arbitrary $a^*\in A^+$
 			\For{$a\in \{a^*\}\cup (Q^-_{\varphi}(a^*)\setminus A')$}
             \State \textbf{return} CalcB($A$,$\varphi$, $A^+$, $A'\cup \{a\}$,
$p-1$, $y_{a_1}$, \dots,
$y_{a_n}$)
             \EndFor
 	 		\end{algorithmic} \label{alg}
     	\end{algorithm}

We now show the theorem for \const \pAgentBribery{\fst}.
We still distinguish the case where $\ell < s$ and $\ell \ge s$.
In the former case, we simply iterate over all subsets of $A$ that cost at most $\ell$,  bribe them to qualify everyone and check whether the bribery was successful.
Since each agent has a positive integer price, the size of subsets of $A$ that
cost at most $\ell$ is upper bounded by $\ell<s$, and hence, the running time
is $\mathcal{O}(n^{s+2})$.

In the latter case where $\ell \ge s$, compared to the unpriced version, it
does not hold anymore that every agent qualifying itself will be socially
qualified in the end because with budget $\ell$ it is not guaranteed that we
can bribe at least $s$ agents.
So the main problem for \const \pAgentBribery{\fst} is how to guarantee that if an agent $a \in A^+$ is bribed, then $a$ will be socially qualified at the end.
Let $A^+_{\le \ell}=\{a \in A^+ \mid y_a \le \ell\}$ be the set of agents from
$A^+$ that need at most $\ell$ additional qualifications to become socially
qualified and $A^+_{> \ell}=\{a \in
A^+ \mid y_a > \ell\}$ the set of agents from $A^+$ that need more than $\ell$
additional qualifications to become socially qualified.

We distinguish the case where $n\leq s+\ell + t$ and where $n> s+ \ell
+ t$. In the former cases,  since $s$ is a constant, we have
$n=\mathcal{O}(\ell+t)$ and we can brute force through all subsets of $A$ that
cost at most $\ell$, bribe them to qualify everyone and check whether the
bribery was successful.
This leads to an overall running time of $\mathcal{O}((\ell+t)^22^{\ell+t})$.

In the latter case, a more involved approach is needed. We claim that for every
$a \in A^+_{\le \ell}$, if $a$ qualifies itself, then $a$ will be socially
qualified.
Notice that for $a \in A^+_{\le \ell}$, $n=|Q^+(a)|+|Q^-(a)|\le
|Q^+(a)|+y_a+(t-1) \le |Q^+(a)|+\ell+t$. Thus, as we are in the case with $n>
s+\ell  + t$, we can assume that for every $a \in A^+_{\le \ell}$, $|Q^+(a)|
\ge s$, which implies that $a$ will be socially qualified if $a$ qualifies
itself.
Next, for agents in $A^+_{> \ell}$, we have to bribe all of them because the budget is not enough to make them socially qualified if any one of them is not bribed.
However, after doing so, it is not necessarily the case that all agents from
$A^+_{>
\ell}$ are already socially qualified.
To ensure that all agents from $A^+_{>
	\ell}$ are socially qualified, we can iterate over all subsets of $A$ of
	size at most $s$, bribe them and then
apply Algorithm \ref{alg:algorithm2} with appropriate adjustments. To
summarize, this leads to the following algorithm:
First, we bribe all agents in $A^+_{> \ell}$ that do not qualify themselves,
set $A'$ to be the set of these
agents and modify the budget $\ell=\ell-\rho(A')$.
If all agents from $A^+_{> \ell}$ are already socially qualified after this
bribery, we
call
Algorithm \ref{alg:algorithm2} as CalcB($A$, $\varphi$, $A^+_{\le \ell}$, $A'$,
$\ell$)
with the following minor modifications: Delete
lines 1 and 2 (as we already know in advance that the branching factor in line
8 is bounded), and in line 9 return CalcB($A$, $\varphi$, $A^+$, $A'\cup
\{a\}$,
$p-\rho(a)$).
If not, we iterate over all subsets of $A$ with size at most $s$ and with cost
at most $\ell$, bribe them and check whether all agents in $A^+_{> \ell}$ are
socially qualified afterwards.
For each subset such that after bribing agents from this subset all agents from
$A^+_{> \ell}$ are socially qualified, we do the following: We add the agents 
from this subset to $A'$ and modify the
budget $\ell$
accordingly, and apply Algorithm \ref{alg:algorithm2} as CalcB($A$,
$\varphi$, $A^+_{\le \ell}$, $A'$, $\ell$) with the same
modifications described above.

If the above algorithm returns a solution, this solution is clearly correct.
Conversely, assume that $S\subseteq A$ is a subset of agents such that after making all agents from $S$ qualify everyone, all agents from $A^+$ are socially qualified.
First of all, it obviously needs to hold that $A^+_{> \ell} \subseteq S$.
Next, let $s'=\min\{|S \setminus A^+_{> \ell}|, s\}$ and let $S'$ be a set of agents such that $A^+_{> \ell} \subseteq S' \subseteq S$ and $|S' \setminus A^+_{> \ell}|=s'$.
It is easy to verify that $S'$ is a subset of $A$ with size at most $s$ and with cost at most $\ell$, and all agents in $A^+_{> \ell}$ will be socially qualified if we bribe all agents in $S'$.
Therefore, the above algorithm will apply Algorithm \ref{alg:algorithm2} after bribing
all agents in $S'$ and Algorithm \ref{alg:algorithm2} will find $S$ (or another
successful bribery containing $S'$).

Iterating over all subsets of $A$ with size at most $s$ and checking whether
all agents from $A^+_{> \ell}$ are socially qualified costs
$\mathcal{O}(n^{s+2})$.
Therefore, the overall running time lies in $\mathcal{O}(n^{s+4}(\ell+t)^\ell)$.
\end{proof}

\subsubsection{Parameterized complexity with respect to $|A^+|$}
Finally, analyzing the influence of the number of agents that should be made
socially qualified on the complexity of the problem, it turns out that
restricting this parameter is more powerful than restricting $\ell$, as the
problem is FPT with respect to~$\left| A^+ \right|$ for arbitrary $s$ and $t$.
We solve the problem by constructing an Integer Linear Program (ILP).
\begin{theorem}
	\label{thm:fst-c-agent-sizeAplus}
	\const \pAgentBribery{\fst} is FPT with respect to~$\left| A^+ \right|$.
\end{theorem}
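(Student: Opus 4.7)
My plan is to encode the problem as an Integer Linear Program (ILP) whose number of variables depends only on $m:=|A^+|$ and then invoke Lenstra's theorem to obtain fixed-parameter tractability. The key observation is that, for the purpose of making the agents of $A^+$ socially qualified, two agents from $A\setminus A^+$ that initially share the same qualification profile on $A^+$ are completely interchangeable. Since there are only $2^m$ such profiles, I would bucket the agents outside $A^+$ by their initial profile $\pi\colon A^+\to\{-1,+1\}$ and work with profile-aggregated variables. First, I would enumerate the $2^m$ possible self-qualification patterns $d\colon A^+\to\{\text{A},\text{B}\}$, where $d_a=\text{A}$ asserts that in the final profile $\varphi'$ the agent $a$ qualifies itself (so $a$ needs $|Q^+_{\varphi'}(a)|\geq s$) and $d_a=\text{B}$ asserts that $a$ disqualifies itself (so $a$ needs $|Q^-_{\varphi'}(a)|\leq t-1$); after this enumeration each $d_a$ is a known constant of the ILP.

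For each fixed pattern $d$ I would introduce the following variables: binary $y_a$ for $a\in A^+$ (whether $a$ is bribed), binary $x_{a,a'}$ for every ordered pair $a\neq a'$ in $A^+$ (whether $a$ qualifies $a'$ after the bribery), and nonnegative integers $b_{\pi,\pi'}$ for every pair of profiles $\pi,\pi'\in\{-1,+1\}^{A^+}$ counting how many agents outside $A^+$ of initial type $\pi$ are bribed to the final profile $\pi'$. This gives $O(4^m+m^2)$ variables, which depends only on $m$. The constraints split into three groups. Supply: $\sum_{\pi'}b_{\pi,\pi'}\leq n_\pi^-$, where $n_\pi^-$ is the number of outside agents with initial profile $\pi$. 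Linkage: for each $a\in A^+$ and each $a'\in A^+\setminus\{a\}$, a single inequality of the form $y_a\geq x_{a,a'}$ or $y_a\geq 1-x_{a,a'}$ (depending on the initial value of $\varphi(a,a')$) forces $y_a=1$ whenever $a$'s profile on $A^+$ is altered; whenever $d_a\neq \varphi(a,a)$ we additionally hard-code $y_a=1$. Counting: abbreviating by $c_{\pi'}:=n_{\pi'}^--\sum_{\pi''}b_{\pi',\pi''}+\sum_{\pi}b_{\pi,\pi'}$ the final number of outside agents with profile $\pi'$, for each $a\in A^+$ I would impose $\mathds{1}[d_a=\text{A}]+\sum_{a'\neq a}x_{a',a}+\sum_{\pi'(a)=+1}c_{\pi'}\geq s$ if $d_a=\text{A}$, and the analogous $\leq t-1$ inequality on disqualifiers if $d_a=\text{B}$.

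The main obstacle is the objective in the \emph{priced} variant. In the unpriced version the cost is simply $\sum_{a\in A^+}y_a+\sum_{\pi\neq\pi'}b_{\pi,\pi'}$, which is already linear. In the priced version, however, the cost of bribing $k$ agents of a single type $\pi$ is the convex piecewise linear function $P_\pi(k)=\sum_{i=1}^{k}p_{\pi,(i)}$ obtained by greedily bribing the cheapest agents of that type first (with the prices sorted as $p_{\pi,(1)}\leq p_{\pi,(2)}\leq\cdots$); this is convex because the increments $p_{\pi,(k+1)}$ are non-decreasing in $k$. I would handle this by introducing auxiliary nonnegative variables $c_\pi$ together with the tangent inequalities $c_\pi\geq P_\pi(k)+p_{\pi,(k+1)}\bigl(b_\pi-k\bigr)$ for $k=0,\ldots,n_\pi^--1$, where $b_\pi:=\sum_{\pi'}b_{\pi,\pi'}$. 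Since $P_\pi$ is convex, these tangent inequalities pin $c_\pi$ to $P_\pi(b_\pi)$ at the optimum, so the objective $\sum_{a\in A^+}\rho(a)y_a+\sum_\pi c_\pi$ becomes a linear function of the ILP variables.

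Finally, I would feed the resulting ILP to Lenstra's algorithm. The polynomial number of tangent and linkage constraints is not an issue because Lenstra's theorem is FPT in the number of variables and only polynomial in the number of constraints and in the encoding length of the coefficients (which are polynomially bounded by assumption on $\rho$). Repeating the ILP call for each of the $2^m$ guesses of $d$ and taking the minimum cost then accepting iff it is at most $\ell$ yields the claimed $f(m)\cdot\mathrm{poly}(n)$ algorithm for \const \pAgentBribery{\fst}.
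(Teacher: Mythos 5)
Your proof is correct, but it takes a genuinely different route from the paper's. You bucket the agents outside $A^+$ by their qualification profile restricted to $A^+$ and build an ILP with $O(4^{|A^+|})$ \emph{variables}, solved by Lenstra's algorithm; the pricing is absorbed by linearizing the convex function ``cost of bribing the $k$ cheapest agents of a type'' via tangent inequalities, which is exact at integer points precisely because sorting prices makes the increments non-decreasing. The paper instead keeps one binary variable per agent in $A\setminus A^+$ and exploits that the resulting ILP has only $|A^+|+1$ \emph{constraints} with a binary constraint matrix, so the algorithm of Eisenbrand and Weismantel applies and yields running time $2^{O(|A^+|^2)}$ per guess. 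The paper also sidesteps two pieces of your bookkeeping: it observes that it is never rational to bribe an agent of $A^+$ to disqualify itself (using $s+t\le n+2$), so it only guesses \emph{which} agents of $A^+$ are bribed rather than all $2^{|A^+|}$ self-qualification patterns, and it always makes every bribed agent qualify everyone, so no target-profile variables $b_{\pi,\pi'}$ are needed and the priced objective is directly linear. What your approach buys is independence from the Eisenbrand--Weismantel machinery (Lenstra suffices) and a template that would survive settings where the constraint matrix is not binary; what it costs is a considerably worse dependence on $|A^+|$ (roughly $2^{O(|A^+|)}$ guesses times an ILP with $4^{|A^+|}$ variables) and some notational care --- in particular you use $c_{\pi'}$ both for the final count of outside agents with profile $\pi'$ and for the auxiliary cost variables in the tangent inequalities, which you should disentangle in a final write-up.
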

\begin{proof}
	We first guess the subset $\widetilde{A}\subseteq A^+$ of agents from
	$A^+$ which we want to bribe. We bribe all agents from $\widetilde{A}$
    to qualify everyone (including themselves) and adjust $\ell$ and $\varphi$,
    accordingly.
	Note that in agent bribery it is never rational to bribe an agent from $A^+$ to disqualify itself, as we assume that  $s+t \le n+2$, and hence an agent needs less or as many qualifications ($s-1 \le n-(t-1)$) from other agents to be socially qualified if it qualifies itself compared to if it disqualifies itself.
	Now the problem is how to bribe the agents in $A \setminus A^{+}$ to make
	all agents in $A^{+}$ socially qualified.
	We reduce this to an ILP with $\left| A^+\right| +1 $ constraints.
	For every agent~$a \in A \setminus A^{+}$, we introduce a variable $x_a \in
\{0,1\}$ (with $x_a = 1$ if and only if we are going to bribe~$a$).
	We have two types of conditions for agents in $A^+$ based on whether they qualify themselves:
	\begin{align*}
	\sum_{a' \in A\setminus A^{+}: \varphi(a',a) = -1} x_{a'} &\geq s -
|Q^+_{\varphi}(a)|
&\forall a \in A^{+}: \varphi(a,a)=1  \\
	\sum_{a' \in A\setminus A^{+}: \varphi(a',a) = -1} x_{a'} &\geq
|Q^-_{\varphi}(a)| - (t - 1)    & \forall a  \in A^{+}: \varphi(a,a)=-1
	\end{align*}
	Finally, we set the objective function
	\[
	  \min \sum_{a \in A\setminus A^{+}} \rho(a) \cdot x_a \,.
	\]
	If the above ILP is feasible and the value of the objective function is at most $\ell$, then there exists a successful bribery (given
by $x_a = 1$): We bribe all $a\in A\setminus A^{+}$ with $x_a=1$ to qualify
everyone.
	In the bribed instance, for every agent $a \in A^{+}$ with
$\varphi(a,a)=1$, we have that at least $s$ agents
qualify~$a$ making $a$ socially qualified, and for every agent $a \in A^{+}$
with
$\varphi(a,a)=-1$, at most $t - 1$ agents disqualify~$a$ making $a$ socially
qualified.
If the above ILP is not feasible, then there is clearly no successful bribery
which
bribes exactly the agents $\widetilde{A}$ from $A^+$.
We can guess all $2^{|A^{+}|}$ subsets of $A^{+}$, and for each guess use the
algorithm of \citet{DBLP:conf/soda/EisenbrandW18} to solve the corresponding
ILP in time $2^{\mathcal{O}(|A^+|^2)}$, since the constraint matrix is indeed 
binary (and contains both lower and upper-bounds on variables).
\end{proof}
\subsection{Agent Bribery for Other Bribery Goals}  \label{sub:itdlb}
To apply results obtained for \const \pnAgentBribery{\fst} to the other
considered bribery goals, we use \citet[Lemma~1]{DBLP:journals/aamas/ErdelyiRY20}:
\begin{lemma}[\cite{DBLP:journals/aamas/ErdelyiRY20}] \label{le:1}
	$f^{(s,t)}(\varphi,A)=A\setminus f^{(t,s)}(-\varphi, A)$, where $-\varphi$
	is obtained from $\varphi$ by flipping all values of $\varphi$.
\end{lemma}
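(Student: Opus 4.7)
The plan is to prove set equality by verifying, for each agent $a \in A$, the equivalence
\[ a \in f^{(s,t)}(A, \varphi) \iff a \notin f^{(t,s)}(A, -\varphi). \]
The two key observations I would first record are: (i) flipping $\varphi$ flips the self-qualification of every agent, so $-\varphi(a,a) = 1$ exactly when $\varphi(a,a) = -1$; and (ii) the sets of qualifiers swap roles, i.e.\ $Q^+_{-\varphi}(a) = Q^-_\varphi(a)$ and $Q^-_{-\varphi}(a) = Q^+_\varphi(a)$. These two facts are immediate from the definitions.

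Given these, I would do a case analysis on the sign of $\varphi(a,a)$. In the case $\varphi(a,a) = 1$, the rule $f^{(s,t)}$ applied to $\varphi$ classifies $a$ as socially qualified iff $|Q^+_\varphi(a)| \geq s$. On the other side, since $-\varphi(a,a) = -1$, the rule $f^{(t,s)}$ applied to $-\varphi$ classifies $a$ as socially disqualified iff $|Q^-_{-\varphi}(a)| \geq s$, which by (ii) is iff $|Q^+_\varphi(a)| \geq s$. So in this case both sides agree. The case $\varphi(a,a) = -1$ is handled symmetrically: $a \in f^{(s,t)}(A, \varphi)$ iff $|Q^-_\varphi(a)| < t$, while $a \notin f^{(t,s)}(A, -\varphi)$ iff the disqualification condition fails, which becomes $|Q^+_{-\varphi}(a)| < t$, i.e.\ $|Q^-_\varphi(a)| < t$ by (ii).

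The entire argument is a bookkeeping exercise, and the only minor obstacle I anticipate is remembering that the parameter swap $(s,t) \mapsto (t,s)$ happens simultaneously with the branch swap induced by the sign flip of $\varphi(a,a)$. Once both swaps are tracked together and combined with observation (ii), the four micro-cases line up in pairs and the equivalence falls out by inspection. The side condition $s + t \leq n + 2$ plays no role in this particular lemma; it is only needed to guarantee the rules themselves behave monotonically with respect to self-qualification, which is not used here.
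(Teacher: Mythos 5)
Your proof is correct. The paper itself does not prove this lemma---it is imported verbatim from \citet{DBLP:journals/aamas/ErdelyiRY20}---and your element-wise case analysis on the sign of $\varphi(a,a)$, combined with the observation that $Q^{+}_{-\varphi}(a)=Q^{-}_{\varphi}(a)$ and vice versa, is exactly the standard argument. One cosmetic slip: in the case $\varphi(a,a)=-1$ the agent \emph{qualifies} itself under $-\varphi$, so the condition that fails is the qualification condition of $f^{(t,s)}$ (threshold $t$ on $|Q^{+}_{-\varphi}(a)|$), not the disqualification condition; the inequality you wrote, $|Q^{+}_{-\varphi}(a)|<t$, is nevertheless the right one, so the argument stands.
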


By \autoref{le:1},  results from above
naturally extend to \dest \AgentBribery{\fst} where $s$ and $t$ switch
roles:
\begin{corollary}
	\dest \AgentBribery{\fst} is W[1]-hard with respect to $t+\ell$ even if
	$s=1$ and W[2]-hard with respect to $\ell$ even if $t=1$. \dest
	\pAgentBribery{\fst} is FPT with respect to $\ell+s$ (treating $t$ as a
	constant) and FPT with respect to $|A^-|$.
\end{corollary}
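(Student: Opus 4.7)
The plan is to invoke \autoref{le:1} to set up a one-to-one correspondence between instances of destructive agent bribery under \fst and instances of constructive agent bribery under $f^{(t,s)}$, and then transport each of the four previously proven parameterized results across this correspondence. Concretely, given an instance $(A,\varphi,A^-,\ell)$ of \dest \pAgentBribery{\fst} (with price function $\rho$ in the priced case), I would form the instance $(A,-\varphi,A^+:=A^-,\ell)$ of \const \pAgentBribery{f^{(t,s)}} with the same price function $\rho$. By \autoref{le:1}, an agent $a$ is socially disqualified under \fst in profile $\varphi'$ iff $a$ is socially qualified under $f^{(t,s)}$ in profile $-\varphi'$; hence $A^-\subseteq A\setminus f^{(s,t)}(A,\varphi')$ is equivalent to $A^-\subseteq f^{(t,s)}(A,-\varphi')$.

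Next I would check that agent bribery actions and their costs are preserved under this translation. A single agent bribery action replaces agent~$a$'s whole row of qualifications, and flipping all entries of that row in $\varphi$ corresponds bijectively to flipping all entries in $-\varphi$; the cost $\rho(a)$ of bribing $a$ is the same on both sides. So a bribery of cost at most $\ell$ solving the destructive instance under \fst exists if and only if one exists for the constructive instance under $f^{(t,s)}$. This gives a polynomial-time parameter-preserving reduction in both directions that simply swaps the roles of $s$ and $t$ and replaces $A^+$ with $A^-$.

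Finally, I would plug in the four previously established results. \autoref{thm:fst-c-agent-t=1-s} gives W[1]-hardness of \const \AgentBribery{\fst} w.r.t.\ $s+\ell$ even if $t=1$; swapping $s\leftrightarrow t$ yields W[1]-hardness of \dest \AgentBribery{\fst} w.r.t.\ $t+\ell$ even if $s=1$. \autoref{thm:fst-c-agent-l} gives W[2]-hardness w.r.t.\ $\ell$ for $s=1$, which transfers to W[2]-hardness of \dest \AgentBribery{\fst} w.r.t.\ $\ell$ for $t=1$. \autoref{th:fst-lt-fpt} gives FPT w.r.t.\ $\ell+t$ when $s$ is a constant for the priced constructive problem, hence FPT w.r.t.\ $\ell+s$ when $t$ is a constant for the priced destructive problem. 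And \autoref{thm:fst-c-agent-sizeAplus} gives FPT w.r.t.\ $|A^+|$, which transfers to FPT w.r.t.\ $|A^-|$.

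There is essentially no technical obstacle beyond being careful that the reduction is symmetric (it sends $\varphi$ to $-\varphi$, and $-(-\varphi)=\varphi$, so it works in both directions) and noting that the parameter-preserving nature of the reduction also respects the parameter swap $s\leftrightarrow t$. The only minor point worth stating explicitly in the proof is that the technical condition $s+t\le n+2$ in the definition of the consent rule is symmetric in $s$ and $t$, so the image instance is a valid consent-rule instance.
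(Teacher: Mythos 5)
Your proposal is correct and is exactly the paper's route: the paper derives this corollary directly from \autoref{le:1} by observing that flipping all qualifications swaps constructive and destructive goals together with the roles of $s$ and $t$, and then transfers \autoref{thm:fst-c-agent-t=1-s}, \autoref{thm:fst-c-agent-l}, \autoref{th:fst-lt-fpt}, and \autoref{thm:fst-c-agent-sizeAplus}. Your added checks (that whole-row flips and agent prices are preserved, and that $s+t\le n+2$ is symmetric) are details the paper leaves implicit but are exactly the right things to verify.
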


Turning to constructive+destructive and exact bribery, recall that 
\citet[Theorem
2]{DBLP:journals/aamas/ErdelyiRY20} proved
that \const \AgentBribery{\fst} is NP-complete for all $s\geq 1$ and $t\geq 2$
by a reduction from \textsc{Vertex Cover} in which they set $A^+=A$. Applying 
\autoref{le:1}, this 
implies that \exact/\mixed \AgentBribery{\fst} is NP-hard as soon as 
either $s>1$ or $t>1$. Observe that in the only remaining case, i.e., $s=t=1$, 
the 
social qualification of an agent only depends on its opinion about itself. 
Thus, both  \exact/\mixed \AgentBribery{\fst} are 
linear-time solvable in this case:
\begin{observation}
	\label{ob:agent-mixed-fst}
	\exact/\mixed \AgentBribery{\fst} is linear-time solvable for~$s=t=1$. For 
	all other values of~$s$ and $t$, this problem is NP-complete.
\end{observation}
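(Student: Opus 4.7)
The plan is to handle the two claims of the observation separately, starting with the tractable case.

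For $s=t=1$, I would first observe that the consent rule collapses to a purely local criterion: every agent $a$ with $\varphi(a,a)=1$ satisfies $|Q^+(a)|\geq 1$ automatically via its self-qualification, and symmetrically every $a$ with $\varphi(a,a)=-1$ satisfies $|Q^-(a)|\geq 1$. Thus $\fst(A,\varphi)=\{a\in A\colon \varphi(a,a)=1\}$, so in any bribery it is useless to touch any agent other than the ``targets'' themselves. For \mixed \AgentBribery{\fst}, the set of necessary bribes is therefore exactly $\{a\in A^+\colon \varphi(a,a)=-1\}\cup\{a\in A^-\colon \varphi(a,a)=1\}$; for \exact \AgentBribery{\fst}, one replaces $A^-$ by $A\setminus A^+$. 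Counting these agents and comparing the result with $\ell$ can clearly be done in linear time.

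For the hardness part, NP-membership is immediate since a bribery can be verified by a single evaluation of $\fst$. For NP-hardness, I would reuse Theorem~2 of \citet{DBLP:journals/aamas/ErdelyiRY20}, which reduces \textsc{Vertex Cover} to \const \AgentBribery{\fst} for every $s\geq 1$ and $t\geq 2$ through an instance in which the goal set satisfies $A^+=A$. With $A^+=A$, the constructive requirement $A^+\subseteq f(A,\varphi')$ trivially becomes $A^+=f(A,\varphi')$ (since $f(A,\varphi')\subseteq A$) and also matches the \mixed requirement with $A^-=\emptyset$. Hence the very same reduction establishes NP-hardness of \exact and \mixed \AgentBribery{\fst} whenever $s\geq 1$ and $t\geq 2$. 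Applying \autoref{le:1} (which flips all qualifications and swaps the roles of $s$ and $t$) extends the hardness symmetrically to $s\geq 2$ and $t\geq 1$, so that together every pair $(s,t)\neq(1,1)$ is covered.

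The only delicate point I anticipate is verifying that Erdélyi et al.'s instance, reinterpreted as an \exact or \mixed instance, preserves the original yes/no behaviour. This is not really an obstacle but a sanity check: because $A^+=A$ in their reduction, the image $f(A,\varphi')$ is pinned down completely and no ``side effects'' can spoil the equivalence in the exact setting, and $A^-=\emptyset$ makes the destructive conjunct vacuous in the combined setting. Beyond this brief check, the argument is a direct chain of implications from the already-established results.
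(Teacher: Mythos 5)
Your proposal is correct and follows essentially the same route as the paper: for $s=t=1$ the rule degenerates to each agent's self-opinion, so only the ``wrong'' target agents need to be bribed, and for all other $(s,t)$ one reuses the \textsc{Vertex Cover} reduction of \citet{DBLP:journals/aamas/ErdelyiRY20} with $A^+=A$ (where constructive, exact, and combined bribery coincide) together with \autoref{le:1} to cover the symmetric cases. The only difference is that you spell out the linear-time counting argument and the sanity check on the reinterpreted reduction more explicitly than the paper does.
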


Moreover, it is also possible to extend some of our results to
constructive+destructive bribery by a slight adaption of Theorem
\ref{thm:fst-c-agent-t=1-s} and Theorem \ref{thm:fst-c-agent-sizeAplus}:
\begin{corollary}
\label{cor:fst-c+d}
	\mixed \pAgentBribery \fst is FPT with respect to~$\left| A^+ \right|+\left| A^- \right|$. \mixed \AgentBribery \fst parameterized by~$\ell+t$ is W[1]-hard even if~$s=1$ and also W[1]-hard with respect to~$\ell+s$ even if $t=1$.
\end{corollary}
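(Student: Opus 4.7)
The plan is to handle the two assertions separately and in both cases leverage
what has already been established.

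\textbf{Hardness part.} I would observe that \mixed \AgentBribery{\fst}
generalizes both \const \AgentBribery{\fst} (by taking $A^-=\emptyset$) and
\dest \AgentBribery{\fst} (by taking $A^+=\emptyset$): given an instance of
either of these restricted problems, the identical instance with the other
target set empty is a valid \mixed instance with the same parameter values for
$\ell$, $s$, and $t$, and a bribery is successful there iff it is successful in
the original problem. \autoref{thm:fst-c-agent-t=1-s} established that \const
\AgentBribery{\fst} is W[1]-hard with respect to $s+\ell$ even if $t=1$, which
immediately transfers to \mixed \AgentBribery{\fst} and gives the second
hardness claim. For the first claim, I would apply \autoref{le:1} to
\autoref{thm:fst-c-agent-t=1-s} to obtain W[1]-hardness of \dest
\AgentBribery{\fst} with respect to $t+\ell$ even if $s=1$, and then transfer
this to \mixed \AgentBribery{\fst} in the same way.

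\textbf{FPT part.} I would extend the ILP-based approach of
\autoref{thm:fst-c-agent-sizeAplus} to accommodate the destructive side.
First, guess two subsets $\widetilde{A}^+\subseteq A^+$ and
$\widetilde{A}^-\subseteq A^-$ of agents to bribe; there are
$2^{|A^+|+|A^-|}$ such guesses. For a fixed guess, bribe every $a\in
\widetilde{A}^+$ to self-qualify, qualify all agents in $A^+$, and disqualify
all agents in $A^-$; and bribe every $a\in \widetilde{A}^-$ analogously except
that $a$ disqualifies itself. By the same self-vote argument used in
\autoref{thm:fst-c-agent-sizeAplus} (which relies on $s+t\leq n+2$), these
self-votes are without loss of generality. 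Update $\varphi$ and subtract the
corresponding bribery prices from $\ell$.

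Next introduce binary variables $x_a\in\{0,1\}$ for each $a\in A\setminus
(A^+\cup A^-)$, encoding a bribery of $a$ in which $a$ qualifies everything in
$A^+$ and disqualifies everything in $A^-$ (the votes of $a$ about agents
outside $A^+\cup A^-$ are irrelevant for the outcome). For every $a\in A^+\cup
A^-$ the quantity $|Q^+_{\varphi'}(a)|$ (respectively $|Q^-_{\varphi'}(a)|$)
after bribery decomposes into a constant part fully determined by
$\widetilde{A}^+$ and $\widetilde{A}^-$, plus a linear function
$\sum_{a'\in A\setminus(A^+\cup A^-):\,\varphi(a',a)=-1} x_{a'}$
(respectively $\sum_{a'\in A\setminus(A^+\cup A^-):\,\varphi(a',a)=1} x_{a'}$).
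According to whether $a\in A^+$ or $a\in A^-$ and whether
$\varphi'(a,a)=1$ or $-1$, write down the single linear inequality expressing
that $a$ must, respectively must not, be socially qualified. This yields a
binary ILP with at most $|A^+|+|A^-|$ nontrivial constraints, binary
coefficient matrix, and objective $\min\sum_{a}\rho(a)\,x_a$, which can be
solved in $2^{O((|A^+|+|A^-|)^2)}$ time by~\citet{DBLP:conf/soda/EisenbrandW18}.
Accept iff some guess yields a feasible ILP whose optimum is at most the
updated $\ell$.

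The main delicate point is justifying that the bribery strategy imposed on each
bribed agent (``qualify all of $A^+$, disqualify all of $A^-$, and self-vote
as above'') is without loss of generality; this rests on the observation that
only the votes concerning agents in $A^+\cup A^-$ influence the outcome, plus
the self-vote argument transplanted from
\autoref{thm:fst-c-agent-sizeAplus}. Everything else amounts to counting and
plugging the resulting ILP into the Eisenbrand--Weismantel algorithm.
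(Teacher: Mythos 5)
Your proposal is correct and follows essentially the same route as the paper: the hardness claims are obtained by viewing \const and \dest bribery as special cases of \mixed bribery and invoking \autoref{thm:fst-c-agent-t=1-s} directly and via \autoref{le:1} (i.e., with the roles of $s$ and $t$ switched), and the FPT result is obtained by guessing $\widetilde{A}^+\subseteq A^+$ and $\widetilde{A}^-\subseteq A^-$, fixing the bribed agents' votes on $A^+\cup A^-$, and reducing the residual problem over $A\setminus(A^+\cup A^-)$ to a binary ILP with $|A^+\cup A^-|$ covering constraints solved via \citet{DBLP:conf/soda/EisenbrandW18}. The only difference is that you spell out the ILP constraints and the without-loss-of-generality argument in more detail than the paper, which simply states that the reduction is analogous to \autoref{thm:fst-c-agent-sizeAplus}.
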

 \begin{proof}
 	To prove the first part, we can use a similar approach as in the proof of Theorem~\ref{thm:fst-c-agent-sizeAplus}.
 	We first guess the subset $\widetilde{A}^+\subseteq A^+$ of agents from $A^+$ and the subset $\widetilde{A}^-\subseteq A^-$ of agents from $A^-$ which we want to bribe.
 	We bribe all agents in $\widetilde{A}^+$ to qualify all agents in $A^+$ (including themselves) and disqualify all agents in $A^-$.
 	Similarly, we bribe all agents in $\widetilde{A}^-$ to qualify all agents in $A^+$ and disqualify all agents in $A^-$ (including themselves).
 	The remaining problem is how to bribe agents in $A \setminus (A^+ \cup A^-)$, which can be reduced to an ILP with $|A^+ \cup A^-|+1$ constraints analogously to the ILP in the proof of Theorem~\ref{thm:fst-c-agent-sizeAplus}.

 	The second part holds by using Theorem \ref{thm:fst-c-agent-t=1-s} as well as
 	by using Theorem \ref{thm:fst-c-agent-t=1-s} but with switched roles of~$s$ 
 	and~$t$.
 \end{proof}
Aiming for positive results, parameterizing constructive+destructive bribery by
just one of $|A^+|$ and $|A^-|$ is not enough, as the problem is even NP-hard
for $s=2$ and $t=1$ and $A^+=\emptyset$ (and thus also for $s=1$, $t=2$ and
$A^-=\emptyset$), which follows from the reduction from
\citet{DBLP:journals/aamas/ErdelyiRY20} mentioned at the beginning of this
section.

Finally, we consider the exact bribery. Here, the two W[1]-hardness results from
Corollary \ref{cor:fst-c+d}, which follow from Theorem
\ref{thm:fst-c-agent-t=1-s},
still hold, as it is possible to precisely specify the desired
outcome of the group identification problem in the reduction in Theorem
\ref{thm:fst-c-agent-t=1-s}. While parameterizing the problem by~$|A^+|+|A^-|$
is
not meaningful in this context, as discussed above, one of $|A^+|$ and $|A^-|$
combined with $s$ and $t$ is not enough to achieve any positive
results:
\begin{corollary}
	\exact \AgentBribery \fst parameterized by~$\ell+t$ is W[1]-hard even
	if~$s=1$ and also W[1]-hard with respect to~$\ell+s$ even if $t=1$.
\end{corollary}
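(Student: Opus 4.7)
The plan is to verify that the reduction used in \autoref{thm:fst-c-agent-t=1-s} already establishes the claimed W[1]-hardness for the exact variant, with essentially no modification, and then to invoke \autoref{le:1} for the symmetric statement. Recall the construction: vertex-agents $a_v$, edge-agents $a_{u,v}$, and dummy agents $\tilde{a}_{u,v}$, with $A^+ = \{a_{u,v}, \tilde{a}_{u,v} : \{u,v\} \in E\}$, $s = k+2$, $t = 1$, and $\ell = k$.

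I would first check that the forward direction of \autoref{thm:fst-c-agent-t=1-s} actually certifies the exact goal. Given an independent set $V'$ of size $k$, bribing each $a_v$ with $v \in V'$ to qualify everyone already yields $A^+ \subseteq f^{(s,t)}(A,\varphi')$ by the original proof, so what remains to check is that every vertex-agent is socially disqualified after the bribery. An unbribed vertex-agent still disqualifies itself, hence is socially disqualified under $t = 1$. A bribed vertex-agent $a_v$ with $v \in V'$ qualifies itself after the bribery but is qualified only by the $k$ bribed agents from $V'$ (including itself), and $k < s = k+2$; thus $a_v$ is also socially disqualified. Consequently, the socially qualified set in the bribed profile equals exactly~$A^+$.

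For the reverse direction, any successful exact bribery is in particular a successful constructive bribery, so the argument from \autoref{thm:fst-c-agent-t=1-s} that extracts an independent set of size $k$ applies verbatim. This establishes that \exact \AgentBribery \fst is W[1]-hard with respect to $s + \ell$ even when $t = 1$. For the symmetric statement with respect to $t + \ell$ at $s = 1$, I would apply \autoref{le:1}: an instance $(A,\varphi,A^+,\ell)$ of \exact \AgentBribery{$f^{(s,t)}$} is equivalent to the instance $(A,-\varphi,A\setminus A^+,\ell)$ of \exact \AgentBribery{$f^{(t,s)}$}, since $f^{(s,t)}(A,\varphi') = A^+$ iff $f^{(t,s)}(A,-\varphi') = A \setminus A^+$, and flipping the entire profile induces a bijection on qualification profiles that preserves the agent bribery cost (a bribed agent is free to choose any outgoing qualifications). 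Feeding the hardness instance above through this transformation gives the second claim.

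The main point that requires care is verifying that bribed vertex-agents remain socially disqualified in the bribed profile, which is precisely what makes the choice $s = k+2$ (rather than $k+1$) in the original reduction still decisive in the exact setting; everything else in the argument carries over as routine bookkeeping.
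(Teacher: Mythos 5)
Your proposal is correct and follows the same route as the paper, which justifies this corollary in one sentence by noting that the reduction of \autoref{thm:fst-c-agent-t=1-s} lets one precisely specify the desired outcome and then appeals to \autoref{le:1} for the symmetric case. You usefully make explicit the one detail the paper leaves implicit, namely that the bribed vertex-agents receive only $k < s = k+2$ qualifications and hence stay socially disqualified, so the qualified set is exactly $A^+$.
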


\section{Conclusion}\label{se:conc} 
In this paper, we extended the research on bribery in group identification by
considering priced bribery, a new model for bribery cost (which was independently
introduced by \citet{DBLP:journals/aamas/ErdelyiRY20}), and two new bribery goals.
Moreover, we described how it is possible to use bribery as a method to
calculate the margin of victory or distance from winning of agents in a group
identification problem.
We showed that for all considered rules both quantities can be computed
efficiently if the number of agents whose social qualification we are
interested in is small.
Moreover, we identified some further cases where the general problem is
polynomial-time solvable. Namely, for both the liberal-start-respecting rule
and the consensus-start-respecting rule, we observed that every bribery
involving destructive constraints splits the qualification graph into
two parts.
As in agent bribery it is easy to make multiple agents socially qualified at
the same time, this observation gives rise to a polynomial-time algorithm for
the most general variant of priced constructive+destructive bribery for agent
bribery.
For link bribery, it is possible to use this observation to construct a
polynomial-time algorithm for priced destructive bribery.
However, the corresponding question for constructive bribery is NP-hard.

For the consent rule, link bribery turns out to be solvable in a straightforward way.
In contrast to this, for agent bribery, finding an optimal bribery corresponds to finding a set of agents fulfilling certain covering constraints. 
This task makes the problem para-NP-hard with respect to $s+t$ \citep{DBLP:journals/aamas/ErdelyiRY20} and W[1]-hard with respect to $\ell$ even for $t=1$ or $s=1$.
Recall that $\ell$ is probably the most canonical parameter---the budget of the 
bribery which is either the number or the total costs of the bribery actions.
We proved that in the constructive setting, the rule parameter $t$ is slightly more powerful than the rule parameter $s$ in the sense that the problem is still hard parameterized by $s+\ell$ even if $t=1$, while becoming fixed-parameter tractable parameterized by $t+\ell$ for constant $s$.
Thus, we can see that the complexity implications of the two rule-specific
parameters $s$ and
$t$ are asymmetric in this case.

Finally, we would like to stimulate further research in this area with some open problems arising from our work.
First, we focused on the design of new FPT-algorithms for which we used
techniques ranging from rather basic ones (such as branching tree analysis) to
the use of ILPs.
A natural open question striking from this is whether some combination of
parameters for which we construct an FPT-algorithm can be improved to yield a
polynomial-sized kernel.
Another important direction is to identify new parameters for group
identification (mainly for consent rules).
A possible example of which could be the (edge/vertex) distance  of a
symmetrisation of the qualification graph to a cluster graph.
Yet another natural example is the Kendall-tau distance to a master (or
central) profile, which may arise in the cases when it is possible to ``test''
someones qualification. Moreover, it is also natural to consider the maximum
number $\Delta$ of agents that some agent qualifies as a parameter. In fact,
recently \citet{BBKL21a} showed that \const \AgentBribery{\fst} is
fixed-parameter tractable with respect to $\Delta+s$ for $t=1$ and
fixed-parameter tractable with respect to the combined parameter
$s+t+\ell+\Delta$, which complements our W[1]-hardness result for
this problem for the combined parameter $s+t+\ell$ from
\autoref{thm:fst-c-agent-t=1-s}.

\section*{Acknowledgments}
NB was supported by the DFG project MaMu  (NI 369/19).
DK is partly supported by the OP VVV MEYS funded project
CZ.02.1.01/0.0/0.0/16\_019/0000765 ``Research Center for Informatics''; part of
the work was done while DK was affiliated with TU Berlin and supported by
project MaMu (NI 369/19).
Part of the work was done while JL was affiliated with TU Berlin and supported by the DFG project AFFA (BR 5207/1 and NI 369/15).
This work was started at the research retreat of the TU Berlin Algorithms and Computational Complexity group held in September 2019.

\clearpage

\end{document}